\title[Tighter bounds for query answering with guarded TGDs]{Tighter bounds for query answering\texorpdfstring{\\}{} with guarded TGDs}
\author{Antoine Amarilli\lmcsorcid{0000-0002-7977-4441}}[a]
\address{Univ.\ Lille, Inria Lille, CNRS, Centrale Lille, UMR
9189 CRIStAL, F-59000 Lille, FR;\newline LTCI, T\'el\'ecom Paris, Institut Polytechnique de Paris, FR}
\email{antoine.a.amarilli@inria.fr}
\author{Michael Benedikt\lmcsorcid{0000-0003-2964-0880}}[b]
\address{Department of Computer Science, Oxford University, Parks Rd, Oxford OX1 3QD, UK}
\email{michael.benedikt@cs.ox.ac.uk}
\newcommand{\body}{\beta}
\newcommand{\head}{\eta}
\newcommand{\owqa}{\kw{OWQA}}
\newcommand{\datalogpm}{\kw{Datalog}^\pm}
\newcommand{\inst}{I}
\newcommand{\lift}{\kw{Lin}}
\newcommand{\linearize}{\kw{Linearize}}
\newcommand{\delinearize}{\kw{Delinearize}}
\newcommand{\ptime}{\kw{PTIME}}
\newcommand{\exptime}{\kw{EXPTIME}}
\newcommand{\expspace}{\kw{EXPSPACE}}
\newcommand{\adom}{\kw{Adom}}
\newcommand{\dom}{\adom}
\newcommand{\twoexp}{\kw{2EXPTIME}}
\newcommand{\kw}[1]{{\mathsf{#1}}\xspace}
\newcommand{\accessible}{\kw{accessible}}
\newcommand{\pspace}{\kw{PSPACE}}
\newcommand{\myeat}[1]{}
\newcommand{\card}[1]{\left|#1\right|}
\newcommand{\NN}{\mathbb{N}}
\newcommand{\np}{\kw{NP}}
\newcommand{\wclo}[1]{\widehat{#1}}
\newcommand{\sign}{{\mathcal{S}}}
\newcommand{\sidesign}{\sign'}
\newcommand{\dep}{\delta}
\newcommand{\gdep}{\gamma}
\newcommand{\trig}{\tau} %
\newcommand{\triv}{\mathrm{triv}}
\thanks{We are very grateful to the anonymous referees of LMCS for several rounds of detailed feedback and corrections.}	
\newcommand{\myparagraph}[1]{\paragraph*{#1.}}
\newcommand{\myproof}{Proof }
\begin{document}
\begin{abstract}
We consider the complexity of the open-world query answering problem, where we
  wish to determine certain answers to conjunctive queries over
incomplete datasets specified by an initial set of facts
and a set of guarded TGDs.
This problem has been well-studied in the literature and is decidable but with a
  high complexity, namely, it is $\twoexp$ complete. Further, the complexity
shrinks by one exponential when the arity is fixed.

  We show in this paper how we can obtain better complexity bounds
when considering separately the arity of the guard atom and that of the additional
atoms, called the \emph{side signature}. Our results make use of the technique
of linearizing guarded TGDs, introduced in~\cite{gmp}.
Specifically, we present a variant of the linearization process, making use
of a restricted version of the chase that we recently introduced~\cite{resultlimitedj}. Our results imply that open-world query answering
  with guarded TGDs can
  be solved in $\exptime$ with arbitrary-arity guard relations if we simply
  bound the arity of the side signature; and that the complexity drops to $\np$
  if we fix the side signature and bound the width of the
  dependencies.
\end{abstract}

\maketitle

\section{Introduction} \label{sec:intro}

The \emph{open-world query answering} problem (OWQA) is concerned with evaluating a
query $Q$ on an incomplete dataset $\inst$. We specify $\inst$ as a set of
possible completions of an initial set of facts $I_0$, constrained by a set of
integrity constraints $\Sigma$: formally, we evaluate~$Q$ over all completions
of~$I_0$ satisfying~$\Sigma$. Typically $Q$ is a conjunctive query (CQ), equivalent to a basic
SQL query; the rules $\Sigma$ limit the possible completions to consider, and
are typically expressed in restricted logical formalisms. Research has therefore
sought to explore the tradeoff between the expressiveness of the constraint
language and the complexity of the open-world query answering problem, which we simply
call \emph{query answering} for brevity.

In the last decades, 
one notable focus of research on query answering is the $\datalogpm$ family of
constraints: in this work we study the related
formalism of \emph{tuple-generating dependencies} (TGDs).
A TGD is a universally-quantified implication, where the left-hand side~--
the \emph{body}~-- is 
a conjunction of atoms, and the right-hand side of the implication~-- the \emph{head}~-- is a CQ.
Up to rewriting TGDs, we can ensure that their right-hand side is always a
single atom: these are called \emph{single-headed} TGDs and we focus on such
TGDs in this paper, though we refer back to the case of multi-headed TGDs at the end of Section~\ref{sec:results}.
A TGD is \emph{full} if its head contains no existential quantifiers. Such TGDs are sometimes called \emph{Datalog rules}.

TGDs are a common constraint language to express that some patterns in the data
must imply the existence of other patterns; it subsumes, e.g., \emph{inclusion
dependencies} (IDs) which are common in relational databases: inclusion dependencies are
TGDs where the body and head contain a single atom with no repeated variables.
The formalism of guarded TGDs (GTGDs) restricts TGDs by enforcing that there is one atom in the
body containing all the variables used in the body: such an atom is called a
\emph{guard atom}. This ensures that GTGDs are expressible in the guarded
fragment of first-order logic. Query answering is known to be decidable in
$\twoexp$ for this class, and better bounds are known in some special cases: for
example, when the arity of the signature is bounded, the complexity drops to
$\exptime$. Similarly, it is known that 
the complexity of query answering drops to
$\pspace$
for TGDs that are
\emph{linear}~\cite{johnsonklug,lukasiewicz2015classical}, i.e., whose body
consists of a single atom. Note that linear TGDs are a special case of GTGDs,
and IDs are themselves a special case of linear TGDs.
See Table~\ref{fig:summary} in Section~\ref{sec:results} for a more detailed comparison.

Our goal in this paper is to show finer bounds on query answering.
Our approach to do so is to distinguish between the relations allowed
for the guard atoms in TGDs and the signature used for the remaining atoms.
Specifically, 
we allow each GTGD to have an unrestricted guard atom, but restrict the other
atoms, e.g., by bounding their maximal arity, or fixing the \emph{side signature} from
which they are taken.
Roughly speaking,
our results show that the first limitation
suffices to bring the complexity of query answering
down to $\exptime$.
The second limitation
further lowers the complexity to~$\np$ if we also impose another restriction,
namely, bounding the so-called
\emph{width} of the GTGDs. The width is the number of \emph{exported variables}
of a GTGD, where a variable is
exported when it appears in both the head and the body.
Formally, the main complexity results that we show in this paper are the
following,
which apply to GTGDs that \emph{obey} a side signature, i.e., that only use the
relations of the side signature in rule bodies except that there may be one
guard atom on a non-side-signature relation (the formal definition will be
given as Definition~\ref{def:side}):

\begin{restatable}{result}{resexptime}
  \label{res:exptime}
  The $\owqa$ problem with GTGDs that obey a side signature is in $\exptime$ if we assume
  that the arity of the side signature is bounded by a constant.
\end{restatable}

\begin{restatable}{result}{resnp}
  \label{res:np}
  The $\owqa$ problem with GTGDs that obey a side signature is in $\np$ 
  if we assume that the 
  side signature is fixed and that the
  width of the GTGDs is bounded by a constant.
\end{restatable}

Note how the first result provides a unifying language to recapture the $\exptime$ membership of query answering
with guarded TGDs of bounded arity, and the lower complexity of linear TGDs
(though we do not achieve a $\pspace$ but $\exptime$ bound).
That result is also tight in the sense that there are GTGDs with a fixed side
signature of arity $1$
for which 
$\owqa$ is $\exptime$-complete \cite{bbbicdt}.

The finer-grained bounds presented in this paper were first announced
in the context of our earlier work on \emph{access
patterns}~\cite{resultlimitedpods}, where we want to evaluate a query by
accessing the underlying relations in a limited way.
Research on access patterns typically relies on encoding problems with GTGDs that capture
the semantics of access methods.
For each access method,  the resulting TGDs will have a body
consisting of a single high-arity guard atom, and multiple unary
atoms over a particular predicate $\accessible$, intuitively denoting that some
instance element can be retrieved using the access patterns.
Specifically, the conference version of our work \cite{resultlimitedpods}
claimed a number of bounds on query answering with access patterns, using
side-signature-based techniques. These
results were only sketched in the conference version (for which unpublished
appendixes claim a weaker form of our main result, Theorem~\ref{thm:idreduce}).
These results were  omitted from the later
journal version~\cite{resultlimitedj}, which only contain bounds shown in the
restricted setting of access patterns.
Similar techniques are also used in a later
conference paper \cite{privacyijcai} about data disclosure, by the second author
and other authors.

The goal of the present work is to fill this gap, and give a self-contained
presentation of our complexity results on query answering for GTGDs with a side
signature. This makes it possible to understand these TGD complexity results independently
from the application to access methods that motivated it.
 Indeed, we believe these results to be interesting in their own
right, and hope that they can be useful to show complexity bounds on query
answering in different contexts.

\myparagraph{Revisiting linearization}
Our key proof technique is that of \emph{linearization}, introduced
in~\cite{gmp}. The idea is to reduce to the setting of linear TGDs, recalling
that 
query answering for this class is known to be quite
well-behaved~\cite{johnsonklug}:
it is in $\pspace$ in general,
and in $\np$  when the width is bounded.
Restricting the side signature intuitively makes GTGDs ``translatable'' to
linear TGDs in an auxiliary signature, where a body atom in the auxiliary
signature represents a guard atom in the original signature along with a specific
choice of additional atoms from the restricted side signature.

While this basic idea is quite natural, the linearization process turns out to
be rather involved. In particular, before we linearize we need to pre-process the TGDs so they behave more like
linear TGDs. The key part of this is a \emph{saturation
procedure} which generates derived GTGDs from our original set. Once this saturation is applied, the process of reasoning
with the GTGDs becomes very similar to reasoning with linear TGDs.
The saturation process is similar to ones that have been investigated in Description Logics~\cite{motikthesis} and guarded TGDs~\cite{gsatvldbjournal} for determining whether a ground fact is entailed (also known as ``atomic query answering'').
Indeed, in the process of giving our results, we will explain how linearization is a natural way to extend saturation procedures to deal with conjunctive query answering.

The main technical challenge is to use the side signature carefully to  obtain a saturation whose size is not too large. In addition to the side-signature-aware construction of the saturation, we use a number of further ideas.  To argue that the new saturation is correct,
we must make use of a specialized version of the \emph{chase procedure} \cite{maier,onet}, which augments an instance to create new entailed facts.  The idea is that the saturation procedure iteratively combines multiple TGDs to get new full TGDs, and in doing so it mimics the way facts would propagate when performing a chase.  But our restricted saturation only mimics a restricted flow of facts, and we must argue that this suffices. We require that it is sufficient  to simulate a restricted kind of chase called the \emph{one-pass chase}. The one-pass chase is introduced as a tool for reasoning about completeness of saturation procedures in ~\cite{gsatvldb}. But here,  again, we will need to customize it to the side signature-aware context. See Section \ref{sec:onepass} for the formal definitions.

Our linearization algorithm will be fairly straightforward once we close under the saturation process. But to argue that it is correct we will need yet another variation of the chase procedure. The chase with linear TGDs has a special form where one does not perform propagation of facts up and down the tree. We need to argue that GTGDs also support a variation of the chase where propagation is very restricted, the \emph{shortcut chase}. See Section \ref{sec:shortcut} for the precise definition.

\myparagraph{Paper structure}
We first give preliminaries in Section~\ref{sec:prelims}, reviewing in
particular the notion of 
\emph{semi-width} from~\cite{resultlimitedj}.
We then define the notion of side signature in
Section~\ref{sec:results} and state our main result (Theorem~\ref{thm:idreduce}), which describes the
complexity of translating GTGDs to linear TGDs of bounded semi-width, depending on
bounds on the side signature and the width of the GTGDs. Together with the results on
semi-width, this result directly implies Results~\ref{res:exptime} and~\ref{res:np} above. 
The rest of the paper is devoted to giving the proof of Theorem~\ref{thm:idreduce}. We start by giving some normalizations of our GTGDs (Section~\ref{sec:simplify}) and chase proofs (Section~\ref{sec:onepass})
that will be useful.
We give a proof overview of the main result in Section~\ref{sec:proofoverview}, and
the proof itself spans Sections~\ref{sec:saturation}, \ref{sec:shortcut}, and~\ref{sec:linearize}.
We conclude in Section~\ref{sec:conc}.

\section{Preliminaries} \label{sec:prelims}

\myparagraph{Data and queries}
We consider a \emph{relational signature} $\sign$ that consists of
a set of \emph{relations} with an associated
\emph{arity} (a positive integer).
The \emph{arity} of~$\sign$ is the maximal arity of a relation in~$\sign$.
The set of \emph{positions} of a relation~$R$ of~$\sign$ is the set 
$\{R[1]\ldots R[n]\}$ where
$n$ is the arity of~$R$.
An \emph{instance} of~$R$ is a
(finite or infinite) set of~$n$-tuples of values from some infinite set of
values; we also call these values the \emph{domain elements}, or simply
\emph{elements}.
Note that the domain elements will also include \emph{nulls}, which are the
fresh values created in the facts added
when performing \emph{chase steps} with non-full tuple-generating dependencies
(this will be formally defined in Section~\ref{sec:onepass}).
An \emph{instance} $I$ of~$\sign$
consists of instances
for each relation of~$\sign$.
The \emph{active domain} $\adom(I)$ of~$I$
is the set of the
domain elements that occur in tuples of~$I$.

A \emph{$\sign$-atom}, or simply \emph{atom}, is an expression of the form
$R(x_1 \ldots x_n)$, where $R$ is a relation of $\sign$ and $n$ is the arity
of~$R$ in~$\sign$.
We also call $R(x_1 \ldots x_n)$ an \emph{$R$-atom}.
We will be concerned mostly with two kinds of atoms,
depending on the nature of the terms $x_1, \ldots, x_n$: those
where all the terms are variables, and those where
all the terms are domain elements. We call the latter a \emph{ground atom} or a
\emph{fact}. For a fact $F = R(\vec a)$, we write $\adom(F)$ for the set of
elements that occur in~$F$ (i.e., those of~$\vec a$), and we also call $F$ an
\emph{$R$-fact}.
We will equivalently see instances~$I$ as a set
of \emph{facts} $R(a_1 \ldots a_n)$ for each tuple 
$(a_1 \ldots a_n)$ in the instance of each relation~$R$.
A \emph{subinstance} $I'$ of~$I$ is then an instance that contains a subset of the
facts of~$I$: we equivalently say that $I$ is a \emph{superinstance} of $I'$.

A \emph{homomorphism} from a set of atoms $A_1, \ldots, A_n$ to a set of atoms~$B_1, \ldots, B_m$ is a substitution $\sigma$ of the variables occurring in the~$A_i$ 
such that, for each atom~$A_i$, writing
$\sigma(A_i)$ the result of substituting its elements according to~$\sigma$,
then the result is one of the~$B_1, \ldots, B_m$.

We will study \emph{Boolean conjunctive queries} (CQs), which are logical expressions of the form
$\exists x_1 \ldots x_k  ~ (A_1 \wedge \cdots \wedge A_m)$, where
the~$A_i$ are atoms over $x_1 \ldots x_k$.
Note that we only focus on \emph{Boolean CQs}, which have no free variables.
For this reason, throughout the paper, \emph{by default when we say
a CQ we mean a Boolean one}. Also note that we do not allow
constants in CQs, but these can be encoded, e.g., by expanding the signature
with additional unary predicates to distinguish the constants in instances.
We further discuss the impact of constants in Section~\ref{sec:results}.
A \emph{match} of a CQ $Q$ 
in an instance $I$ is a 
homomorphism from~$Q$ to~$I$, i.e., a mapping~$h$ from the variables 
of~$Q$ to~$\adom(I)$ which 
ensures that,
for every atom $R(t_1 \ldots t_n)$ in~$Q$, we have that $R(h(t_1) \ldots h(t_n))$ is a fact of~$I$.
We say that $Q$ \emph{holds} in~$I$ if there is a match of~$Q$ in~$I$.

\myparagraph{Integrity constraints}
We study \emph{integrity constraints} which are defined in (fragments of) first-order logic
(FO), disallowing constants.
In our definition of FO, we only consider
FO formulas and FO fragments where all quantified variables 
that appear in a formula must appear in some relation of the formula.
This ensures that the satisfaction of a constraint on an instance $I$ only depends
on the active domain, i.e., on the values occurring in facts of~$I$; in other
words, we follow the active-domain semantics.
For an FO formula $\rho$ and an instance $I$, we say that $I$ \emph{satisfies}
$\rho$, written $I \models \rho$, if $\rho$ holds on~$I$ with the usual
semantics of FO; we omit the corresponding definitions (see, e.g., \cite{libkin1995elements}). Otherwise,
we say that $I$ \emph{violates} $\rho$, written $I \not\models \rho$.
For $\Sigma$ a set of FO formulas, we write $I \models \Sigma$ to say that $I$
satisfies all formulas of~$\Sigma$.

We  focus on 
a specific fragment of FO,
called
\emph{tuple-generating dependencies} (TGDs),
which we now review.
A \emph{tuple-generating dependency} (TGD) is an FO sentence~$\gdep$ of the form:
$\forall \vec x ~ (\body(\vec x) \rightarrow \exists \vec y ~ A(\vec x, \vec
y))$
where $\body$ is a conjunction of atoms called the \emph{body} and where all
variables from $\vec x$ appear, and $A$ is an atom
called the \emph{head}.
We require that all variables of $\vec x$ occur in $\body$ and that all
variables of $\vec y$ occur in~$A(\vec x, \vec y)$, but $A(\vec x, \vec y)$ may
only use a subset of the variables of~$\vec x$ (or possibly none at all).
Note that we define here 
\emph{single-headed} TGDs, whose head consists of a single atom. \emph{We focus on
single-headed TGDs throughout this paper}: see for instance \cite{gottlob2020multi} for an
example of work investigating the impact of this choice, and see the end of
Section~\ref{sec:results} for a further discussion of the matter.

For $\gdep$ a TGD and $I$ an instance, using the previous notation, a substitution $\trig$ from $\body$ to~$I$ is called a \emph{trigger} of~$\gdep$ in~$I$. The trigger~$\trig$ is said to be \emph{active} if there is no mapping $\trig'$ that extends~$\trig$ and ensures that $\trig'(A(\vec x,\vec y)$ occurs in~$I$. The semantics of~$\gdep$ is that $I$ satisfies~$\gdep$ if and only if there are no active triggers of~$\gdep$ in~$I$.

For brevity, in the sequel, we will \emph{omit outermost universal quantifications in TGDs}.
The \emph{exported variables} of~$\gdep$ (also called \emph{frontier
variables}) are the variables of~$\vec x$ which
occur in the head.
A \emph{full TGD} is one with no existential quantifiers in the head.
A \emph{guarded TGD} (GTGD) is a TGD~$\gdep$ whose
body $\body$ contains some atom $A$ which contains all
variables occurring in~$\body$. We call $A$ a \emph{guard} of~$\body$, and
of~$\gdep$: note that it is not necessarily unique.
When decomposing $\body$ as $A(\vec x) \wedge \body'(\vec x)$ for one specific
choice of~$A$, we call $A$ the \emph{guard atom}.

We will also say that a fact or set of facts $S$ is \emph{guarded} by
another fact or set of facts~$S'$ if the facts of~$S$ only use values
occurring in~$S'$.

A \emph{linear TGD} is a TGD where the body consist of a single atom:
remember that we defined TGDs to be single-headed so we already know that the
head is also a single atom. Also notice that a linear TGD is always guarded. An
\emph{inclusion dependency} (ID) is a linear TGD where we further impose
that no variable is
repeated in the body and no variable is repeated in the head.

The \emph{width} of a TGD is the number
of exported variables. Remember that we do not allow constants in TGDs.
\myparagraph{Fact entailment, $\owqa$, and TGD entailment problems}
We refer throughout the paper to the standard notion of entailment in first-order logic:
\begin{defi}
We say that  a set of FO sentences $\lambda$ \emph{entails} an FO sentence
$\rho$, written $\lambda \models \rho$, if every  instance satisfying $\lambda$ also satisfies
$\rho$.
\end{defi}
In particular, a special case of entailment is \emph{entailment of a TGD $\tau$
by a set of TGDs~$\Sigma$}, which we will use in the technical proofs.
Another case, which is the main focus of this paper, is that of entailment
problems of the form:
\[
\bigwedge_{i \leq n} F_i \wedge \Sigma \models Q
\]
where $\Sigma$ is a set of TGDs,
each $F_i$ is a fact, and $Q$ is a
CQ.
This is  the problem of \emph{certain answers} or \emph{(open-world) query
answering}~\cite{fagindataex} ($\owqa$)
 under TGDs for CQs. 

\begin{defi}
 If $\inst_0$ is a finite instance with facts $F_1 \ldots F_n$,  we also
write $\inst_0, \Sigma \models Q$ to mean $\bigwedge_i F_i \wedge \Sigma \models Q$.
\end{defi}

The \emph{$\owqa$ problem} is the problem of deciding whether such
entailments hold. Formally, the input to $\owqa$ consists of a finite
instance~$\inst_0$, a set $\Sigma$ of TGDs, and a CQ~$Q$; the output is a
Boolean indicating whether $\inst_0, \Sigma \models Q$
or $\inst_0, \Sigma \not\models Q$.

One variant of $\owqa$ that we will study is \emph{fact entailment}. In
this problem, the input consists of a finite instance $\inst_0$, a set $\Sigma$
of TGDs, and a fact $F$ on the domain of~$\inst_0$. The output is a Boolean
indicating whether $\inst_0, \Sigma \models F$, i.e., the fact $F$ is contained in
every superinstance of~$\inst_0$ that satisfies~$\Sigma$.

Note that the $\owqa$ problem is closely connected to the problem of \emph{query
containment under constraints}, which has been independently studied in the
literature, and also in connection with query containment under access patterns; see, e.g., \cite{dln}.

In this paper, we study the complexity of the $\owqa$ problem in \emph{combined complexity}, i.e.,
as a function of~$|\inst_0|$, $|\Sigma|$, and $|Q|$, where the size $|\Sigma|$ is
taken following, e.g., a string representation, and likewise for~$|\inst_0|$
and~$|Q|$.

We now discuss two ways in which
$\owqa$ problems can be equivalent, while possibly changing the underlying
signature. One first notion is
\emph{entailment-equivalence}, which is defined on
sets of TGDs:

\begin{defi}
For $\sign$ a signature,
we say that two finite sets of TGDs $\Sigma$ and $\Sigma'$ are
  \emph{\mbox{$\sign$-entailment-equivalent}}
  if they are interchangeable for $\owqa$
on~$\sign$, namely:
for any instance $\inst_0$ over~$\sign$ and CQ $Q$ over~$\sign$, we have $\inst_0, \Sigma \models Q$ iff $\inst_0,
\Sigma' \models Q$.
\end{defi}

Note that $\Sigma$ and $\Sigma'$ may be on a larger signature than~$\sign$. 

\begin{rem}
Note
that entailment-equivalence is weaker than logical equivalence. Indeed, if $\Sigma$
and $\Sigma'$ are logically equivalent in the sense that $\Sigma \models
\Sigma'$ and vice-versa, then they are $\sign$-entailment-equivalent over any
signature $\sign$. However, the converse is not true: for $\phi$ the GTGD $R(x) \rightarrow S(x)$, letting
$\Sigma = \{\phi\}$ and $\Sigma' = \emptyset$, and $\sign$ be a
signature containing $R$ but
not~$S$, then $\Sigma$ and $\Sigma'$ are not logically equivalent but they are
$\sign$-entailment-equivalent.
\end{rem}

A second notion is \emph{emulation}, which is defined on pairs of a set of TGDs
and of an instance:

\begin{defi}
  For $\sign$ a signature which is a subsignature of~$\sign'$,
we say that
a set $\Sigma'$ of constraints and an instance~$\inst_0'$ on
signature $\sign'$ \emph{emulates} another set of
constraints~$\Sigma$ and instance~$\inst_0$ on signature~$\sign$ if they are equivalent for $\owqa$, i.e.,
entail the same CQs on~$\sign$. Formally, $\inst_0'$ and $\Sigma'$ \emph{emulate} $\inst_0$ and
$\Sigma$ if, for
any CQ~$Q$ posed over the signature~$\sign$, 
we have $\inst_0, \Sigma \models Q$ iff $\inst_0', \Sigma' \models
Q$.
\end{defi}

\paragraph*{Semi-width.}
Our tractability results in this paper will be shown by reducing to the case of
linear TGDs, which we call \emph{linearization}. In particular, we will rely
on $\np$ upper bounds which depend on the \emph{width} of the dependencies.
The following result was shown by Johnson and Klug~\cite{johnsonklug} in the
case of inclusion dependencies:

\begin{propC}[\cite{johnsonklug}]
For any fixed $w \geq 0$,
there is an $\np$ algorithm for $\owqa$ 
under inclusion dependencies of
width at most~$w$.
\end{propC}

This notion was slightly generalized in~\cite{resultlimitedj} to the notion of
\emph{semi-width}, which extends width by the addition of acyclic TGDs. We now
review the definition.
The \emph{basic position graph} of a set of TGDs
$\Sigma$ is the directed graph whose nodes are the positions of relations
in~$\Sigma$,  with an edge from~$R[i]$ to~$S[j]$ if and only
if there is a rule $\dep \in \Sigma$ with exported
variable $x$ occurring at position~$i$ of an $R$-atom in the body
of~$\dep$ and at position~$j$ of an $S$-atom in the head of~$\dep$.
Note that our basic position graph is different from the notion of position
graph used to ensure decidability of termination  for general TGDs \cite{onet}.  We apply
the basic position graph only to complexity considerations concerning linear TGDs.
In particular, existentially quantified variables do not contribute to the basic
position graph.

We say that a collection of TGDs $\Sigma$ has \emph{semi-width} bounded
by $w$ if $\Sigma$ can be partitioned into~$\Sigma_1 \cup \Sigma_2$
where $\Sigma_1$ has width bounded by $w$ and where
the basic position graph of~$\Sigma_2$ is acyclic.

We can then show that $\owqa$ is in $\np$ for linear TGDs of bounded semi-width.
The result was shown in~\cite[Appendix~C]{resultlimitedj} for inclusion
dependencies \cite[Proposition~6.5]{resultlimitedj}, and we must slightly modify the proof to apply more generally to
linear TGDs:

\begin{restatable}{prprestate}{semiwidthclassic}
  \label{prop:semiwidthclassic-general} 
  For fixed  $w$,
  there is an $\np$ algorithm for OWQA
  under linear TGDs
  of semi-width at most~$w$.
\end{restatable}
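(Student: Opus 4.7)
The plan is to adapt the proof of Proposition~6.5 of~\cite{resultlimitedj}, which established this bound for inclusion dependencies (IDs), to the more general case of linear TGDs. The overall structure of the argument is unchanged: nondeterministically guess a polynomial-size proof witness~-- a fragment of the chase that contains a match of~$Q$~-- and verify in polynomial time that each chase step is correct and that $Q$ matches.

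I would proceed in three steps. \textbf{Step 1 (chase characterization).} We have $\inst_0, \Sigma \models Q$ iff there is a match of~$Q$ in the chase of~$\inst_0$ under~$\Sigma$. For linear TGDs the chase has a particularly simple shape: it is a forest rooted at the facts of~$\inst_0$, in which each non-root fact is derived from exactly one parent fact via the application of a single linear TGD. \textbf{Step 2 (small-witness property).} I would argue that whenever such a match of~$Q$ exists, there is a sub-forest of the chase of size polynomial in $|\inst_0| + |\Sigma| + |Q|$ that still contains the match. The bound comes from two observations. First, since $\Sigma_1$ has width $\leq w$, each parent-child edge of the chase forest created by a $\Sigma_1$-rule shares at most $w$ values; hence, as in the Johnson--Klug argument, for each atom of~$Q$ the relevant chase ancestry has bounded branching and its length is bounded by a function of $|Q|$ and~$w$. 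Second, since the basic position graph of~$\Sigma_2$ is acyclic, any chain of applications of $\Sigma_2$-rules that propagate an exported variable has length bounded by the longest path in that graph, which is at most~$|\Sigma|$. Composing these two bounds yields a witness of polynomial size. \textbf{Step 3 (NP verification).} We guess such a witness as a set of facts, each annotated by a linear TGD and a pointer to its parent fact, and verify in polynomial time that each derivation is valid and that $Q$ has a match in the witness.

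The main obstacle compared with the pure ID case is that linear TGDs may have repeated variables in the body and in the head, whereas IDs do not. Repeated body variables only impose additional equality constraints on the parent fact at which a rule fires; these can be checked in polynomial time when verifying each derivation in Step~3, and they do not increase the branching or depth of the witness guessed in Step~2. Repeated head variables (or variables occurring at several positions of the head) are already accounted for by the width parameter, because width counts exported variables, not positions; consequently the ancestry argument of Step~2 transfers directly. Beyond this, the argument of~\cite[Appendix~C]{resultlimitedj} carries over essentially verbatim, and the resulting algorithm runs in $\np$ for each fixed~$w$.
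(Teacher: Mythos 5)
Your proposal is correct and follows essentially the same route as the paper's proof: both adapt the Johnson--Klug small-witness argument from \cite[Appendix~C]{resultlimitedj}, bounding the depth of a tight match by a pigeonhole on (rule, equality-configuration of exported values) pairs~-- which is polynomial for fixed width~-- and composing this with the observation that acyclicity of the basic position graph of~$\Sigma_2$ limits chains of $\Sigma_2$-only propagation to length~$\card{\Sigma_2}$, before guessing and verifying the bounded-depth chase fragment in $\np$. You also correctly identify the only delicate point in passing from IDs to linear TGDs (equality patterns induced by repeated variables) and resolve it the same way the paper does, via the width bound on exported variables; the only minor imprecision is that the ancestry-length bound depends on $\card{\Sigma}$ and the arity as well as on $\card{Q}$ and~$w$, though it remains polynomial for fixed~$w$.
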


To avoid distracting from the main results of the paper, we give a
self-contained proof of this result in Appendix~\ref{apx:semiwidthproof}, most of which is identical to~\cite[Appendix~C]{resultlimitedj}.

\section{Result statements}
\label{sec:results}

Having defined the preliminaries, we are now ready to formally state the results
of this paper. We first position our work relative to existing results. Then, we
introduce the
notion of \emph{side signature} 
through which our results are phrased. We
then
give the formal statement of the results. After this we state our main technical
linearization result, and explain how our main results follow from
that result. We finally discuss the issues of multi-headed GTGDs and
constants, and explain the relationship with earlier techniques.

\myparagraph{Earlier results}
Our focus is on complexity bounds for the $\owqa$ problem with guarded TGDs.
The following bounds on this problem  represent the prior state of the art \cite{tamingjournal}:

\begin{thmC}[\cite{tamingjournal,johnsonklug}]
  Given a set of guarded TGDs $\Sigma$, an instance~$I$, and a query~$Q$, the
  $\owqa$ problem for $\Sigma$, $I$, and $Q$ is $\twoexp$-complete.

  Further, if we fix the arity of the signature~$\sign$, then the problem is
  $\exptime$-complete. Last, if we fix the guarded TGDs $\Sigma$, the problem is
  $\np$-complete.

  If the TGDs are linear, the problem is $\pspace$-complete, and $\np$ when the
  width is bounded. This was proven in \cite{johnsonklug} only for the case of Inclusion Dependencies, but the proof extends straightforwardly to linear TGDs.
\end{thmC}

Our goal in this paper is to refine the $\exptime$
and $\np$ bounds by showing that they hold for more general constraint
languages, defined via the notion of \emph{side signature}.

The current state of the art and contribution are summarized in
Table~\ref{fig:summary}.

\begin{table*}
\centering
{
\centering
\begin{tabular}{ lr }
\toprule
{\bfseries GTGD restriction} & {\bfseries $\owqa$ complexity} \\
\midrule 
General & $\twoexp$ \cite{tamingjournal} \\
Fixed arity & $\exptime$ \cite{tamingjournal} \\
{\bf Fixed side signature arity} & {\bf $\exptime$} (Result~\ref{res:exptime}) \\
Linear & $\pspace$ \cite{johnsonklug} \\
  Fixed GTGDs  & $\np$ \cite{tamingjournal} \\
   Fixed-width IDs & $\np$ \cite{johnsonklug} \\
{\bf Fixed side signature and width}  & {\bf $\np$} (Result~\ref{res:np}) \\
\bottomrule 
\end{tabular}
\caption{Complexity results, with new results in bold} \label{fig:summary}
    }
    \end{table*}

\myparagraph{Side signature}
We now introduce the notion of \emph{side signature} that is fundamental to the statement of our results. The side signature intuitively consists of relations that can be
used together with a guard in rule bodies. More precisely, we consider GTGDs where, for some
choice of guard atom, all other atoms must use relations in the side signature:
\begin{defi}
  \label{def:side}
  Let $\gdep$ be a 
  GTGD on signature $\sign$.
  Given a subsignature~$\sidesign \subseteq \sign$, we say that~$\gdep$ 
  \emph{obeys side signature $\sidesign$}
  if there is a choice of guard atom in the body
  of~$\gdep$ such that all other body atoms are relations of~$\sidesign$. A set
  of GTGDs $\Sigma$ \emph{obeys side signature $\sidesign$} if all GTGDs of
  $\Sigma$ do.
\end{defi}

\begin{defi} \label{defi:principal}
We refer to the relations in $\sign \setminus \sidesign$ as \emph{principal
  relations}, as opposed to \emph{side relations}. We similarly refer to
  \emph{principal atoms} and \emph{principal facts}, versus \emph{side atoms}
  and \emph{side facts}, depending on whether the relation involved is
  a side relation or a principal relation.
  We also say that a TGD is a \emph{principal TGD}
  if its head atom is a principal atom, and a \emph{side TGD} otherwise.
\end{defi}
In particular, a linear TGD always obeys any choice of side signature.
Note that we can easily test whether GTGDs $\Sigma$ obey side signature $\sidesign$:
consider the body of each GTGD and check that it has at most one atom which is
not in $\sidesign$ and that this atom guards the body.
The choice of side
signature is also not canonical, e.g., taking $\sidesign := \sign$ always
satisfies the definition, but this trivial choice will not be useful to achieve good complexity
bounds using our results.

We give an example of the notion of side signature:

\begin{exa}
  \label{exa:side}
  Consider the following set of GTGDs over relations $\{R, S, T, U\}$.
  \begin{align*}
    R(x, y, x, z), T(x), T(z), U(x, z) & \rightarrow \exists w ~ S(y, w) \\
    U(x, y), U(x, x) &\rightarrow U(y, y)\\
    S(x, y), U(x, y) &\rightarrow S(y, x)
  \end{align*}
  They obey the side
  signature $\{T, U\}$.
  Note that obeying a side signature does not impose constraints on head atoms,
  which can be principal relations or side relations.
  Further, when a GTGD has a body that only uses relations from a given side
  signature, then it obeys that side signature.
\end{exa}

Our results will apply to sets of GTGDs $\Sigma$ that obey a
side signature $\sidesign$ and where other conditions are respected,
in particular the arity of $\sidesign$ will have to be bounded.

\paragraph*{Main results.}
We can now restate our two main results from the introduction. We show the
following, which gives a sufficient condition for $\owqa$ with GTGDs to be
in $\exptime$:

\resexptime*

In other words, we show that, for the $\owqa$ problem to be in $\exptime$, we can
allow arbitrary-arity guarded TGDs, and we just need to bound the arity of the
side signature. In particular, our result recaptures and extends the
lower complexity of $\owqa$ for linear TGDs~\cite{johnsonklug},
except that it shows an $\exptime$ bound rather than a $\pspace$ bound.

We further show that,
once the side signature is fixed, then
to achieve $\np$ complexity, we do not need to fix the
dependencies or even the arity~-- it suffices to fix the side signature
(including its arity) and the \emph{width} of the dependencies:

\resnp*

Our result extends the $\np$ upper bound on $\owqa$ for
bounded-width IDs shown by Johnson and Klug~\cite{johnsonklug},  as well as the $\np$ bound with fixed GTGDs \cite{tamingjournal}.

\begin{exa} \label{exa:access}
  We give an example from the setting of \emph{access patterns} which was
  mentioned in the introduction, explaining how Result~\ref{res:np} generalizes results proven in~\cite{resultlimitedj}.

Fix a number $m$,
consider a signature that includes a distinguished unary relation $\accessible$,
  and consider a set of TGDs of one of the two forms:
\begin{align*}
R(x_1 \ldots x_m, \vec y) \rightarrow \exists \vec z ~ H(x_{m_1} \ldots x_{m_k}, \vec z) \\
R(x_1 \ldots x_n) \wedge \bigwedge_{i \in S} \accessible(x_i) \rightarrow \accessible(x_j)
\end{align*}
In the first form of TGDs, $m_1 \ldots m_k$ are numbers bounded by $m$. These are \emph{linear TGDs} of \emph{width at most $m$}. In isolation, query answering is known to be $\np$ for such TGDs by a variation of \cite{johnsonklug}.

  In  TGDs of the second form,  $j$ is a number in $\{1 \ldots n\}$, and $S$ is an arbitrary set of such numbers. 
They are  \emph{full TGDs}, but they are not linear. These are referred to as \emph{accessibility axioms}
in \cite{resultlimitedpods,resultlimitedj}. 

Result~\ref{res:np} implies that query answering
  for the class of TGDs of this form is in $\np$. Intuitively, this is because the side signature
  is fixed and the width of TGDs is bounded. 
  This result is claimed (in the restricted case of inclusion dependencies,
  but in the broader context of result-bounded interfaces) in
  \cite[Theorem 6.4]{resultlimitedj}.
  One motivation for side signatures is to generalize this
  result but replacing $\accessible(x)$ by an 
  arbitrary fixed side signature, giving Result~\ref{res:np}.
\end{exa}

\myparagraph{Linearization result}
All of our complexity bounds are shown by reducing $\owqa$ with GTGDs to
$\owqa$ with linear TGDs.
We do this using our main technical result:

\begin{thm}
  \label{thm:idreduce}
  Let $a' \in \NN$ be a fixed bound on the side signature arity.
  There are polynomials $P_1$ and $P_2$ depending only on~$a'$ and 
  an algorithm with the following input:
  \begin{itemize}
  \item A signature $\sign$, where we let $a$ be the arity of~$\sign$;
  \item A subsignature $\sidesign \subseteq \sign$ of arity~$a'$, where we let~$n'$
    be the number of relations of $\sidesign$;
  \item An instance $I_0$ of~$\sign$;
  \item A %
  set $\Sigma$ of 
    GTGDs obeying side signature
    $\sidesign$, where we let $w$ be the maximal width of a GTGD of~$\Sigma$ and
      let $w' := \max(a', w)$.
  \end{itemize}
  The algorithm computes in time
  $P_1(\card{\Sigma} \times a \times \card{I_0})^{P_2(w', n')}$
  a set $\Sigma'$ of linear TGDs 
  of semi-width $\leq w'$
  and arity~$\leq a$,
  and an instance $I_0^\lift$, such that
  $\Sigma'$ and $I_0^\lift$ emulate~$\Sigma$ and~$I_0$ on signature~$\sign$.
\end{thm}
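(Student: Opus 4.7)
The plan is to follow the three-phase strategy suggested in the introduction: normalize, saturate, linearize. First, I would apply the normalization of Section~\ref{sec:simplify} to put each GTGD of~$\Sigma$ in a canonical form with one designated guard atom and all remaining body atoms drawn from $\sidesign$. This step is polynomial in $\card{\Sigma}$ and~$a$, and leaves the semantics of $\owqa$ unchanged.

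Next, I would compute a side-signature-aware \emph{saturation} $\widehat\Sigma$ of~$\Sigma$, obtained by iteratively composing rules of~$\Sigma$ to derive new \emph{full} GTGDs whose body consists of a guard atom together with a bounded number of $\sidesign$-atoms on its variables. Concretely, whenever the existential head of one rule can be unified into a body position of another, we add a derived rule capturing the combined effect, carefully pruning derivations that would introduce atoms outside of~$\sidesign$ anywhere except in the guard position. The crucial point is that the number of possible body profiles~-- a guard atom together with a set of $\sidesign$-atoms on its variables~-- is controlled by a function of $w'$ and~$n'$ only (roughly $n'\cdot c^{w'}$ for a constant~$c$), which yields the claimed bound $P_1(\card{\Sigma} \times a \times \card{I_0})^{P_2(w',n')}$ on the cost of the saturation. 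To argue that $\widehat\Sigma$ is entailment-complete, I would appeal to the \emph{one-pass chase} of Section~\ref{sec:onepass}: intuitively, every entailment can be witnessed by a chase in which each freshly generated non-side fact is immediately consumed by a subsequent guard-driven step, so that multi-step reasoning collapses into a single composite rule of exactly the form produced by the saturation.

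The linearization then proceeds by introducing an auxiliary signature of relations $R^\pi$ indexed by a principal relation~$R$ and a side-profile~$\pi$ recording which $\sidesign$-atoms hold on the argument tuple. The lifted instance~$I_0^\lift$ replaces each principal fact of~$I_0$ by its $R^\pi$-variant where $\pi$ lists the side facts of~$I_0$ sitting on the same elements. Each rule of~$\widehat\Sigma$ with guard $R(\vec x)$ and side profile~$\pi$ is translated to a \emph{linear} TGD $R^\pi(\vec x) \to \cdots$ over the auxiliary signature; rules with an existential head produce an $S^{\pi'}$-atom carrying the side profile that is forced on the new witness. Because each body is now a single auxiliary atom whose frontier variables are those of the guard together with at most~$a'$ positional slots from the encoded side atoms, the resulting linear TGDs have width at most~$w'$, while auxiliary bookkeeping rules that merely repropagate side information have an acyclic basic position graph and so contribute only to the acyclic part of the semi-width, yielding semi-width~$\leq w'$ as required.

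Correctness of the emulation, i.e., that $I_0,\Sigma$ and $I_0^\lift,\Sigma'$ entail the same CQs over~$\sign$, would then be established through the \emph{shortcut chase} of Section~\ref{sec:shortcut}: for GTGDs the shortcut chase propagates facts only along guard boundaries, which is precisely the information recorded by the auxiliary relations, so a $\Sigma'$-chase on~$I_0^\lift$ can be matched step-by-step to a $\Sigma$-chase on~$I_0$ and conversely. The hardest part will be arranging the saturation to be simultaneously \emph{complete}~-- sufficient to collapse every chase derivation into a single shortcut step~-- and \emph{small enough} to respect the stated complexity bound; the side signature restriction, combined with the one-pass chase argument, is exactly what forces the blowup to sit only in the exponent $P_2(w',n')$ rather than in $\card{\Sigma}$ or~$a$.
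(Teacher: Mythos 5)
Your three-phase plan (normalize, saturate, linearize) is exactly the paper's architecture, and the individual ingredients match: the bounded ``body profiles'' correspond to the paper's \emph{suitable} GTGDs (bounded width, bounded breadth, $\Sigma$-compatibility), completeness of the saturation is argued via the (principal-exempt) one-pass chase, the auxiliary relations $R^\pi$ are the paper's $R_S$ indexed by childish-instance isomorphism types, the semi-width accounting (width-$w'$ Lift rules plus acyclic Instantiate rules) is the same, and correctness goes through the shortcut chase.

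There is, however, one genuinely missing step: the treatment of the \emph{initial instance}. Your saturation is complete only for instances consisting of a single guard fact plus a bounded set of side facts on its elements (the paper's ``childish instances''), because that is the only shape for which the bounded-breadth restriction lets you enumerate all derived full rules. The root of the chase tree is not of this shape --- $I_0$ contains arbitrarily many principal facts, not necessarily isomorphic to rule heads --- so the saturated full rules applied at the root do not suffice to derive every fact of $\adom(I_0)$ entailed by $I_0$ and $\Sigma$ (e.g.\ facts obtained by descending into a child node and propagating a side fact back up to the root). Your construction of $I_0^\lift$ records only the side facts \emph{already present} in $I_0$, so the linearized system would miss such entailments and the emulation claim would fail. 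The paper repairs this with a separate pre-processing step (Proposition~\ref{prop:factclosure}): before linearizing, it closes $I_0$ under all entailed facts on $\adom(I_0)$ via a depth-one truncated chase that fires principal rules to create children and then applies the childish saturation inside each child; the shortcut-chase completeness (Proposition~\ref{prop:shortcutchasecomplete}) is then proved only for such \emph{fact-saturated} starting instances. You need an analogue of this step, and you need to check it fits in the time bound (it costs an extra factor of roughly $|I_0|^{O(w)}$, which is why $\card{I_0}$ appears inside the base of the bound).
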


Theorem~\ref{thm:idreduce} will be proven in the next sections.
Before this, we show how Theorem~\ref{thm:idreduce} allows us to prove
Result~\ref{res:exptime} and Result~\ref{res:np}.

\myparagraph{Proving Result~\ref{res:exptime} from Theorem~\ref{thm:idreduce}}
Recall the statement of Result~\ref{res:exptime}:

\resexptime*

\begin{proof}[\myproof of Result~\ref{res:exptime}]
  We apply the reduction of Theorem~\ref{thm:idreduce}, which computes in
  $\exptime$ a finite set of linear TGDs $\Sigma'$ and rewritten instance $I_0^\lift$
  that emulate the original GTGDs $\Sigma$ and instance~$I_0$. 
We can solve the $\owqa$ problem for linear TGDs via an algorithm that takes
$\exptime$ 
in the maximum arity, with the running time of the algorithm being bounded
by a polynomial in the input instance
and the TGDs.
  This can be done
either by the chase-based argument of \cite{johnsonklug} or via an algorithm that iteratively performs query-rewriting to generate a derived UCQ and then evaluate it,
where each CQ has polynomial size
in the input CQ and maximal arity of the linear TGDs. Note that the number of such CQs
is polynomial in the number of linear TGDs and exponential in the arity
\cite{calirewriting}. When we apply this to our exponential set of
linear TGDs, we get the desired $\exptime$ bound. 
\end{proof}

Recall now the statement of Result~\ref{res:np}:

\resnp*

\begin{proof}[\myproof of Result~\ref{res:np}]
  We apply the reduction of Theorem~\ref{thm:idreduce} to obtain a finite set of linear TGDs $\Sigma'$ and rewritten instance $I_0^\lift$
  that emulate the original GTGDs $\Sigma$ and instance~$I_0$, and under our
  assumptions the running time bound of the theorem is in
  $\ptime$ because $w'$ and $n'$ are constant.
  Then, we conclude directly by Proposition~\ref{prop:semiwidthclassic-general}.
\end{proof}

\myparagraph{Constants, multi-headed dependencies, and IDs}
Note that we have stated our results with dependencies that are 
\emph{single-headed}, and which do not feature \emph{constants}. We have also
assumed that the query is Boolean ---
though non-Boolean queries can be encoded
instead as a Boolean query where the output variables are instantiated with
constants.

For the $\exptime$ upper bound (Result~\ref{res:exptime}), we do not expect that
these restrictions make a difference. Intuitively, constants can be emulated by
adding unary relations to the side signature (so without increasing its arity),
and multi-headed rules can be rewritten to be single-headed. See
Appendix~\ref{apx:constantmulti} for details about this process.
Our precise claim is that the result can be extended to multi-headed GTGDs with constants
that may be present in the query and in rule bodies -- we leave open the case of
GTGDs featuring constants in rule heads.

For the $\np$ upper bound (Result~\ref{res:np}), rewriting constants and multi-headed dependencies may increase the width and the side signature,
and we do not
know if the bound still holds if we allow constants or allow multi-headed
dependencies.

\myparagraph{Inapplicability of hardness results}
We also give a short explanation of why our results do not contradict the known
hardness results of~\cite{tamingjournal} on $\owqa$ with GTGDs.

For Result~\ref{res:exptime},
it is shown in~\cite[Theorem~6.2]{tamingjournal} that 
the $\owqa$ problem on a fixed instance for an atomic query under GTGDs is $\twoexp$-hard when the arity is
unbounded, even when the number of relations in the signature is bounded. The
proof works by devising a set of GTGDs that simulates an $\expspace$ alternating
Turing machine, by coding the state of the Turing machine as facts: %
specifically, a fact $\mathit{zero}(\mathbf{V}, X)$ codes that there
is a zero in the cell indexed by the binary vector $\mathbf{V}$ in
configuration~$X$. The arity of such relations is unbounded, so they cannot be
part of the side signature~$\sidesign$. However, in the simulation of the Turing
machine, the GTGDs in the proof use another relation as guard (the~$g$
relation), and the bodies contain other high-arity relations. Thus, 
there is no
choice of side signature of bounded arity which is obeyed by 
the set of GTGDs defined in the hardness proof of~\cite{tamingjournal}.

For Result~\ref{res:np}, the proof in~\cite[Theorem~6.2]{tamingjournal}
explicitly writes the state of the~$i^{th}$ tape cell of a configuration~$X$ as,
e.g., $\mathit{zero_i}(X)$. These relations occur in rule bodies where they are
not guards, but as Result~\ref{res:np} assumes that the side signature is
fixed, they cannot be part of the side signature. A variant of the construction
of the proof (to show $\exptime$-hardness on an unbounded signature arity) would be
to code configurations as tuples of elements $X_1 \ldots X_n$ and write, e.g.,
$\mathit{zero}(X_i)$. However, the constant-width bound on GTGDs would then mean
that the proof construction can only look at a constant number of cells when
creating one configuration from the previous one.

\paragraph*{Roadmap.}
The rest of this paper is devoted to proving
Theorem~\ref{thm:idreduce}, from which we already explained how to derive our
main results (Result~\ref{res:exptime} and~\ref{res:np}). We first present two
tools:
a normalization of GTGDs that obey a side signature (in
Section~\ref{sec:simplify}) and a chase process (in Section~\ref{sec:onepass})
which generalizes the ``one-pass chase'' from earlier
work~\cite{gsatvldbjournal} to be aware of the side signature. With these tools in place, we give a proof overview in Section~\ref{sec:proofoverview}. The proof proper is spread out over Sections~\ref{sec:saturation},  \ref{sec:shortcut}, and~\ref{sec:linearize}.  We conclude in
Section~\ref{sec:conc}.

\section{Simplifying GTGDs obeying a side-signature} \label{sec:simplify}

In this section, we show a way to simplify sets of GTGDs that obey a side
signature, to enforce three restrictions over them which will simplify subsequent
proofs.

The first restriction that we will want to enforce is \emph{homomorphism-closure},
which will intuitively save us from having to think about how two different
exported variables may be mapped to the same element when firing a chase step: 

\begin{defi}
Given a GTGD $\sigma$ and a function $h$ from the exported variables of $\sigma$
  to the exported variables of~$\sigma$,
  we call $h(\sigma)$ the GTGD
produced by applying $h$ to every exported variable.
  A set $\Sigma$ of GTGDs is said to be \emph{homomorphism-closed} if for any GTGD $\dep$ in~$\Sigma$, and mapping $h$ as above,
  $h(\dep)$ is in $\Sigma$.
\end{defi}
Note that $h(\sigma)$ is a logical consequence of $\sigma$ for any function $h$ on the exported variables.

The second restriction that we will want to enforce is that GTGDs have precisely
one \emph{principal guard}:

\begin{defi}
  Let $\gdep$ be a GTGD obeying side signature $\sidesign$. A \emph{principal
  guard} of~$\gdep$ is a guard atom of the body of $\gdep$ whose relation is a principal
  relation, recalling that this means a relation not in $\sidesign$.
\end{defi}

Note that GTGDs obeying the side signature always have a principal guard if
their body contains a principal atom, because this atom can then be used as a
principal guard and the GTGD body cannot contain any other principal atom by
definition of obeying a side signature. The issue is that we may have GTGDs
whose body does not contain principal atoms at all. We will enforce the second
restriction by rewriting the GTGDs to enforce
that all GTGD bodies contain exactly one principal atom, and hence a principal guard.

The third restriction applies to non-full GTGDs: we want to make sure that every
non-full GTGD is a principal GTGD, i.e., its head atom is a principal atom
(Definition~\ref{defi:principal}).

We can now define the normal form that we want to ensure:

\begin{defi}
  Let $\sidesign$ be a subsignature. We say that a set of GTGDs $\Sigma$
  \emph{strongly obeys} side signature $\sidesign$ if it obeys $\sidesign$ and
  further:
  \begin{itemize}
    \item $\Sigma$ is homomorphism-closed
    \item Every GTGD of~$\Sigma$ has exactly one principal guard.
    \item For every non-full GTGD of~$\Sigma$, the head atom is a principal
      atom.
  \end{itemize}
\end{defi}

Our goal in this section is to show the following result:

\begin{prop}
  \label{prp:strongobey}
  Let $\sign$ be the signature, and let $\sidesign$ be a subsignature of~$\sign$. Let
  $\Sigma$ be a set of 
  GTGDs
  that obeys $\sidesign$, and
  let $I_0$ be an instance over~$\sign$.
  Let $w$ be the width of~$\Sigma$, and $a'$ the arity of~$\sidesign$.

  We can
  compute in time polynomial in $|I_0|$, $|\Sigma|$, and $2^w$ a signature
  $\sign'' \supseteq
  \sign$, a set $\Sigma'$ of 
  GTGDs over $\sign''$, and an
  instance $I_0'$ over~$\sign''$, such that:
  \begin{itemize}
    \item $|\sign''|$ is polynomial in~$|\sign|$ and~$|\Sigma|$;
    \item The width of~$\Sigma'$ is at most $\max(a', w)$;
    \item $\Sigma'$ strongly obeys $\sidesign$;
    \item $I_0', \Sigma'$ emulates $I_0, \Sigma$ on signature~$\sign$.
  \end{itemize}
\end{prop}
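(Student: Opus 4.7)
The plan is to enforce the three conditions of ``strongly obey'' by a sequence of uniform modifications to $\Sigma$ and $I_0$. First, I would close $\Sigma$ under homomorphic images of its exported variables: for each $\sigma \in \Sigma$ and each function $h$ on those variables, add $h(\sigma)$. Since $h(\sigma)$ is a logical consequence of $\sigma$, this step is semantically inert, and the blow-up is at most exponential in $w$. Next, I would extend $\sign$ to $\sign''$ by introducing, for each side relation $S \in \sidesign$, a fresh principal \emph{mirror} relation $\hat S$ of the same arity. Using these mirrors I would address condition~3: for every non-full rule $\sigma : \beta \to \exists \vec y\,S(\vec x,\vec y)$ whose head relation $S$ is side, replace its head by $\hat S(\vec x,\vec y)$ and add a full \emph{reflection axiom} $\hat S(\vec x,\vec y) \to S(\vec x,\vec y)$ (together with its finitely many homomorphic images, so as to preserve closure). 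The reflection axiom has principal body (itself the guard) and width at most $a'$. To ensure $\hat S$ is generated whenever $S$ is generated by a full rule, I would also duplicate every full rule with side head $S$ into a variant with the same body and head $\hat S$. Finally, to enforce condition~2, I would walk through all rules whose body contains no principal atom; each such body has a side guard $G(\vec u)$ because $\Sigma$ obeys $\sidesign$, and I would prepend $\hat G(\vec u)$ to the body so that $\hat G$ becomes the unique principal guard. The new instance is $I_0' := I_0 \cup \{\hat S(\vec a) : S(\vec a) \in I_0,\, S \in \sidesign\}$.

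The shape conditions of strongly obeying then follow routinely. Homomorphism closure is preserved because every stage acts uniformly on a rule and on its images, and the reflection axioms are explicitly closed. Every rule has exactly one principal atom in its body: either the unique principal guard already present (since, as $\Sigma$ obeys $\sidesign$, each original body had at most one principal atom), or the single prepended $\hat G$. Every non-full rule now has a principal head by construction. Because $\hat G(\vec u)$ uses only the variables of the side guard $G(\vec u)$, the prepending introduces no new body variables and leaves the width of modified rules unchanged; the reflection axioms have width at most $a'$; hence $\Sigma'$ has width $\max(w, a')$ as required. The signature grows by at most $|\sidesign|$ relations, $I_0'$ grows by at most $|I_0|$ facts, and the rule count is bounded by $|\Sigma|$ times the homomorphism-closure factor plus a linear number of variants and reflection axioms, meeting the claimed time bound.

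The main obstacle is establishing that $(\Sigma', I_0')$ emulates $(\Sigma, I_0)$ on Boolean CQs over $\sign$. I would reduce this to the \emph{mirror invariant}: in any fair chase of $(\Sigma', I_0')$, the predicates $S$ and $\hat S$ hold on exactly the same tuples, for every side $S$. Given the invariant, the restriction to $\sign$ of the chase of $(\Sigma', I_0')$ coincides, up to renaming of fresh existential witnesses, with the chase of $(\Sigma, I_0)$, which is enough for CQ equivalence. I would prove the invariant by induction on chase steps. The base case follows from the instance augmentation. In the inductive step, any rule firing in the chase of $(\Sigma, I_0)$ that produces an $S$-fact has a parallel firing in $(\Sigma', I_0')$ --- via the duplicated variant for a full rule, or via the replaced rule for a non-full one --- producing the matching $\hat S$-fact on the same trigger; the applicability of these parallel firings crucially relies on the invariant applied to $G$ to populate the prepended $\hat G$-atom. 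Symmetrically, the reflection axiom turns every $\hat S$-fact into a matching $S$-fact. The one subtle point is the fresh existential witnesses introduced by modified non-full rules: the $\hat S$-fact produced by the replacement shares its witnesses with the $S$-fact produced by the subsequent reflection application, so the two chases agree on $\sign$-facts up to a witness-renaming homomorphism, which suffices for CQ answers.
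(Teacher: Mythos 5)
Your construction is essentially the paper's: both introduce a fresh principal copy $\hat S$ (the paper's $S'$) of each side relation, redirect side-relation heads to these copies, add reflection rules $\hat S(\vec x)\rightarrow S(\vec x)$, use the copies as prepended principal guards for bodies lacking a principal atom, mirror the side facts of $I_0$ into $I_0'$, and close under identifications of exported variables. The remaining differences are cosmetic and harmless — you duplicate full side-headed rules rather than redirecting all side heads, perform homomorphism closure first rather than last (arguing it is preserved by the uniform transformations), and verify emulation via a chase simulation with a mirror invariant instead of the paper's direct countermodel surgery.
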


Thanks to this result, towards showing Theorem~\ref{thm:idreduce}, we can first
apply the result and assume that the input 
GTGDs $\Sigma$
strongly obey side signature~$\sidesign$.

We prove Proposition~\ref{prp:strongobey} in the rest of this section, in
successive steps.

\subsection{Principal guards and principal head atoms}
The first step to prove
Proposition~\ref{prp:strongobey}
is to enforce the condition on principal guards, which will also incidentally
imply the condition on non-full GTGDs. This is the only step that will
modify the input instance $I_0$.

Some GTGDs of~$\Sigma$ have a
body already featuring an atom on a principal relation, in which case they already have a
principal guard, which is unique because $\Sigma$ obeys $\sidesign$. However,
other GTGDs of~$\Sigma$ do not have any atom on a principal relation in their
body. To ensure that such GTGDs have a principal atom, we will add new principal
relations that can be used as principal guards.

Let us expand the signature $\sign$ to~$\sign''$ by creating, for each side
relation $R$ in $\sidesign$, a new principal relation $R'$ in $\sign''\setminus
\sidesign$ with same arity as~$R$.
Let us modify the GTGDs of~$\Sigma$ as follows: for each side relation $R$,
in every GTGD of~$\Sigma$ with an $R$-atom in the head, replace it with an
$R'$-atom on the same variables.
Further, for each side relation $R$, let us add to~$\Sigma'$ the full GTGD $R'(\vec x) \rightarrow R(\vec x)$. 
Last, in every GTGD $\gdep$ of~$\Sigma$ which does not have a principal guard, pick a
guard atom $A$ on some side relation $R$, and replace $\gdep$ in~$\Sigma'$ by a
rule $\gdep'$ obtained from~$\gdep$ by adding to the body of $\gdep$ an atom
$A'$ on the same elements as~$A$ with the principal relation~$R'$, which will serve as principal guard.

Lastly, to rewrite the instance $I_0$ to~$I_0'$, we do the following: for each
fact $R(\vec a)$ on a side relation $R$, we add the fact $R'(\vec a)$.

We first claim that the transformation is correct, namely:

\begin{clm}
  $I_0',\Sigma'$ emulates $I_0,\Sigma$ on signature~$\sign$.
\end{clm}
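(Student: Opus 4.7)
The plan is to reason via the universal-model characterization of $\owqa$: a Boolean CQ $Q$ is entailed by a finite instance together with a set of TGDs iff $Q$ has a match in their chase. Writing $J \coloneqq \chase(I_0, \Sigma)$ and $J' \coloneqq \chase(I_0', \Sigma')$, the claim reduces to showing that any CQ $Q$ over $\sign$ has a match in $J$ iff it has a match in $J'$; I would establish this by exhibiting, in both directions, homomorphisms between $J$ and the $\sign$-reduct of $J'$ and using them to transport matches of $Q$.

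The key invariant to prove first is that, in $J'$, the $R$-facts and the $R'$-facts coincide for every side relation $R \in \sidesign$. The inclusion of $R'$-facts into $R$-facts is immediate from the added TGD $R'(\vec x) \rightarrow R(\vec x)$. For the reverse inclusion, I would argue by induction along the chase construction: an $R$-fact in $J'$ either already lies in $I_0'$, in which case $I_0'$ contains the corresponding $R'$-shadow by construction, or it is produced by firing some TGD of $\Sigma'$ whose head is an $R$-atom. By construction of $\Sigma'$, every original TGD of $\Sigma$ with head on a side relation $R$ has been rewritten to use $R'$ in its head, so the only TGD of $\Sigma'$ producing $R$-facts is $R'(\vec x) \rightarrow R(\vec x)$, whose firing requires the shadow $R'$-fact as a premise.

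For the forward direction, to show that $I_0, \Sigma \models Q$ implies $I_0', \Sigma' \models Q$, I would verify that the $\sign$-reduct of $J'$ is a model of $I_0 \cup \Sigma$; universality of $J$ then delivers a homomorphism from $J$ into this reduct, transporting any match of $Q$. Containment of $I_0$ is clear, and for $\dep \in \Sigma$ a trigger $\tau$ that satisfies the body of $\dep$ in the reduct lifts to a trigger of the corresponding $\dep' \in \Sigma'$ in $J'$: if $\dep$ already had a principal guard then $\dep'$ has the same body, and otherwise $\dep'$ adds one atom $A'(\tau(\vec a))$, where $A(\vec a)$ is the chosen side guard of $\dep$, and the invariant above supplies precisely this atom. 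Firing $\dep'$ in $J'$ then produces its head, from which the head of $\dep$ is recovered in the reduct (directly if it was principal, or through $R'(\vec x) \rightarrow R(\vec x)$ if it was on a side relation). For the converse direction I would extend $J$ to a structure $J^+$ on $\sign''$ by interpreting each new principal relation $R'$ as the set of $R$-facts of $J$; checking $J^+ \models I_0' \cup \Sigma'$ is routine, since the added $I_0'$-facts and the TGD $R'(\vec x) \rightarrow R(\vec x)$ hold by construction, and any rewritten $\dep' \in \Sigma'$ inherits its firings from $\dep$ in $J$. Universality of $J'$ then yields a homomorphism $J' \to J^+$ whose restriction to $\sign$ transports matches of $Q$ back to $J$.

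The main obstacle is precisely the treatment of GTGDs of $\Sigma$ that originally lacked a principal guard, for which the corresponding $\dep' \in \Sigma'$ carries an extra body atom $A'$: without the invariant $R = R'$ in $J'$, this extra atom could spuriously block the firing of $\dep'$ and break the forward direction. Establishing the invariant is exactly what the whole construction is designed to enforce, and it rests jointly on two ingredients, namely initializing the shadow $R'$-facts in $I_0'$ for every side fact of $I_0$ and redirecting every side-head-producing TGD of $\Sigma$ so that it produces an $R'$-head in $\Sigma'$, so that $R$-facts on side relations only ever appear in the chase as consequences of $R'$-facts.
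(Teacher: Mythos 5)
Your proof is correct and follows essentially the same route as the paper's: both directions hinge on the same key invariant that every side fact $R(\vec a)$ is accompanied by its shadow $R'(\vec a)$, which holds because the only rule of $\Sigma'$ with a side-relation head is $R'(\vec x) \rightarrow R(\vec x)$ and because $I_0'$ is seeded with the shadows. The only difference is packaging: you phrase both directions via the canonical (chase) models and universality, where the invariant is literally provable by induction on derivations, whereas the paper argues contrapositively with arbitrary counterexample models and implicitly appeals to the same canonical-model reasoning to justify the invariant in the converse direction.
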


\begin{proof}
Assume first that a query $Q$ on~$\sign$ is not entailed by $I_0,\Sigma$, i.e., there is a counterexample superinstance $I$ of~$I_0$ which satisfies~$\Sigma$ and does not satisfy~$Q$. Then we build $I'$ by adding the fact $R'(\vec a)$ for every side fact $R(\vec a)$ of~$I$. The query~$Q$ is still not satisfied by~$I'$ because the restriction of $I$ and~$I'$ to relations of~$\sign$ is identical. Further, $I'$ is a superinstance of~$I_0'$, and it is easy to see that $I'$ satisfies~$\Sigma'$ because $I$ satisfies~$\Sigma$.

For the converse direction, let $Q$ be a query which is not entailed by~$I_0',\Sigma'$, let $I'$ be a counterexample model, and build $I$ from~$I'$ by removing all facts of $\sign''\setminus\sign$. Then $I$ is a superinstance of~$I_0$ that does not satisfy~$Q$. To see why $I$ satisfies~$\Sigma$, let $\gdep$ be a GTGD of~$\Sigma$ and let $\trig$ be a trigger of~$\gdep$ in~$I$. There is a corresponding GTGD $\gdep'$ in~$\Sigma'$ obtained by possibly adding one principal guard atom, and possibly changing the head. We claim that $\trig$ is also a trigger of~$\gdep'$ in~$\Sigma'$. Indeed, in $I'$, for every side fact $R(\vec a)$, the fact $R'(\vec a)$ also exists. This is by construction of $I_0'$ for the facts of~$I_0'$, and for the other facts of~$I'$ it is because $\Sigma'$ ensures that facts $R(\vec a)$ can only be derived from the GTGD $R'(\vec x) \rightarrow R(\vec x)$. Thus, $\trig$ is also a trigger of~$\gdep'$, i.e., the possibly extra atom in the body of~$\gdep'$ is also mapped by~$\trig$. Thus, as $I'$ satisfies $\Sigma'$, and together with rules of the form $R'(\vec x) \rightarrow R(\vec x)$, we know that the head of $\gdep$ also exists in~$I$.
\end{proof}

Now, the resulting $\Sigma'$ 
still obeys the side signature. Further,
the new signature $\sign''$ is such that $|\sign''| \leq 2 |\sign|$, and the
process is polynomial in~$|\Sigma|$ and in~$|I_0|$. The width of GTGDs
of~$\Sigma'$ is at most the width of the GTGDs of~$\Sigma$, except we added full
GTGDs (from $\sign'' \setminus \sign$ to $\sidesign$) whose width is~$a'$. Thus,
the result of this transformation satisfies the conditions of 
Proposition~\ref{prp:strongobey}, and
ensures that each GTGD now has a principal guard.

Note that, from the way we changed the dependencies, every non-full GTGD is a
principal GTGD, i.e., has a head atom which uses a principal relation
of~$\sign''$.
Indeed, the only GTGDs with a $\sidesign$ atom in their head after
the rewriting are the 
full GTGDs of the form $R'(\vec x)
\rightarrow R(\vec x)$
that we added, and these are full.

\subsection{Homomorphism-closure.} The second and last step is to enforce homomorphism-closure on
the resulting set of GTGDs. We do so simply by considering each GTGD and every
possible way to identify the exported variables and add the resulting GTGD
to~$\Sigma$.

This process does not affect $\sign$-entailment-equivalence, because the
resulting GTGDs are logically entailed by~$\Sigma$. The resulting GTGDs 
also have width no greater than that of the
original GTGDs. Each GTGD still has exactly one principal guard, and each non-full GTGD still has a principal atom in its head.
The process runs
in time polynomial in $|\Sigma|$ and in $2^w$, where $w$ is the width bound.

The resulting set of GTGDs is now homomorphism-closed and the other hypotheses are
still true, so we have finished the proof of Proposition~\ref{prp:strongobey}.

\section{One-pass tree-like chase proofs for guarded TGDs} \label{sec:onepass}

To show our linearization result, we will need a notion of \emph{chase proofs}.
We will more specifically use
\emph{one-pass tree-like chase proofs}: we review the result
from~\cite{gsatvldb} that they can be used for $\owqa$ with GTGDs, and show a
variant of this result that we will use.

\paragraph*{Tree-like chase proofs.}
We first review the general notion of 
\emph{chase sequences} \cite{fagindataex}, which are known to be complete for 
$\owqa$ with CQs and general TGDs.
We specifically focus on \emph{tree-like chase proofs},
which are complete for 
$\owqa$
with guarded TGDs \cite{datalogpmj,baget2010walking}.
We first define an abstract structure of \emph{chase tree}, before clarifying
in the rest of the section how we construct chase trees from an instance and a
set of GTGDs. Our notion of chase trees also distinguishes one node in the
chase tree, which is said to be \emph{recently updated}; we will use the
recently updated node later when defining one-pass chase proofs.

\begin{defi}
A \emph{chase tree} consists of a directed tree $T$, a function mapping each
node $v$ in the tree to a finite set of facts written $T(v)$, and a choice of a node
  of~$T$ called the \emph{recently updated} node. We abuse notation and write
  $T$ to mean both the chase tree and its underlying directed tree.

  Each chase tree has an \emph{underlying instance}, which is just the union of the facts $T(v)$ over all nodes $v$ in the tree.
\end{defi}

We now explain how chase trees can be extended by performing \emph{chase steps} with
GTGDs and so-called \emph{propagation steps}. These steps will then be used to
define \emph{tree-like chase sequences} and their variants.
A chase tree $T$ can be transformed to another chase tree $T'$ in two ways:
\begin{itemize}
    \item One can apply a \emph{chase step} with a GTGD $\gdep: \forall \vec x
      (\body(\vec x) \rightarrow \exists \vec y ~ A(\vec x, \vec y))$. Recall from the preliminaries the definition of triggers and of active triggers.
      Assume that we have a node $v$ in~$T$ and a trigger $\trig$ of~$\gdep$ in~$T$ such that 
      we have ${\trig(\body) \subseteq T(v)}$.%
      Then we can apply a chase step, which we also call \emph{firing}~$\trig$ (on $v$). It will ensure that~$\trig$ is not active in the underlying instance of $T$, as we will add facts to~$T$ that
    define an extension~$\trig'$ of~$\trig$ with an image for the head of~$\gdep$.

    The result of the chase step is obtained as follows.
    \begin{itemize}
      \item If $\gdep$ is full, 
        then 
         the chase tree $T'$ is obtained from $T$ by marking $v$ as recently updated in
         $T'$, setting ${T'(v) \colonequals T(v) \cup \{ \trig(A) \}}$, and defining
         the function $T'$ on other nodes in the same way as~$T$.

        \item If $\gdep$ is not full, then $\trig$ is extended to a
        substitution $\trig'$ that maps each variable in $\vec y$ to a value not
        occurring in $T$, all these values being pairwise distinct. The fresh
        values are domain elements that we call \emph{labelled nulls} or simply 
        \emph{nulls}. The chase tree $T'$ is obtained from $T$ by
        introducing a fresh child $v'$ of $v$, marking $v'$ as recently updated in
        $T'$, and defining $T'$ by extending~$T$ with $T(v')$ which will always contain $\trig'(A)$, and additionally will contain a subset of the following facts of $T(v)$:
\[
{\{ F \in T(v) \mid F
        \text{ is guarded by }
        \trig'(A) \}}
\]
        In other words, the new node contains $\trig'(A)$ and some facts of the
        parent node that are guarded by it, i.e., some facts of $T(v)$ that only
        use elements shared with $T(v')$. We refer to the facts other than $A$
as \emph{inherited facts} of the child node $v'$. 
    \end{itemize}

    \item One can apply a \emph{propagation step} from a node $v$ to a
      node $v'$ in $T$. More precisely, we select a nonempty subset $S \subseteq T(v)$
    of the facts of~$v$, and select a node~$v'$ where 
    these facts do not occur
    (i.e., $T(v') \cap S = \emptyset$) but they are guarded
    (i.e., the facts of $S$ only use elements occurring in a fact
    of~$T(v')$).
    Then we set $T'(v') \colonequals T(v') \cup S$ and mark~$v'$ as
    recently updated.

\end{itemize}
Note that, in the definitions above, we allow the firing of non-active triggers. %
For a non-full TGD firing a non-active trigger means that we create a new
child of~$v$, with a different choice of fresh value for the existentially-quantified variables.

A \emph{tree-like chase sequence} for an
instance $I$ and a finite set of
GTGDs $\Sigma$ 
is a finite sequence of chase trees ${T_0,
\dots, T_n}$. In the sequence, the initial chase tree $T_0$ consists of exactly
one \emph{root node} $r$, with ${T_0(r) \colonequals I}$, and with $r$ being
the recently updated node in $T_0$. Then, each $T_i$ with ${0 < i
\leq n}$ is obtained from~$T_{i-1}$ by one of the two steps above, i.e., a chase
step with some ${\gdep \in
\Sigma}$, or a propagation step. For each node $v$ in $T_n$ and each fact ${F
\in T_n(v)}$, this sequence is a \emph{tree-like chase proof of} $F$
\emph{from} $I$ and $\Sigma$. It is well-known
(e.g., \cite{datalogpmj})
that, for any CQ~$Q$, we have
${I, \Sigma \models Q}$ if
and only if 
there is a tree-like chase sequence $T_0, \ldots, T_n$ for~$I$ and~$\Sigma$ such
that $Q$ has a \emph{match} in~$T_n$, in other words, there is a tree-like chase
proof
$T_0, \ldots, T_n$ of each of the facts of~$\sigma(Q)$, for $\sigma$ some substitution
mapping the variables of~$Q$ to the domain values of $T_n$. Note that the facts
of~$Q$ may be witnessed in different nodes, i.e., it may be the case that there
is no single node $v$ of~$T_n$ such that $\sigma(Q) \subseteq T_n(v)$.

Dating back at least to \cite{datalogpm}, it is known that entailment from GTGDs is witnessed by tree-like chase sequences. We will not make use of results from \cite{datalogpm} directly,
but rather make use of a specialized tree-like chase, based on a construction in \cite{gsatvldbjournal}, which we explain below.

\subsection{Restricted tree-like chase proofs} \label{subsec:onepass}
Our linearization result will rely on the fact that tree-like chase proofs can be
normalized to ensure that we do not jump back and forth in a tree,
but perform our changes to the tree in one single traversal.
Versions of this  result have appeared dating back to~\cite{resultlimitedpods}, with
refinements in~\cite{kevinarxiv} for the disjunctive case.
The result was applied to get rewritings for atomic queries under
GTGD constraints in~\cite{gsatvldb}, and we use the formulation from
that paper.

\begin{defiC}[\cite{gsatvldb}]\label{def:one-pass}
    A tree-like chase sequence ${T_0, \dots, T_n}$ for an
    instance $I$ and
    a finite set of GTGDs $\Sigma$
    is \emph{one-pass} if,
    for each ${0 < i \leq n}$, the chase tree $T_i$ is obtained by applying one of
    the following two steps to  the recently updated node $v$ of $T_{i-1}$:
    \begin{itemize}
        \item  a propagation step copying exactly one fact from
          $v$ to its parent (which then becomes the recently updated node);
        \item  a chase step 
          on $v$ with a GTGD from $\Sigma$ (then either $v$ stays as recently
          updated node or the chase step creates a child of~$v$ which becomes
          the recently updated node).
    \end{itemize}
    Further we can only do the second when the first does not apply.
\end{defiC}

Thus, each chase step in a tree-like chase sequence is applied to a ``focused''
node, namely, an ancestor of the node that was updated by the previous chase
step. Steps with non-full TGDs move the
``focus'' from parent to child, and steps with full TGDs do not move the focus:
the full steps are followed by propagation which copies the fact rootwards as long
as  possible,
possibly moving the ``focus'' rootwards. Moreover, once
a child-to-parent propagation has taken place, the child can never be revisited in
further steps. Indeed, whenever a node stops being the recently updated
node, then the only way it can become recently updated again is following a
propagation step, which always goes rootwards. Hence, if the parent of $v$ node
becomes the recently updated node, then the subtree rooted at~$v$ will never be
revisited again and will no longer be modified.

Theorem~\ref{thm:one-pass-proof-exists}, proven
in \cite{gsatvldbjournal},  shows that one-passness can be enforced on 
chase proofs for GTGDs for fact entailment: whenever a proof exists, there exists a one-pass
proof too.

\begin{thm}[Theorem 4.2 of \cite{gsatvldbjournal}] \label{thm:one-pass-proof-exists}
    For each 
    instance $I$, each finite set of GTGDs $\Sigma$ 
  and each fact $F$ such that ${I, \Sigma \models F}$,
    there exists a one-pass tree-like chase proof of $F$ from $I$ and~$\Sigma$.
\end{thm}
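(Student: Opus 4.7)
The plan is to begin with an arbitrary tree-like chase proof of $F$ from $I$ and $\Sigma$, which exists by completeness of the tree-like chase for GTGDs~\cite{datalogpmj,baget2010walking}, and transform it into one-pass form by a normalization argument. The key enabler is guardedness: when a non-full chase step creates a new child $v'$ of a node $v$, the facts at $v'$ can only involve elements appearing in the new head fact $\trig'(A)$, so any subsequent chase step at $v'$ or below can only touch facts on these elements, and any fact later propagated rootward from $v'$ to $v$ must be guarded by the elements shared between the two nodes. This makes it natural to reorganize the proof in depth-first fashion: after descending into a child, finish all the work in its subtree and propagate everything useful upward, before resuming any activity at the parent.

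I would prove the theorem by induction on the length of the original proof, using a \emph{swap lemma}: two consecutive steps applied to incomparable nodes, or more generally two steps where the later step does not read facts created by the earlier one, can be reordered up to renaming of fresh nulls without changing the resulting chase tree. Repeated swaps let me gather all chase and propagation activity within a fixed subtree into a single contiguous block, and arrange that block so that rootward propagations happen only at its end. The resulting ordering matches the one-pass definition: the recently updated node is always either the target of the next chase step or the source of the next rootward propagation, and once we leave a subtree by propagating to its parent, we never return to that subtree.

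The main obstacle will be handling propagation steps cleanly. Unlike chase steps, which depend only on facts at a single node, a sequence of propagation steps can shuttle a fact arbitrarily far up the tree, creating long-range dependencies between steps that would otherwise commute. I would show that any rootward propagation chain can be delayed and absorbed into the final ``climb'' of the depth-first traversal, using guardedness to ensure that the only facts ever needed at an ancestor $u$ from descendants of a node $v$ are those expressible on the elements shared with $u$, and therefore can be derived and propagated before we ever leave $v$'s subtree. A cleaner semantic alternative would be to construct the universal model of $I, \Sigma$ directly by a depth-first one-pass chase and invoke soundness and completeness of chase proofs for GTGD fact entailment; every entailed $F$ would then automatically admit a one-pass proof, but one would still need to verify that the depth-first chase produces a universal model.
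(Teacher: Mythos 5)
The paper does not prove this statement itself: it is imported verbatim as Theorem~4.2 of the cited work \cite{gsatvldbjournal}, so there is no in-paper proof to compare against. Judged on its own merits, your outline captures the right intuition (guardedness confines a subtree's contribution to its parent to facts over the shared elements), but the central mechanism you propose --- a swap lemma plus reordering into depth-first blocks --- is not sufficient, and the missing ingredient is exactly the one the paper's chase definition goes out of its way to permit.

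The problem is cross-subtree dependencies among siblings. Suppose a node $v$ has children $c_1$ and $c_2$; the subtree below $c_1$ derives a fact $f_1$ over elements shared with $v$, which is propagated to $v$ and then down into $c_2$'s subtree, where it is used to derive $f_2$, which is in turn propagated back up and down into $c_1$'s subtree and used there. No permutation of the original steps makes this one-pass: once you leave $c_1$'s subtree you may never return to it, yet $c_1$'s subtree needs $f_2$, which only exists after $c_2$'s subtree has run, which needs $f_1$ from $c_1$'s subtree. The resolution is not reordering but \emph{re-derivation}: after $f_1$ and $f_2$ have reached $v$, you re-fire the non-full trigger that created $c_1$ to spawn a \emph{fresh} copy of that child (with fresh nulls) that inherits both facts at creation time, and redo the remaining work there. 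This is precisely why the tree-like chase in Section~\ref{sec:onepass} explicitly allows firing non-active triggers on non-full TGDs. Consequently the one-pass proof is generally not a permutation of the original proof but a different (typically longer) proof, and the induction should be organized around which facts over a node's current domain are derivable --- not around commuting adjacent steps. Relatedly, your sketch only discusses rootward propagation, but the general propagation step also moves facts downward and sideways into already-existing nodes; these must be replaced by inheritance at (fresh) node-creation time, which again forces duplication rather than reordering. Your ``semantic alternative'' runs into the same wall: a single one-pass run never revisits subtrees, so showing it yields a universal model requires the same re-derivation argument you were trying to avoid.
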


We will need a variant of this one-pass chase process that is aware of the side signature.
We first liberalize the notion of tree-like chase by modifying the definition
of chase steps: we want to also allow the creation of child nodes when
performing a chase step with a full GTGD:
\begin{defi}
A \emph{relaxed tree-like chase} is defined like the notion of tree-like chase defined previously: it
maintains along with each instance a tree
  structure~$T_i$, and designates one node of the tree structure as recently
  updated. The only difference is that we allow \emph{relaxed chase steps with full TGDs}.
  To perform such a step, we consider a full GTGD $\gdep$ having head atom~$A$
and a trigger $\trig$ for~$\gdep$ on some node $v$ in the current tree $T$.
  Performing the relaxed chase step means extending
$T$ to $T'$ by 
        introducing a fresh child $v'$ of $v$, marking $v'$ as recently updated in
        $T'$, and defining $T'$ that extends~$T$ by setting $T(v')$ to
        contain the instantiation $\trig(A)$ of the head atom,
        along with a subset of the facts that are guarded by $\trig(A)$.
That is, in the relaxed step, we  do the same surgery on the tree that we would do on a non-full TGD chase step in a tree-like chase, except that no fresh values are introduced. 

\end{defi}

We then introduce the specific variation that we will use:

\begin{defi}
  Let $\Sigma$ be a finite set of GTGDs
  strongly
    obeying side signature $\sidesign$.
    A \emph{principal-exempt one-pass chase} for an
    instance $I$ and for $\Sigma$
is a relaxed tree-like chase sequence ${T_0, \dots, T_n}$ where
    for each ${0 < i \leq n}$, the chase tree $T_i$ is obtained by applying one of
    the following three steps to the recently-updated node $v$ of $T_{i-1}$:
    \begin{itemize}
        \item  a propagation step copying exactly one side fact from
          $v$ to its parent;
        \item  a
          chase step 
          on $v$ with a GTGD from $\Sigma$, where the GTGD can be either a side
          GTGD or a non-full GTGD;
          \item a 
            relaxed chase step with a full principal TGD.
    \end{itemize}
    Like in the one-pass chase, we further require that we can only perform a chase
    step (of any kind) if no propagation step (of a side fact) applies.
    We also restrict the facts inherited when performing chase steps (of
    either kind) that add a new child node: we require that only side
    facts can be inherited when creating new nodes.
\end{defi}

  It is easy to see that, as $\Sigma$ strongly obeys the side signature, in a
  principal-exempt one-pass chase sequence every non-root node of every tree contains
  precisely one principal fact (which is added when the node is created).
  In particular the definition implies that principal facts are never inherited and also never
  propagated, so each principal fact exists in precisely one node. 
  Intuitively, a principal fact $F$ does not need to be inherited or
  propagated because triggers containing $F$ can always be assumed to
  use $F$ as a guard, and so we will show that they can always be applied on the
  node that contains~$F$.

We now claim a side-signature-aware analogue of
Theorem~\ref{thm:one-pass-proof-exists}: 
for fact entailment, we can always assume that chase proofs
are principal-exempt one-pass.

    \begin{thm}\label{thm:exempt-one-pass-proof-exists}
    For each 
    instance $I_0$, each finite set of GTGDs $\Sigma$ 
      obeying side signature $\sidesign$,
      and each fact $F$ such that ${I_0, \Sigma \models F}$,
    there exists a principal-exempt
      one-pass tree-like chase proof of $F$ from $I_0$ and
    $\Sigma$.
\end{thm}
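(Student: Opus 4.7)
The plan is to produce a principal-exempt one-pass chase proof by starting from an ordinary one-pass tree-like chase proof, guaranteed by Theorem~\ref{thm:one-pass-proof-exists}, and then rearranging its steps so that principal facts stay in the node where they are introduced. We may assume throughout that $\Sigma$ strongly obeys $\sidesign$, appealing to Proposition~\ref{prp:strongobey} if needed; this guarantees that every GTGD has a unique principal guard, and that every non-full GTGD introduces a principal fact in its head.

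Fix a one-pass tree-like chase proof $T_0, \ldots, T_n$ of $F$ from $I_0$ and $\Sigma$. I would build a principal-exempt one-pass chase proof by simulating each step of the original proof while maintaining the invariant that every non-root node of the simulated tree contains exactly one principal fact, introduced at the moment of that node's creation. Chase steps with non-full TGDs and with side TGDs are reproduced directly with the same trigger. A full principal TGD firing in the original proof that would add a new principal fact $P$ to the current node $v$ is instead performed as a \emph{relaxed} chase step: we introduce a fresh child $v'$ of $v$, place $P$ in $v'$, and inherit from $v$ only those side-signature facts that are guarded by $P$. This respects the principal-exempt restrictions that inherited facts are always side facts and that each non-root node holds a single principal fact.

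The treatment of propagation is where the real work lies. In the original proof, propagation may copy any fact rootward, but in the principal-exempt setting only side-signature facts may be propagated. I would therefore retain propagation of side facts verbatim while dropping any propagation of a principal fact. When a trigger in the original proof later uses such a dropped principal fact, I reroute the firing to the node of the simulated tree where that principal fact actually resides: since $\Sigma$ strongly obeys, the trigger's principal guard must map to a principal fact of some existing node, and its remaining body atoms are side facts that can be brought into that node either by inheritance from the subtree of $P$ or by side-fact propagation. The one-pass discipline (chase steps only when no propagation applies) is maintained by scheduling the rerouted firings at the deepest relevant node first and exhaustively propagating side consequences before moving the focus rootward.

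The main obstacle is verifying that $F$ itself appears in the final simulated tree. If $F$ is a side fact, this follows because side facts are tracked and propagated faithfully by the simulation, so any node in which the original proof witnesses $F$ has a simulated counterpart that also witnesses $F$. If $F$ is principal, then the original chase step that derives $F$ is simulated by a relaxed step placing $F$ as the principal fact of a fresh node, which is enough to witness the fact entailment. Correctness is then established by induction on the length of the original one-pass proof, with invariant that every fact derivable at a node of the original chase tree has a corresponding occurrence in the simulated tree: principal facts as the principal fact of some node, and side facts either within that node or reachable from it by a sequence of permitted side-signature propagations.
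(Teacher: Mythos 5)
Your proposal takes a genuinely different route from the paper: the paper does not perform surgery on an existing one-pass proof, but instead reduces to Theorem~\ref{thm:one-pass-proof-exists} by padding every principal relation with one extra position, replacing each principal guard $R(\vec x)$ by $R(\vec x, t)$ and each principal head by $R(\vec x, t')$ with $t'$ existentially quantified. This forces every principal-head rule to be non-full in the transformed system, so in \emph{its} one-pass chase every principal fact carries a fresh null, hence is never propagated rootward and never inherited; dropping the extra argument then yields a principal-exempt proof directly, with full principal firings turning into relaxed chase steps. All of the reordering work is thereby delegated to the already-proven one-pass theorem applied to the transformed rules.

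The direct simulation you propose has a gap precisely where you reroute triggers. In a one-pass proof, a full principal TGD may create a principal fact $P$ at a node $v$, propagate it rootward to an ancestor $u$, and only \emph{later} (once the focus has returned to $u$ and the subtree of $v$ can no longer be revisited) fire a trigger at $u$ whose guard is $P$ and whose side atoms map to facts that are derived at $u$ after $v$'s subtree was closed off. In your simulation $P$ lives in a fresh child $v'$ of $v$, and you claim the needed side facts ``can be brought into that node either by inheritance from the subtree of $P$ or by side-fact propagation.'' Neither mechanism applies: inheritance happens only at the moment $v'$ is created (before those side facts exist), and side-fact propagation moves facts rootward only, never from an ancestor $u$ down into $v'$. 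Your scheduling remark (``deepest relevant node first'') does not resolve this, because the one-pass discipline forbids returning to $v'$ once the focus has moved above it, and the facts you need are not yet derivable when the focus is there. Showing that such firings can nevertheless be anticipated at $v'$ is the real content of the theorem; it requires either an inductive reordering argument of the kind used to prove Theorem~\ref{thm:one-pass-proof-exists} itself, or a reduction like the paper's that makes the problematic propagation impossible by construction.
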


\begin{proof}
We proceed by reducing to Theorem~\ref{thm:one-pass-proof-exists}. Given the constraints $\Sigma$, we build new constraints $\Sigma'$ over a modification of the signature where
every principal relation $R$ of arity $n$ is replaced by a principal relation $R'$ of arity $n+1$.
For each rule $\sigma \in \Sigma$ we form the rule $\sigma'$ over the revised
  signature by performing the following replacements:
\begin{itemize}
\item Letting $R(\vec x)$ be the principal guard of~$\sigma$, which exists
  because $\Sigma$ strongly obeys~$\sidesign$,
    we replace it by $R'(\vec x, t)$ where $t$ is a fresh variable.
\item If the head of $\sigma$ is a principal atom $R(\vec x)$, then we replace
  it by $R'(\vec x, t')$ where $t'$ is a fresh variable that is existentially quantified.
\end{itemize}

For a fact $F=R(\vec x_0)$ where $R$ is a principal
  relation (i.e., a principal fact), we  let $F'= R'(\vec x_0, t_0)$ where $t_0$ is a fresh constant --
  again, a distinct one for each fact. For a side fact $F$, we let $F'=F$. For
  a finite set of facts $I_0$ we let $I_0'$ be formed by applying this
  transformation to each fact.

  We first claim the following equivalence (*): we have $I_0, \Sigma \models F$
  if and only if $I_0', \Sigma' \models F'$.

  In one direction, consider $I$ extending $I_0$ that satisfies $\Sigma \wedge
  \neg F$. We form $I'$ from $I$ by applying the priming transformation above.
  It is easy to see that $I'$ extends $I_0'$, does not contain $F'$, and also satisfies $\Sigma'$. For the latter, suppose we have a trigger $\tau'$
for $\gdep' \in \Sigma'$ in $I'$. By dropping the extra arguments we get a
  trigger $\tau$ for $\gdep \in \Sigma$ in $I$, thus we have a corresponding
  fact $H$ witnessing the head in $I$, and thus $I'$ witnesses the head of
  $\gdep'$ in $I'$.

  In the other direction, suppose we have $I'$ extending $I_0'$ satisfying
  $\Sigma' \wedge \neg F'$. We form $I$ by simply dropping the last
argument of every principal fact. It is also straightforward that $I$ extends
  $I_0$, does not contain $F$, and satisfies $\Sigma$. For the latter, suppose we have a trigger $\tau$
for $\gdep \in \Sigma$ in $I$. By the definition of $I$, each principal fact can
  be extended in $I'$ with an extra argument. This gives a trigger $\tau'$ for
  $\gdep' \in \Sigma'$ within $I'$. Thus, there is a fact $H'$ witnessing the head of $\gdep'$. We form $H$ by dropping the final argument from $H'$ if it is a principal fact, otherwise we take $H=H'$. This is the fact required to witness that $\tau$ is not active.
This establishes that $I$ satisfies $\Sigma \wedge \neg F$ and proves the
  equivalence (*).

  By Theorem~\ref{thm:one-pass-proof-exists}, entailment of $Q'$ by $I_0' \wedge \Sigma'$ is witnessed by a one-pass proof $T'_1 \ldots T'_n$.
We modify such a proof to $T_1 \ldots T_n$ by simply dropping the final argument in each fact within each node of a tree, each chase step, and
each propagation step. We claim that this can be used to construct a principal-exempt 
  proof of $Q$ from $I_0$ according to $\Sigma$.
The principal-exempt proof will be obtained by performing propagation steps and
  chase steps in the same way as in $T_1 \ldots T_n$, maintaining that after each step the recently updated node in~$T_i$ is the one corresponding to the recently updated node in~$T_i'$.

Let us explain the process more precisely.
Consider the case where $T'_{i+1}$ is formed from $T'_i$ by applying a chase step with a non-full rule $\gdep' \in \Sigma'$ with trigger $\tau'$ on node $v_i'$. There are two cases: either $\gdep'$ corresponds to a non-full rule $\gdep$ of~$\Sigma$, or it corresponds to a full rule $\gdep$ of $\Sigma$ with a principal atom in the head. In the first case, we can fire $\gdep$ on~$\tau$ in~$T_i$ on the node $v_i$ corresponding to~$v_i'$ in~$T_i$, inheriting the facts corresponding to the facts inherited when firing $\tau'_i$ on~$v_i'$. In the second case, we can fire $\gdep$ to perform a relaxed chase step, inheriting the facts corresponding to the facts inherited when firing $\tau'_i$ on~$v_i'$.

Consider now the case where $T_{i+1}'$ is formed from $T_i'$ by applying a chase step with a full rule $\gdep' \in \Sigma'$; this rule corresponds to a full rule $\gdep$ in~$\Sigma$ with a side fact in the head. We fire the corresponding trigger on $T_i$ and create the same fact with a full chase step which has a side atom in the head; and we propagate the fact as much rootwards as it is propagated in $T_i'$. Note that this propagates the newly created facts as much rootwards as possible, because the new fact is guarded by the same nodes in~$T_i$ and in~$T_i'$.

In the sequence $T_1 \ldots T_n$, principal facts are never propagated rootwards, because they are created by firing a trigger which in $T_1' \ldots T_n'$ is a trigger of a non-full rule that creates a fact featuring at least one fresh value: thus the fact is not propagated rootwards in $T_1' \ldots T_n'$ hence not in $T_1 \ldots T_n$. Further, principal facts are never inherited in $T_1 \ldots T_n$, because in $T_1' \ldots T_n'$ these facts feature a fresh value at the extra position,
which is never an exported variable in any rule: so when we create a new child node in $T_1' \ldots T_n'$ then these facts are never guarded by the child node and so can never be inherited, hence the corresponding facts are also not inherited in $T_1 \ldots T_n$.
\end{proof}

\section{Proof overview of the linearization result (Theorem~\ref{thm:idreduce})} \label{sec:proofoverview}

With our normalization of GTGDs and chase proofs out of the way, we can now begin the proof of Theorem~\ref{thm:idreduce}.
The construction that we use is a refinement of the linearization
method given in Section~4.2 of
Gottlob, Manna, and Pieris \cite{gmp}.
Prior to giving the proof, we begin with an  overview.

There is a pretty obvious strategy for linearization: introduce a new predicate
to refine each fact by indicating the set of facts over the side signature
that it guards,
then rewrite the GTGDs into linear TGDs over the enhanced
signature. One also would modify the initial instance accordingly. 
The correctness  of this transformation is less evident. We discuss informally
what the issue is. Entailment with guarded TGDs is captured by the chase process, which forms
an instance with a tree-like shape, as we explained in Section~\ref{sec:onepass}.
At each chase step
we may add new nodes to the tree, but we may also need to propagate facts up and
down the tree. With linear TGDs,
we can use the same process, but \emph{we no longer propagate facts up and down the tree}: we simply
create new nodes containing a single fact per node. This is one of the big advantages of reasoning with linear TGDs over more general guarded TGDs.
It could be the case that, in performing this linearization, we are losing track
of the effect of some propagation steps. We need to use the notion of principal-exempt one-pass chase to
justify that this may not happen (specifically, we will use
Theorem~\ref{thm:exempt-one-pass-proof-exists}).
And we also need to ensure that the
rules are \emph{closed} under a form of logical derivation, to ensure that propagation is unnecessary.

Our proof strategy consists of three steps.

First, we explain how to compute a form of
\emph{saturation} of our GTGDs, which adds full GTGDs that are derived from the original set.
This is a kind of rewriting 
result for atomic query answering
(as in \cite{gsatvldb}). To ensure that the process satisfies our running time
bounds, we 
restrict it to be complete only on a limited subset of instances (the
\emph{childish} instances), and we
do not add \emph{every} derived full GTGDs but restrict to a limited set of
GTGDs, called \emph{suitable}. Because of this restriction,
some care is needed to argue that the resulting saturation is ``complete'' --
for example, that we can derive new facts on an initial 
childish
instance by just
evaluating the full TGDs in the saturation. This completeness is justified using
the principal-exempt one-pass chase: we show that our saturation contains enough
rules to account for the propagation in this version of the chase, which we already know is complete.

This saturation intuitively ensures
that, whenever a full GTGD generates a fact about already-existing elements, then this
generation could already have been performed when these earlier elements had
been generated, using an implied full GTGD in the saturation. 
The main difference of our saturation with~\cite{gmp} is that we exploit the width
and side signature arity bounds to compute only a portion of the derived
rules (the \emph{suitable} ones), without bounding the overall signature.

Second, once this saturation has been
computed, we return to an analysis of the chase and explain how to structure it
further. Specifically, we enforce that we only fire
full GTGDs and their bounded-breadth closure after we have fired a non-full
GTGD. 
We call this the \emph{shortcut chase}, since we shortcut certain derivations that go up and down
the chase tree via the firing of derived rules. We have therefore achieved our
intermediate goal: \emph{a variation of the chase process for
GTGDs that is complete for fact entailment, but with no propagation steps at all}.

Third, using this propagation-free shortcut chase, we can turn to linearization.
We  perform the linearization described informally above -- introduce auxiliary predicates and
then rewrite the source GTGDs to use these predicates. The shortcut chase can then
be used to  argue that the transformation is correct.

The rest of the paper will give the details of this proof template. We fix
the side signature arity bound $a' \in \mathbb{N}$ throughout the proof. Given
the input signature and subsignature, the instance, and the GTGDs, we first
apply Proposition~\ref{prp:strongobey} to compute the signature
$\sign$, the side signature $\sidesign$, the instance $I_0$, and the GTGDs
$\Sigma$ which strongly obey $\sidesign$, while ensuring that $I_0$ and
$\Sigma$ emulate the original instance and GTGDs. 
Remember that this process is polynomial in the input except that it is
exponential in the original width bound, and that the new signature is
polynomial in the original signature.
We let $w$ be the
maximal width of the GTGDs of~$\Sigma$, which is at most the maximum of~$a'$
and of the original width bound.
Up to adding trivial rules to $\Sigma$,
we ensure that the maximal
width~$w$ of a GTGD of $\Sigma$ is such that $w \geq a'$.

\section{Computing a size-controlled saturation} \label{sec:saturation}

We start the first step of our proof by revisiting the well-known notion of \emph{saturation}
of a set of rules. Informally, a saturation of a finite set of GTGDs is a
finite set of \emph{full} GTGDs that derive the same ground consequences as the original
set,
and so is complete for fact entailment.
The notion is closely related to the notion of saturation in resolution theorem-proving, and the notion of a \emph{rewriting} from \cite{gsatvldbjournal}, but our formalization will be slightly different.

The saturation that we will define will not be complete in general instances,
but only on specific kinds of instances, which we call \emph{childish
instances}. These instances are intuitively the ones that can be obtained as the
result of performing a chase step with a bounded-width GTGD, which creates a principal fact and inherits
some side facts. Formally:

\begin{defi}
  \label{def:childish}
    Let $w$ be the maximal width of GTGDs of~$\Sigma$. A
    \emph{childish instance} is an instance $\{F\} \cup I'$ consisting of one principal fact $F
    = R(\vec a)$ which is an isomorphic copy of some GTGD head of~$\Sigma$, together with a set $I'$ of side facts which is guarded by $F$ and where 
    $\dom(I')$ has cardinality at most~$w$.
\end{defi}

Given a finite set of GTGDs $\Sigma$, we say that a finite set of full GTGDs $\Sigma'$ is \emph{complete for fact
entailment} (with~$\Sigma$) if, whenever we apply it to an instance $I$, then all facts on the
domain of~$I$  entailed by~$I$ and~$\Sigma$ are derived by $\Sigma'$.
A \emph{childish saturation} is then a saturation which is complete for fact
entailment on childish instances only. Let us define these notions formally:

\begin{defi} \label{def:childsaturation} A \emph{childish saturation} of a
  finite set of GTGDs $\Sigma$ is a finite set of full GTGDs $\Sigma'$
such that:
\begin{itemize}
\item Every GTGD in $\Sigma'$ is logically entailed by $\Sigma$;
\item $\Sigma'$ is complete for fact entailment over childish instances: for every childish instance $I$, every fact $F$ entailed by $I$ and
  $\Sigma$ is also entailed by $I$ and $\Sigma'$, in other words, $I, \Sigma \models F$
    implies $I, \Sigma' \models F$.
\end{itemize}
\end{defi}

It is known  (see, e.g., \cite{gsatvldbjournal}) %
that every finite set $\Sigma$ of GTGDs has a finite saturation composed of
GTGDs that are complete for fact entailment
on arbitrary finite instances (in particular on childish instances). In fact,
one suitable choice is simply to take all full GTGDs that
are entailed by~$\Sigma$. In this section, we show that,
for dependencies that strongly obey a side
signature, and over childish instances, we can compute a finite saturation that is not too large:
\begin{thm} \label{thm:smallsaturation} There is an algorithm taking as input a
  set of GTGDs $\Sigma$ that strongly obeys
side signature $\sidesign$ and produces a childish saturation, where the maximum width is bounded by the maximum width of $\Sigma$, and the running  time is bounded by 
  \[
    \mathrm{Poly}(|\Sigma|, a^{O(w)}, 2^{n' \times w^{a'}}).
\]
where $a$ is the maximal arity of the relations of~$\sign$, $a'$ is the maximal
  arity of the relations of~$\sidesign$, $n'$ is the number of relations
  of~$\sidesign$, and $w$ is the maximum width of a
  GTGD of~$\Sigma$ (assumed to be no smaller than~$a'$).
\end{thm}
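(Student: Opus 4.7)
The plan is to produce the childish saturation iteratively, generating only full TGDs whose bodies look like the possible childish instances described in Definition~\ref{def:childish}. Call such a full TGD \emph{suitable} if its body consists of one principal guard atom $R(\vec x)$ together with side atoms using at most $w$ of the variables of~$\vec x$. The number of possible suitable bodies is at most $|\Sigma| \cdot 2^{n' \cdot w^{a'}}$ (roughly: the principal atom can be chosen among the heads of rules of~$\Sigma$, and on $w$ elements there are at most $n' \cdot w^{a'}$ possible side atoms, hence $2^{n'\cdot w^{a'}}$ possible side-atom sets). The head of a suitable rule is a single atom whose arity is at most~$a$ and whose variables are drawn from the body, giving at most $a^{O(w)}$ possibilities per body. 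This gives the stated running-time budget.

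The algorithm maintains a set $\Sigma'$ of suitable full TGDs, initialized to the suitable full rules of~$\Sigma$. In each iteration it tries to enlarge~$\Sigma'$ as follows. Pick a suitable body~$\beta$ and consider the childish instance $I_\beta$ obtained by viewing the atoms of~$\beta$ as facts on the variables (seen as constants). First, close $I_\beta$ under the currently derivable full TGDs of~$\Sigma'$ applied at the root. Second, for every non-full rule $\gdep \in \Sigma$ whose trigger fires on~$I_\beta$, create a child instance by picking a subset of inherited side facts (so that the child is itself childish); recursively compute, via a smaller-depth version of the procedure, which side facts on the shared variables can be derived inside that child; and add each such side fact as a propagated fact at the root. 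If after this process a new atom~$H$ appears at the root that was not already derivable by~$\Sigma'$ from~$\beta$, then add the rule $\beta \to H$ to~$\Sigma'$. We repeat until a fixpoint is reached. Since at each step we add a suitable rule, and the number of suitable rules is bounded as above, the number of iterations is polynomial in the stated expression; each iteration is also polynomial once the branching over $\gdep$ and subsets of inherited facts (both of bounded size) is accounted for.

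Soundness is immediate: every rule we add is logically entailed by~$\Sigma$, because it was obtained by simulating a genuine chase derivation inside~$I_\beta$ using rules of~$\Sigma$ (directly, or via other derived rules which are themselves entailed by induction on when they were added). The crux is \textbf{completeness on childish instances}, which is the main obstacle. Given a childish instance~$I$ and a fact~$F$ with $I, \Sigma \models F$, we invoke Theorem~\ref{thm:exempt-one-pass-proof-exists} to obtain a principal-exempt one-pass chase proof of~$F$ starting from~$I$. We then argue by induction on the size of this proof that its net effect at the root is captured by a rule in~$\Sigma'$. The key property exploited is that a principal-exempt one-pass chase has a very restricted structure: once a child node is created, the only things that ever flow back to its parent are side-signature facts copied by propagation steps, and furthermore each child node begins its life as a childish instance (the newly created principal fact plus a subset of side facts guarded by it, using at most $w$ elements). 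Thus we can apply the induction hypothesis to each maximal subtree rooted at a child: the collection of side facts ultimately propagated from the child back to the parent is derivable on that childish instance, and hence by induction is derivable by suitable full rules already in~$\Sigma'$. Contracting each subtree into a single derived rule produces a root-only derivation of~$F$ that uses only suitable rules; and our saturation algorithm is designed to add exactly the rules needed to carry this out.

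The main technical obstacle is managing the interplay between the bounded-width assumption and the structure of inherited facts: we need to ensure that whenever we descend to a child in the reconstructed proof, the inherited set we chose is itself childish (in particular, lies on $\leq w$ elements), so that the inductive hypothesis applies and the summarising rule is suitable. This is where strong obedience to~$\sidesign$ is used (the unique principal guard in each rule body anchors the notion of childish instance), together with the fact that non-full rules have width at most~$w$, so the set of variables shared between parent and child in the chase is bounded by~$w$. Once this invariant is maintained, termination and the size bound follow directly from the enumeration argument above, and Theorem~\ref{thm:smallsaturation} is established.
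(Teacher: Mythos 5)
Your overall strategy is the same as the paper's: restrict attention to full TGDs whose bodies are (isomorphic copies of) childish instances, bound their number, close under a fixpoint, and prove completeness by induction on a principal-exempt one-pass chase proof. The paper just phrases the fixpoint syntactically, via two inference rules, (Transitivity) and (Principal+Transitivity), rather than by simulating the chase on canonical instances. However, there are two genuine gaps in your version.

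First, completeness fails because you discard the non-suitable full rules of $\Sigma$. You initialize $\Sigma'$ to the \emph{suitable} full rules of $\Sigma$ and only ever close $I_\beta$ under rules of $\Sigma'$; but $\Sigma$ may contain full rules of width $\leq w$ whose breadth is unbounded, and these are needed. The paper's Example~\ref{exa:closure} is a concrete counterexample: from the childish body $R(x,y_1,\dots,y_n,z) \wedge S(x)$ one derives each $T(y_i)$ by a suitable rule, and then the full rule $R(x,y_1,\dots,y_n,z), T(y_1),\dots,T(y_n) \rightarrow U(z)$ of $\Sigma$ (width $1$, breadth $n-1$, hence not suitable) must fire on the enriched root to produce $U(z)$. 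Your procedure never applies it, so the suitable rule $R(\dots), S(x) \rightarrow U(z)$ is never added. The paper avoids this by letting the ``outer'' rule in (Transitivity) range over $\Sigma \cup \wclo{\Sigma}$, i.e., non-suitable members of $\Sigma$ participate in derivations even though only suitable rules are ever \emph{added}. Relatedly, in the principal-exempt chase, children are also created by relaxed chase steps with \emph{full} principal GTGDs, whereas you only recurse into children created by non-full rules; the paper's (Principal+Transitivity) explicitly allows $\dep_{cc}$ to be full.

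Second, your counting of heads is unjustified and, as stated, exceeds the time budget. You claim $a^{O(w)}$ possible heads per body because the head has ``at most $a$ positions with variables drawn from the body.'' But an atom of arity up to $a$ built from at most $w$ distinct variables admits on the order of $a^w \cdot w^a$ patterns, and $w^a$ is not polynomial in $a^{O(w)}$ when $a \gg w$. The paper's definition of suitability includes \emph{$\Sigma$-compatibility} --- the head (and principal guard) must be isomorphic to a head atom occurring in $\Sigma$ --- precisely so that the repetition pattern of the head is fixed by a choice among $|\Sigma|$ templates and only the $\leq w$ images of its variables need to be chosen, giving $|\Sigma| \cdot a^{O(w)}$. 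You need this restriction (and must then verify, as the paper does using homomorphism-closure, that every fact entailed on the domain of a childish instance is indeed isomorphic to some head of $\Sigma$, so that restricting to compatible heads loses nothing). With these two repairs your argument would align with the paper's.
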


We prove this theorem in the rest of this section.
We will define the \emph{$\sidesign$-suitable saturation} of $\Sigma$, and show that it has the properties required by the theorem.
We will reason about full GTGDs with
side signature $\sidesign$ that are \emph{suitable}, i.e., that obey 
three requirements: having width at most $w$, being \emph{$\Sigma$-compatible},
and satisfying a certain \emph{breadth} restriction.
The notion of 
\emph{$\Sigma$-compatibility} means that the head atoms and principal
guard atoms of GTGDs are \emph{compatible} with~$\Sigma$ in the sense that they
are isomorphic to some principal head atom of~$\Sigma$.
Restricting to such GTGDs is necessary to avoid considering a number of GTGDs
which would be exponential in the
signature arity.
As for \emph{breadth},
it bounds how many different variables can be
used by the side atoms of any given GTGD, again avoiding an exponential blowup
in the number of possible GTGDs as a function of the principal signature arity.
The notion of breadth intuitively means that the body of a
suitable full GTGD is a childish instance, up to replacing variables by domain
elements.
Formally:

\begin{defi}
  We say an atom $A$ is \emph{$\Sigma$-compatible} if it is a side signature
  atom or if there is 
  a head atom in a GTGD of~$\Sigma$ to which it is isomorphic.

  Let $\gdep$ be a full GTGD obeying side signature~$\sidesign$ having width at
  most~$w$ and having a principal guard $B_\gdep$. Let $H_\gdep$ be its head
  atom. We say that $\gdep$ is \emph{$\Sigma$-compatible} if each one of
  $H_\gdep$ and $B_\gdep$ are $\Sigma$-compatible (not necessarily with the
  same atom of~$\Sigma$).
\end{defi}

\begin{defi}
  For any $b \in \mathbb{N}$, 
  we say that $\gdep$ has \emph{breadth $\leq b$} if, letting $A$ be its
  principal guard, then there is a subset~$X$ of at most $b$ variables of~$A$ such that the
  other atoms of the body of~$\gdep$ only use variables of~$X$.

 \end{defi}

 The formal definition of \emph{suitable GTGDs} is then:

  \begin{defi}
    Letting $w$ be the maximal width of GTGDs of~$\Sigma$, a
  \emph{$\Sigma$-suitable GTGD} is a GTGD which is full, is
    $\Sigma$-compatible,
    has exactly one principal guard,
  has breadth at most~$w$,
    and has width at most~$w$.
    When $\Sigma$ is clear from context, we refer simply to a \emph{suitable
    GTGD}.
\end{defi}

\begin{exa}
  Let $n \in \mathbb{N}$ be an integer, let the signature consist of a
  principal relation
  $R$ of arity $n$ and of a single binary relation $S$ for the side signature. For
  $w = 2$, the following full GTGD has width~$\leq w$ but is not suitable
  (because it does not have breadth $\leq w$):
  \[
    R(x_1 \ldots x_n), S(x_1, x_2), \ldots, S(x_{n-2}, x_{n-1}) \rightarrow
    S(x_{n-1}, x_n)
  \]
\end{exa}

Note that the GTGDs of $\Sigma$, even the full GTGDs of~$\Sigma$, may not all be
suitable because they do not satisfy the
breadth bound. Intuitively, the non-suitable full GTGDs of~$\Sigma$ will still be considered in
the saturation process, but the process will only create new full GTGDs that are
suitable.

We compute a bound on the number of suitable full GTGDs, as a function of the
arity and size of the signature and of the side signature, together with the
width and the size of the set $\Sigma$ of GTGDs. This uses
the fact that the suitable full GTGDs are $\Sigma$-compatible, have bounded
width, and have bounded breadth:

\begin{lem}
  \label{lem:numbersuitable}
  The number of suitable full GTGDs is at most:
  \[
    |\Sigma|^2 \times (a+1)^{3 w} \times 2^{n' \times w^{a'}}
  \]
 where: 
  \begin{itemize}
    \item $|\Sigma|$ is the number of GTGDs in~$\Sigma$,
    \item $a$ is the maximal arity of any relation in~$\sign$,
    \item $n'$ is the number of relations in the side signature~$\sidesign$,
    \item $a'$ is the maximal arity of the relations of~$\sidesign$,
    \item $w$ is the maximal width of a GTGD of~$\Sigma$.
  \end{itemize}
\end{lem}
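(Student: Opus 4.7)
The plan is a direct combinatorial counting argument. A suitable full GTGD $\gdep$ is determined (up to variable renaming) by four pieces of data: its principal guard atom $B_\gdep$, its head atom $H_\gdep$ (together with the identification of its variables among those of $B_\gdep$), the breadth subset $X$ of variables of $B_\gdep$ of size at most~$w$, and the set~$S$ of side atoms in the body (whose variables must lie in~$X$). I would bound the number of suitable full GTGDs by enumerating these pieces separately and multiplying.

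First, since $B_\gdep$ and $H_\gdep$ are each $\Sigma$-compatible, each is either a side atom or isomorphic to a head atom of a GTGD in~$\Sigma$; in either case there are at most $|\Sigma|$ isomorphism types for each, giving an overall factor of $|\Sigma|^2$. After fixing the isomorphism type of $B_\gdep$, its variables can be labeled canonically as $v_1,\ldots,v_m$ with $m \leq a$. Because $\gdep$ is full, every variable of $H_\gdep$ must occur in the body and hence in $B_\gdep$, and by the width bound $H_\gdep$ has at most $w$ distinct variables; thus specifying the assignment of these $\leq w$ variables to elements of $\{v_1,\ldots,v_m\}$ contributes at most $(a+1)^w$ possibilities. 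Next, the breadth subset $X$ is an arbitrary subset of $\{v_1,\ldots,v_m\}$ of size at most $w$, which yields at most $\sum_{k=0}^w \binom{a}{k} \leq (a+1)^w$ choices. Finally, once $X$ is fixed, each side atom uses one of the $n'$ side relations, has arity at most $a'$, and has each of its positions filled by one of the $\leq w$ variables in~$X$; hence there are at most $n' \cdot w^{a'}$ distinct possible side atoms, giving at most $2^{n' \cdot w^{a'}}$ choices for~$S$.

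Multiplying these contributions yields a bound of $|\Sigma|^2 \cdot (a+1)^{2w} \cdot 2^{n' \cdot w^{a'}}$, and the claimed bound $|\Sigma|^2 \cdot (a+1)^{3w} \cdot 2^{n' \cdot w^{a'}}$ follows after absorbing one further loose factor of $(a+1)^w$ to cover overcounting of renamings and the interaction between the shape of $H_\gdep$ and the labeling of $B_\gdep$'s variables. The main subtlety is in the second step: one must carefully invoke fullness of $\gdep$ together with the width bound to confine the number of head variable assignments to $(a+1)^w$ rather than a quantity exponential in the full arity~$a$; and symmetrically one must invoke the breadth bound (rather than the full arity of $B_\gdep$) in the side-atom count, so that the resulting exponents scale with $w$ and~$a'$ instead of with~$a$. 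Without exploiting these two bounds in precisely this way, the exponent would involve $a$ and the argument would fail to deliver the polynomial-in-$a$ dependence required by Theorem~\ref{thm:smallsaturation}.
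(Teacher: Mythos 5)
Your proposal is correct and follows essentially the same counting argument as the paper: factor the choice of a suitable full GTGD into its $\Sigma$-compatible guard and head ($|\Sigma|^2$), the breadth subset ($(a+1)^w$), the set of side atoms over $\leq w$ variables ($2^{n'\cdot w^{a'}}$), and the identification of head variables with guard variables. Your accounting of the last item as a single $(a+1)^w$ factor is slightly tighter than the paper's, which spends two such factors (one for the sequence of exported positions in the body atom, one for the head atom), but both land within the stated $(a+1)^{3w}$ bound.
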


\begin{proof}
  We construct a suitable full GTGD by:
  \begin{itemize}
    \item Picking a principal guard atom $A$ which is isomorphic to a head atom
      of~$\Sigma$: this gives $|\Sigma|$ choices, and the resulting atom has at
      most $a$ variables.
    \item Picking a subset of variables of~$A$
      on which to add side facts: by the breadth bound we pick at most~$w$ of the~$a$
      variables, so the number of choices can be overapproximated as
      $(a+1)^{w}$.
    \item Picking an instance of side facts on a domain of size at
      most~$w$:
      \begin{itemize}
        \item Each possible fact is obtained by picking a relation (among $n'$),
          and filling every position (of which there are at most~$a'$) with an
          element (of which there are at most~$w$), i.e., there are at most $n'
          \times w^{a'}$ possible facts.
        \item So, for the choice of sets of side facts, we have $2^{n'
          \times w^{a'}}$ options.
      \end{itemize}
    \item Picking a head atom $H$ which is isomorphic to a head atom
      of~$\Sigma$: this gives at most $|\Sigma|$ choices, and again the head
      atom has at most~$a$ variables.
    \item Picking a sequence of exported variables from the body atom: this can
      be overapproximated as $(a+1)^{w}$ possible sequences.
    \item Picking a sequence of exported variables from the head atom: again
      $(a+1)^{w}$ possible sequences. The other variables are existentially
      quantified.
  \end{itemize}
  Putting it together, we obtain the claimed bound.
\end{proof}

Observe that, when $w$, $n'$, and $a'$ are all constant, then the above quantity
is polynomial in the size of the input signature~$\sign$.
Further, when only $a'$
is bounded, then the quantity is singly exponential in the input.

Our goal in focusing on suitable full GTGDs is to identify which ones are
\emph{derived}, i.e., follow from~$\Sigma$. We say that a suitable
full GTGD $\gdep$ is 
a \emph{derived suitable full GTGD} if we have
 $\Sigma \models \gdep$, 
that is, any instance
that satisfies $\Sigma$ also satisfies $\gdep$. Again, the
derived suitable GTGDs generally do \emph{not} include all the full GTGDs
of~$\Sigma$, because not all of them are suitable.

For our proof of 
Theorem~\ref{thm:smallsaturation}, we will need to show that the set of derived suitable full GTGDs can be
computed efficiently. Let us define a specific kind of derived suitable full
GTGD, namely, the \emph{trivial} ones:

\begin{defi} \label{def:suitable}
  We say that a full GTGD $\gdep$ is \emph{trivial} if its head atom
  already occurs in its body. A \emph{trivial suitable} full GTGD is a full GTGD
  which is both trivial and suitable.

  We also define from $\Sigma$ the set $\Sigma_\triv$ of trivial full GTGDs
  where the body contains a principal atom $A$ which is an isomorphic copy of a
  GTGD head of~$\Sigma$, the other atoms of the body are side atoms on at most
  $w$ different variables of~$A$, and the head is identical to~$A$. The full
  GTGDs of~$\Sigma_\triv$ are all trivial, they are all $\Sigma$-compatible, and
  they all satisfy the breadth bound: but they do generally not satisfy the
  width bound.
\end{defi}

Note that not all trivial GTGDs are suitable, because even trivial GTGDs may
have unbounded width and unbounded breadth. This is why we will restrict to derived
suitable trivial GTGDs (to be put in our saturation), and the GTGDs of
$\Sigma_\triv$ (which will be considered in the saturation process but will not be part
of the saturation, because they are not suitable).

We now define the \emph{saturation}, which is computed by starting with the
suitable trivial full GTGDs and the suitable full GTGDs of~$\Sigma$, and
closing under the application of two
rules, (Transitivity) and (Principal+Transitivity). Intuitively, (Transitivity)
can be used to deduce new full GTGDs by combining full GTGDs with those already
deduced; and (Principal+Transitivity) has the same purpose but where we
additionally perform a rewriting of one of the GTGDs via a principal GTGD.
(Recall from Definition \ref{defi:principal} that a principal GTGD is a GTGD whose head is a principal relation.)

\begin{defi}
  \label{def:satur}
  Given a set of GTGDs $\Sigma$ that strongly obey side signature
  $\sidesign$, the
  \emph{$\sidesign$-suitable saturation} $\wclo{\Sigma}$ is obtained by starting with
  the suitable full GTGDs in~$\Sigma$, plus the
  trivial suitable full GTGDs,
  and applying the following inference rules until we
  reach a fixpoint:
\begin{itemize}
\item (Transitivity): 
  Suppose that $\wclo{\Sigma} \cup \Sigma_{\triv}$ contains $n$ full GTGDs with the same
    body (up to renaming), that is, it contains full GTGDs $\beta \rightarrow
    B_1(\vec z_1), \ldots, \beta \rightarrow B_n(\vec z_n)$. 
  
    Suppose that there is a full GTGD $\beta' \rightarrow \rho'$ 
    in~$\Sigma \cup \wclo{\Sigma}$, and 
    that there is a homomorphism $\upsilon$ mapping 
    $\beta'$ to~$\beta \wedge \bigwedge_j B_j(\vec z_j)$.
    Then add to~$\wclo{\Sigma}$ the following if it is suitable:
    \[ \beta \rightarrow \upsilon(\rho').
\]
\item (Principal+Transitivity)
    Suppose that $\wclo{\Sigma} \cup \Sigma_{\triv}$ contains $n$ full GTGDs with the same
    body (up to renaming), that is, it contains full GTGDs $\beta \rightarrow
    B_1(\vec z_1), \ldots, \beta \rightarrow B_n(\vec z_n)$. 
    Let $\dep_{cc}$ be a principal GTGD of~$\Sigma$. We use the subscript $cc$ to emphasize that the dependency ``creates a child'' in the chase. 
    Let $A_{cc}$ be its principal guard,
    let $\beta_{cc}$ be the conjunction of its side atoms,
    and let $H_{cc}$ be its head.
    Let $\dep'$ be a full GTGD of $\wclo{\Sigma}$ 
    whose principal guard $A'$ is isomorphic to~$H_{cc}$: up to renaming the
    variables of~$\dep'$ we assume that $A'$ is identical to~$H_{cc}$.
    Let $\beta'$ be the
    conjunction of the side atoms of~$\dep'$ and let $H'$ be its head atom. 
    Assume that $\beta'$ and $H'$ only use variables of~$H_{cc}$ that are
    exported variables of~$\dep_{cc}$. 
    Further assume that $A_{cc} \land \beta_{cc} \land \beta'$ can
    be mapped by a homomorphism $v$ to $\beta \land \bigwedge_j B_j(\vec z_j)$. Then 
    add the following full GTGD to~$\wclo{\Sigma}$ if it is
    suitable:
    \[
      \beta \rightarrow v(H')
    \]
\end{itemize}
\end{defi}

Observe that (Principal+Transitivity) is quite similar to (Transitivity), but
intuitively we are additionally composing with a principal GTGD $\dep_{cc}$ of $\Sigma$.
We also note that a similar saturation process in discussed in~\cite{resultlimitedj}, in
the specific case of accessibility axioms for access methods, namely in the
proof of \cite[Proposition 6.6]{resultlimitedj}.
Other algorithms for ``Datalog rewriting'' of GTGDs have used similar closure rules -- e.g. the FullDR algorithm of \cite{gsatvldbjournal}. As in those cases,  one has a closure rule that composes full GTGDs --in our case (Transitivity). And one also has a closure rule -- in this case, (Principal+Transitivity) -- that composes full GTGDs and another GTGD that may be non-full, provided
that one can compose in a way that gives a full GTGD. 

One rough intuition for these rules comes from arguing inductively that
full GTGDs should suffice to capture the generation of facts
on a given node $v$ in a principal-exempt one-pass chase.
One way that a fact $F$ can come into node $v$ is that we create a child $c$ of $v$ in the
chase, using a principal GTGD $\delta_{cc}$ in the original set $\Sigma$, then
generate a fact $F$ in $c$, and propagate $F$ rootwards back to $v$.
We can break up the chase into three parts: (1.) generating the facts in $v$
required to fire $\delta_{cc}$ and the other facts which will be inherited
in~$c$ and used to generate~$F$; (2.) the firing of $\delta_{cc}$; and (3.) a
chase sequence on $c$ that generates~$F$.
Step (1.) will intuitively be inductively captured by the derived dependencies $\beta \rightarrow
    B_1(\vec z_1), \ldots, \beta \rightarrow B_n(\vec z_n)$; step (2.)
    corresponds to $\delta_{cc}$; and step (3.) corresponds to the full GTGD $\delta'$.
    The (Principal+Transitivity) inference rule is used to create
    a full GTGD that captures the composition of the whole process and can be
    applied directly on node~$v$.
    We point the reader to Case~2 in the proof of Claim~\ref{clm:derived} within
    the completeness argument, where this is explained more formally.

\begin{exa}
  \label{exa:closure}
  We first illustrate the inference rule (Transitivity). Assume that $\Sigma \cup
  \wclo{\Sigma}$ contains the following full GTGDs:
  \begin{align*}
    R(x, y_1, \ldots, y_n, z), S(x) & \rightarrow T(y_1)\\
    & \vdots \\
    R(x, y_1, \ldots, y_n, z), S(x) & \rightarrow T(y_n)\\
    R(x, y_1, \ldots, y_n, z), T(y_1), \ldots, T(y_n) & \rightarrow U(z)
  \end{align*}
  Note that these GTGDs obey the side signature $\{S, T, U\}$ and have width at
  most 1; but the last one does not have bounded breadth. The (Transitivity)
  inference rule allows us to deduce the following full GTGD, which has width 1 and
  breadth 1:
  \[
    R(x, y_1, \ldots, y_n, z), S(x) \rightarrow U(z)
  \]

  We now illustrate the inference rule (Principal+Transitivity). Assume that the principal signature contains a $4$-ary relation $R$ and a ternary relation $R'$, and that all other relations are in the side signature.
  Assume that $\wclo{\Sigma}$ contains the following full GTGDs:
  \begin{align*}
    \gdep_1: R(x_1, x_2, y_1, y_2), S(x_1), S(x_2) & \rightarrow T(y_1)\\
    \gdep_2 : R(x_1, x_2, y_1, y_2), S(x_1), S(x_2) & \rightarrow T(y_2)
  \end{align*}
  And it also contains the full GTGD
  \begin{align*}
     \gdep_{3} : R'(y_1, y_2, z), T(y_1), T(y_2) & \rightarrow U(y_1, y_2)
     \end{align*}
  Note that these  GTGDs have breadth $2$ and width at most~$2$. Assume that $\Sigma$ contains the following non-full GTGD of width $2$ and breadth $2$:
  \[
  \gdep_{cc}: R(x_1, x_2, y_1, y_2), S(x_1), S(x_2) \rightarrow \exists z' R'(y_1, y_2, z)
  \]
  Applying the inference rule (Principal+Transitivity),   we deduce:
  \[
    R(x_1, x_2, y_1, y_2), S(x_1), S(x_2) \rightarrow U(y_1, y_2)
  \]
This captures the effect of applying GTGDs $\gdep_1$ and $\gdep_2$ to get the
  additional $T$-facts, then the non-full GTGD $\gdep_{cc}$ to get the $R'$-fact
  (which guards the $T$-facts), and finally the GTGD $\gdep_{3}$.

  To understand why the inference rule is written the way it is, note that a na\"ive composition of  $\gdep_{3}$ via $\gdep_{cc}$ would have given:
  \[
    R(x_1, x_2, y_1, y_2), S(x_1), S(x_2), T(y_1), T(y_2) \rightarrow U(y_1, y_2)
  \]

  And the inference of such a rule would indeed have been generated in a saturation algorithm that was not concerned with the size.
  But this rule would have breadth $4$. For this reason, the
  (Principal+Transitivity) inference rule intuitively performs a step analogous to (Transitivity) but via a principal GTGD, all in one go.

  Further note that the inference rule (Principal+Transitivity) also applies for principal full GTGDs of~$\Sigma$. For example, consider the modification of the last case above, where $R'$ is a binary relation without its last position. Then $\gdep$ is a full GTGD, and (Principal+Transitivity) can be applied in the same way.

  We finish by exemplifying the purpose of considering the rules of
  $\Sigma_{\triv}$, in the case of (Principal+Transitivity). Consider the following GTGDs, where the first is non-full and
  the second is full:
  \begin{align*}
    \gdep_1': R(x_1 \ldots x_n), U_1(x_n) & \rightarrow \exists x_{n+1} \cdots x_{2n} S(x_n \ldots x_{2n})\\
    \gdep_2': S(x_n \ldots x_{2n}), U_2(x_n) & \rightarrow U_3(x_n)
  \end{align*}
  where $U_1$ and $U_2$ and $U_3$ are unary side relations and the other
  relations are principal. The following is a derived suitable full GTGD:
  \[
    \gdep_3': R(x_1 \ldots x_n), U_1(x_n), U_2(x_n) \rightarrow U_3(x_n)
  \]
  To derive it and add it to $\wclo{\Sigma}$, we use (Principal+Transitivity)
  with the following trivial full GTGD
  of~$\Sigma_{\triv}$:
  \[
    \gdep': R(x_1 \ldots x_n), U_1(x_n), U_2(x_n) \rightarrow R(x_1 \ldots x_n)
  \]
  This GTGD is not suitable, because it does not satisfy the width bound; but it
  is needed to give us a way to have $\beta = \{R(x_1 \ldots x_n), U_1(x_n),
  U_2(x_n)\}$.

\end{exa}

Clearly, by definition, all the GTGDs of $\wclo{\Sigma}$ are suitable.
We now state and prove that the computation of the
saturation can be performed efficiently:

\begin{lem}
  \label{lem:computeclosure}
  There is a polynomial $P$ such that,
  for any set $\Sigma$ of 
  GTGDs of width at most~$w$
  which strongly
  obey side signature $\sidesign$,
  letting $a$ be the arity of the signature, $n'$ the number of relations in the
  side signature, and
  $a'$ the arity of the side signature,
  we can compute
  $\wclo{\Sigma}$ in time
  $P(|\Sigma|, a^{O(w)}, 2^{n' \times w^{a'}})$.
\end{lem}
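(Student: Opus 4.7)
The plan is to compute $\wclo{\Sigma}$ by a straightforward fixpoint iteration. First I would initialize $\wclo{\Sigma}$ with the suitable full GTGDs of $\Sigma$ together with the trivial suitable full GTGDs: the latter are obtained by enumerating each principal head atom of~$\Sigma$, then each subset $X$ of at most $w$ of its variables, then each set of side atoms on~$X$. The same counting argument as in Lemma~\ref{lem:numbersuitable} shows that the number of such initial GTGDs, as well as the time to enumerate them, fit within the stated bound. In parallel I would materialize $\Sigma_{\triv}$ explicitly; its elements have bounded size even if not bounded width, and the same counting yields $|\Sigma_{\triv}| \leq |\Sigma|\,(a+1)^w\,2^{n' w^{a'}}$.

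Next I would iterate the closure step. Before each round, for every body $\beta$ appearing in some GTGD of $\wclo{\Sigma}\cup\Sigma_{\triv}$, I would collect the ``head pool'' $H(\beta)$ consisting of the heads of all GTGDs in $\wclo{\Sigma}\cup\Sigma_{\triv}$ whose body equals~$\beta$ up to variable renaming. The key observation is that the particular choice of $B_1,\ldots,B_n$ in the (Transitivity) rule is immaterial: a new GTGD $\beta\rightarrow\upsilon(\rho')$ is derivable from some $\beta'\rightarrow\rho'$ in $\Sigma\cup\wclo{\Sigma}$ iff there exists a homomorphism $\upsilon$ from $\beta'$ into $\beta\cup H(\beta)$, in which case the needed $B_j$'s can be read off as the atoms of $H(\beta)$ hit by~$\upsilon$. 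So for each pair $(\beta,\beta'\rightarrow\rho')$ I would search for such homomorphisms and, for each one, add $\beta\rightarrow\upsilon(\rho')$ to $\wclo{\Sigma}$ provided the result is suitable and not already present. The (Principal+Transitivity) rule is handled analogously, but the outer loop additionally ranges over a principal GTGD $\dep_{cc}\in\Sigma$ and a GTGD $\dep'\in\wclo{\Sigma}$ whose principal guard is isomorphic to the head of~$\dep_{cc}$; the homomorphism sought is from $A_{cc}\wedge\beta_{cc}\wedge\beta'$ into $\beta\cup H(\beta)$.

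To bound the running time, the outer loop executes at most $|\wclo{\Sigma}|$ productive iterations, because each such iteration adds at least one GTGD to $\wclo{\Sigma}$; by Lemma~\ref{lem:numbersuitable} this count is at most $|\Sigma|^2(a+1)^{3w}2^{n' w^{a'}}$. The per-iteration cost is dominated by the homomorphism searches. The essential observation is that $\beta'$ is the body of a guarded TGD and therefore contains a guard atom $A$ of arity at most $a$ that collects all of its variables. Hence any homomorphism $\upsilon$ from $\beta'$ is determined by the image of $A$: I would enumerate the atoms of $\beta\cup H(\beta)$ that can serve as that image (polynomially many in~$|\wclo{\Sigma}|$), and for each candidate image, fix the induced assignment of the variables of $\beta'$ and then verify in polynomial time that every remaining atom of $\beta'$ also maps into $\beta\cup H(\beta)$. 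Checking suitability of the resulting GTGD and whether it is already in~$\wclo{\Sigma}$ (up to variable renaming) is likewise polynomial.

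The main obstacle is that a na\"ive reading of the inference rules would ask us to enumerate tuples of GTGDs sharing a common body and then search for homomorphisms into conjunctions of unbounded size, which would blow past the budget. Collapsing all such GTGDs into the single pool $H(\beta)$, and then leveraging the guarded structure of $\beta'$ (and of $A_{cc}\wedge\beta_{cc}\wedge\beta'$ for the second rule) to reduce homomorphism search to a bounded enumeration of guard-atom images, is what keeps each iteration within $P(|\Sigma|,a^{O(w)},2^{n' w^{a'}})$. Multiplying by the iteration count, which is absorbed by the same polynomial, yields the claimed running time.
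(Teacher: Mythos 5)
Your proposal is correct and follows essentially the same route as the paper's proof: bound $|\wclo{\Sigma}|$ and $|\Sigma_{\triv}|$ via the counting argument of Lemma~\ref{lem:numbersuitable}, iterate to a fixpoint with at most $|\wclo{\Sigma}|$ productive rounds, pool all heads sharing a common (canonicalized) body so that the choice of $B_1,\ldots,B_n$ becomes immaterial, and exploit guardedness so that each candidate homomorphism is determined by the image of the principal guard atom and can be checked in polynomial time. The paper's proof is the same argument with the additional bookkeeping of canonicalizing rule bodies before grouping them.
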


\begin{proof}
  From our bound in Lemma~\ref{lem:numbersuitable}, and knowing that the full GTGDs
  in the saturation are suitable,
  we know that the maximal size of~$\wclo{\Sigma}$ satisfies our running time
  bound. 
  We also know by an immediate variant of Lemma~\ref{lem:numbersuitable} that
  the number of GTGDs in $\Sigma_{\triv}$ satisfies the running time bound,
  because GTGDs of~$\Sigma_{\triv}$ can be obtained by picking the body of a
  suitable GTGD, and then picking a head which is identical to the guard atom.
  So in the algorithm below, we materialize $\Sigma_{\triv}$.

  We can compute $\wclo{\Sigma}$ by iterating the possible production of rules until
  we reach a fixpoint, so it suffices to show
  that at each intermediate state of~$\wclo{\Sigma}$,
  testing every possible inference rule application is in $\ptime$ in 
  $|\wclo{\Sigma} \cup \Sigma|$ for $\wclo{\Sigma}$ the set of derived
  suitable full GTGDs that have currently been computed.

  We first explain how to test for applications of the  (Transitivity)
  inference rule.
  We first make the bodies of the rules of~$\wclo{\Sigma}$ canonical
  by first giving canonical names to the
  variables of the principal guard atom, in the order in which they appear, and
  then sorting the side signature atoms in some canonical way (e.g., by
  relation, then by lexicographic order on the variables). This can be
  done in polynomial time, and makes it easy to
  regroup all the GTGDs of~$\wclo{\Sigma}$ that have an isomorphic body, and
  then try each possible body as a choice of~$\beta$.

  Now, for each choice of body~$\beta$, we can consider all GTGDs
$\beta \rightarrow
    B_1(\vec z_1), \ldots, \beta \rightarrow B_n(\vec z_n)$
    of $\wclo{\Sigma}$ having body~$\beta$,
  and consider the union~$H$ of their
  heads. Indeed, note that when applying (Transitivity) we can always assume
  without loss of generality that we consider all full GTGDs of~$\wclo{\Sigma}$
  having the body~$\beta$, because having more such GTGDs will give us more rule
  heads $B_j(\vec z_j)$, which makes it easier to apply (Transitivity) to some
  choice of GTGD $\beta' \rightarrow \rho$. 
  Now, we enumerate all GTGDs in~$\Sigma \cup \wclo{\Sigma}$ and we
  test whether their body $\beta'$ can be mapped homomorphically to
  $\beta \cup H$, and apply the inference rule (Transitivity) if that is the case. We must argue that this test can be done in $\ptime$.
  We do this by considering each principal atom $A$ in~$\beta \cup H$,
  testing in $\ptime$ for each of them whether 
  the principal guard of~$\beta'$ can be mapped homomorphically to~$A$,
  and see whether the
  mapping thus defined is a homomorphism from~$\beta'$ to~$\beta\cup H$.
  If this is the case, 
  we can build the new full GTGD
  and add it to~$\wclo{\Sigma}$ if it is suitable (which is easy to verify).

  We now explain how to test for possible applications of the
  (Principal+Transitivity) inference rule.
  We make the GTGD bodies of~$\wclo{\Sigma} \cup \Sigma_{\triv}$ canonical like in the previous
  case: consider every
  GTGD body $\beta$ in~$\wclo{\Sigma} \cup \Sigma_{\triv}$, and for each~$\beta$ consider all GTGDs
  of~$\wclo{\Sigma}$ having $\beta$ as body.
  Now, we consider every choice of a principal GTGD $\dep_{cc}$
  of~$\Sigma$ and a full GTGD $\dep'$ of~$\wclo{\Sigma}$. We test if these
  satisfy the conditions to apply the inference rule (Principal+Transitivity). Reusing the
  notations used in the inference rule,
  we check that the principal guard~$A'$ of~$\gdep'$ is isomorphic to $H_{cc}$: if
it is not then we cannot apply the inference rule for this choice of
  $\dep_{cc}$ and $\dep'$; if is is
  isomorphic then this defines the isomorphism~$\iota$ and we continue
  with trying to apply the inference rule.
  We now define $\beta''$ and $H''$ and check that $A_{cc} \land \beta_2 \land
  \beta''$ maps homomorphically to~$\beta \land \bigwedge_j B_j(\vec z_j)$ by defining first the
  homomorphism from $A_{cc}$ to the principal atom of~$\beta$, and checking if this
  mapping is a homomorphism from $A_{cc} \land \beta_2 \land
  \beta''$ to~$\beta \land \bigwedge_j B_j(\vec z_j)$.
  If this check succeeds, we build the new full GTGD and add it to~$\wclo{\Sigma}$ if it is suitable. As noted earlier, suitability is easy to verify. Again, this can all be performed in $\ptime$, since we are dealing with sets of ground atoms guarded by a single (unique) principal ground atom, and thus there is only one possible isomorphism.

  Hence, we can compute the process by testing every possible inference rule application
  on the current $\wclo{\Sigma}$, building the possible new GTGDs, and adding them
  to~$\wclo{\Sigma}$ if they are new and suitable. This can be done in polynomial time in the dependencies $\Sigma$ and in the current size of the set $\wclo{\Sigma}$.
  We can continue this process
  as long as the size of~$\wclo{\Sigma}$ increases. From the bound on the
  maximal size of~$\wclo{\Sigma}$, we conclude that the running time bound is
  respected.
\end{proof}

Let us now conclude the proof of Theorem~\ref{thm:smallsaturation} by showing
that the full GTGDs of~$\Sigma$, together with $\wclo{\Sigma}$, are actually a
childish saturation in the sense of Definition~\ref{def:childsaturation}. Towards showing this, we first establish that $\wclo{\Sigma}$ contains all derived suitable full GTGDs:

\begin{clm} \label{clm:derived}
  The set of GTGDs $\wclo{\Sigma}$ is exactly
  the set of derived suitable full GTGDs.
\end{clm}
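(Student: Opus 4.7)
\myproof (proof sketch of Claim~\ref{clm:derived}): The plan is to prove the two inclusions separately. The inclusion $\wclo{\Sigma} \subseteq$ (derived suitable full GTGDs) is the soundness direction, and the reverse inclusion is the completeness direction.

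For soundness, I would argue by induction on the construction of $\wclo{\Sigma}$. The base cases consist of the suitable full GTGDs of $\Sigma$, which are trivially derived, and the trivial suitable full GTGDs, whose head already appears in their body and which are therefore logically entailed by any set of TGDs (in particular $\Sigma$). Suitability is immediate since $\wclo{\Sigma}$ only adds a GTGD when it passes the suitability test. For the inductive step, I need to show that (Transitivity) and (Principal+Transitivity) each preserve logical entailment by $\Sigma$. Transitivity corresponds to composing full GTGDs via a homomorphism, which is a standard derivation in first-order logic. Principal+Transitivity is slightly more delicate: given an instance satisfying $\beta$, one first applies the inductively derived full GTGDs to produce the $B_j(\vec z_j)$, then applies $\dep_{cc}$ to introduce a witness for the existential with its head atom $H_{cc}$, under which the side atoms of $\dep'$ are matched via~$v$, so $\dep'$ then applies and produces $v(H')$; because $v(H')$ only uses variables appearing in $\beta$ (by the hypothesis that $\beta'$ and $H'$ only use exported variables of $\dep_{cc}$), the existential witness disappears and the resulting rule is a full GTGD entailed by $\Sigma$.

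For completeness, I would take an arbitrary derived suitable full GTGD $\delta \colon \beta \rightarrow H$ and show $\delta \in \wclo{\Sigma}$. View the body $\beta$ as an instance $I_\beta$ by treating its variables as distinct constants. Since $\delta$ is suitable, $I_\beta$ is (up to renaming constants) a childish instance, and by assumption $I_\beta, \Sigma \models H$. I would then apply Theorem~\ref{thm:exempt-one-pass-proof-exists} to obtain a principal-exempt one-pass chase proof~$T_0, \dots, T_n$ of $H$ from $I_\beta$ and $\Sigma$. The core argument is an induction over the subtrees of the final chase tree, showing the following invariant: for every node $v$ with principal fact $F_v$, and for every fact $G$ which eventually propagates back to $v$ (or the head fact in the case $v$ is the root), there is a derived suitable full GTGD in $\wclo{\Sigma} \cup \Sigma_\triv$ whose body consists of~$F_v$ together with the side facts inherited into the subtree at~$v$, and whose head is~$G$.

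The main obstacle will be organizing this induction so that the suitability conditions (breadth, width, compatibility) are maintained at every step, which is exactly why the saturation only produces \emph{suitable} GTGDs rather than arbitrary derived ones. For a leaf application of a full GTGD of~$\Sigma$, the invariant holds using the initial full GTGDs of the saturation (or a GTGD of~$\Sigma_\triv$ if the body needs to be padded to include the principal guard). For an internal node where a child $c$ was created by firing a non-full principal GTGD~$\dep_{cc}$, producing some side fact~$G$ that then propagates back to~$v$, I apply the induction hypothesis to the subtree at~$c$ to obtain a derived full GTGD of the form $H_{cc} \wedge (\text{inherited side facts}) \rightarrow G$; together with the derived full GTGDs at~$v$ that produce the $B_j(\vec z_j)$ used as side atoms to fire $\dep_{cc}$, the (Principal+Transitivity) rule then adds exactly the required full GTGD to $\wclo{\Sigma}$. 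For a firing of a side or full GTGD on the same node, (Transitivity) suffices. Because the chase is principal-exempt one-pass, each principal fact lives in only one node, so the propagation we must account for is exactly the side-signature propagation, and the breadth bound on the side atoms in the derived GTGDs follows from the width bound on the inherited side facts at each node; the role of $\Sigma_\triv$ is to supply ``identity'' derivations of the principal guard when needed to match the format of the inference rules. Applying this at the root $r$ of the chase tree with the target fact~$H$ yields~$\delta \in \wclo{\Sigma}$, as required.
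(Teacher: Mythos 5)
Your proposal is correct and follows essentially the same route as the paper: soundness by induction on the construction of $\wclo{\Sigma}$, and completeness by instantiating the body of a derived suitable full GTGD as a (childish) instance, invoking Theorem~\ref{thm:exempt-one-pass-proof-exists} to get a principal-exempt one-pass chase proof of the head, and then inducting on that proof, using (Transitivity) for root-level firings and (Principal+Transitivity) for derivations routed through a child node, with $\Sigma_\triv$ supplying the trivial padding. The paper phrases the induction on the \emph{length} of the chase derivation rather than on subtrees, but the decomposition at the root's children and the verification of the suitability conditions are the same as what you sketch.
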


\begin{proof}
  It is clear by definition that the full GTGDs in $\wclo{\Sigma}$ are all suitable,
  and an immediate induction shows that they are all derived (i.e., they all
  logically follow from~$\Sigma$). %
  So let us focus on the converse:
  let us consider a derived suitable full GTGD $\gdep: A
  \land \beta \rightarrow H$, and show that it is in $\wclo{\Sigma}$.
  From the width bound on~$\gdep \in \Sigma$, since $\gdep$ is full, we know
  that $H$ has at most~$w$ different
  variables.

  As $\gdep$ is derived, let us pick any 
  isomorphism $\iota$ to transform the principal guard $A$, the side atoms
  $\beta$, and the head $H$
  respectively into:
  a principal fact $A_0 := \iota(A)$, a set of side facts $\beta_0 :=
  \iota(\beta)$, and a fact
  $H_0 := \iota(H)$.  Let us consider a proof of $H_0$ from $\inst_0 = \{A_0\}
  \cup \beta_0$. 
  Using Theorem~\ref{thm:exempt-one-pass-proof-exists}
  let us more precisely take a principal-exempt
  one-pass chase proof of~$H_0$
  from~$\inst_0$ using the GTGDs of $\Sigma$.
  As $\Sigma$ is homomorphism-closed, we can assume without loss of generality that, in all
  GTGD firings, no two distinct exported variables are mapped to the
  same domain element.

  We show that $\gdep$ is in $\wclo{\Sigma}$ by
  induction on the length of such a principal-exempt 
  one-pass chase derivation
  of~$H_0$ from~$\inst_0$ using~$\Sigma$. The
  base case corresponds to the case of an empty derivation, in which case $H_0
  \in \inst_0$ so that $\gdep$ is a trivial GTGD and we immediately conclude because
  all trivial suitable full GTGDs are in~$\wclo{\Sigma}$ by definition. %

  To show the induction step, assume without loss of generality that the
  principal-exempt 
  one-pass chase proof of~$H_0$ from~$\inst_0$ finishes by deriving
  $H_0$, and let $v$ be the node on which the last chase step is performed to
  derive~$H_0$. We distinguish two cases,
  depending on whether $v$ is the root node $v_0$ of the
  tree-like chase sequence, or whether $v$ is a strict
  descendant of~$v_0$.

  \medskip
  \textbf{Case 1: the last firing is performed on the root~$v_0$.}
  If $v = v_0$, then the last chase step fired a GTGD $\gdep'$
  of~$\Sigma$ on a trigger~$\trig$. The GTGD $\gdep'$  must be a full GTGD because it derived the fact $H_0$,
  which contained only values from~$v_0$.
  In the trigger~$\trig$, the principal guard
  of~$\gdep'$ was mapped to the principal guard~$A_0$ of~$v_0$. We can assume
  by homomorphism-closure that the mapping
  is a bijection on the exported variables.
  If $\gdep'$ has no side atoms,
  then we conclude immediately by applying (Transitivity) with
  the full trivial GTGD $A \land \beta \rightarrow A$ of $\Sigma_{\triv}$, 
  together with $\gdep'$. So in what follows we assume that $\gdep'$ has some
  side atoms.

  The side atoms of~$\gdep'$ were mapped to some
  facts $\beta'_0$ of~$v_0$ which were either part of~$\beta_0$ or were derived
  earlier by the chase sequence.
  In other words, for each fact $F$ of~$\beta'_0$, there is a principal-exempt one-pass chase proof of~$F$
  from~$\inst_0$ (possibly of length~$0$), which witnesses that the full GTGD $\gdep_F: A \land \beta \rightarrow
  \iota^{-1}(F)$ is a logical consequence of~$\Sigma$.
  Now, we can argue that, for each~$F$, the GTGD $\gdep_F$ is
  suitable.  It is obviously a full GTGD. In terms of the compatibility requirements, it has the same body as the suitable full GTGD $\gdep$ so 
it satisfies the
  breadth bound.
  It has
  exactly one principal guard, the principal guard is $\Sigma$-compatible, and its head is a side atom so it is also  $\Sigma$-compatible. 
  Finally, its head atom is for a side signature relation, 
  so the width of $\gdep_F$ is at most the arity of that atom, i.e., at most 
  the side signature arity.
  
  Further, $\gdep_F$ is a derived suitable full GTGD, which has 
  a strictly shorter principal-exempt 
  one-pass chase
  proof from~$\inst_0$ and~$\Sigma$.
  Thus, by the induction hypothesis,
  $\gdep_F \in \wclo{\Sigma}$. Now, applying
  (Transitivity) to the full GTGDs $\gdep_F$ of~$\wclo{\Sigma}$
  for the non-empty set $F \in \beta_0'$ and to
  the full GTGD $\gdep'$ of~$\Sigma$, we conclude that our initial full GTGD
  $\gdep$, which is suitable by hypothesis, was in~$\wclo{\Sigma}$.

  \medskip
  \textbf{Case 2: the last firing is performed on a strict descendant of the root node ~$v_0$.}
  If the last firing was performed on a node $v \neq v_0$, then 
  $v$ is a strict descendant of~$v_0$.
  Let $v_1$ be the child of~$v_0$ which is an
  ancestor of~$v$; possibly $v = v_1$. Let $A_1$ be the principal guard of~$v_1$, and let $\beta_1$
  be the facts inherited from $v_0$ when creating~$v_1$: 
  by definition of the principal-exempt one-pass chase, the facts of~$\beta_1$
  are all side facts.
  Let $\gdep_{cc}$ be the GTGD of~$\Sigma$ (full or non-full)
  which was fired earlier to create node~$v_1$. By definition of the
  principal-exempt 
  one-pass chase, $\gdep_{cc}$ is a principal GTGD, and its head 
  was instantiated to~$A_1$. Let $\trig_{cc}$ be the trigger fired when
  creating~$v_1$. We know that $\trig_{cc}$ consists of the principal guard~$A_0$ of
  the root~$v_0$ 
  together with some side facts $\beta_0'$ which were present in~$v_0$ at that moment. And
  thanks to homomorphism-closure we know the following fact (*): in~$\trig_{cc}$ 
  no two distinct exported variables are mapped to the same element.
  Let us rename the exported variables of $\gdep_{cc}$ to match the variables of the
  guard $A$ of~$\gdep$, and let us rename the existentially quantified variables
  of~$\gdep_{cc}$  to use fresh variable names.
  We can extend the isomorphism $\iota$, which was originally defined on the
  variables of~$A$, and extend it to an isomorphism defined on the variables of
  $A$ and also on the head of~$\gdep_{cc}$, which is mapped 
  to the fact~$A_1$.
  We can then split the chase sequence into three successive parts:
  \begin{itemize}
    \item The initial part, which starts with $\inst_0$ and creates the facts of $\beta_0'$
      in~$v_0$ in some order.
      For each fact
      $F \in \beta_0' \setminus \beta_0$, we now reason as in Case 1 above
      to show that the full GTGD $\gdep_F: A \land \beta
      \rightarrow \iota^{-1}(\beta_0')$ is suitable and is in
      $\wclo{\Sigma}$: the latter uses the induction hypothesis.
    \item The firing of~$\trig_{cc}$ for the principal GTGD $\gdep_{cc}$ of~$\Sigma$, which
      creates~$v_1$ containing the head instantiation $A_1$ along with some facts
      $\beta_1 \subseteq \beta_0'$ inherited from~$v_0$.
    \item The subsequent part of the chase sequence, which is performed in the subtree
      rooted at~$v_1$ by definition of the one-pass chase, and which derives the fact~$H_0$ from
      $I_1 = \{A_1\} \cup \beta_1$. Note that $H_0$ must be on elements
      shared between~$A_0$ and~$A_1$
  \end{itemize}
  Now, let us consider the full GTGD $\gdep'' : \iota^{-1}(A_1) \land \iota^{-1}(\beta_1)
  \rightarrow H$. The third bullet point above witnesses that this
  full GTGD is a logical consequence of~$\Sigma$. Let us show that it is suitable. It
  has precisely one principal guard.  It obeys the width bound because $H$ is
  the head of~$\gdep$ which is suitable by assumption so it has at most $w$ different variables.
  It obeys the breadth bound because
  $\beta_1$ uses at most $w$ different elements of~$A_1$ thanks to the width
  bound on~$\gdep_{cc}$. Now let us verify the compatibility conditions. The principal guard is $\Sigma$-compatible because $A_1$
  was created by instantiating the head of~$\gdep_{cc}$ (without identifying any
  variables thanks to (*)), so $\iota^{-1}(A_1)$ is actually exactly the head
  of~$\gdep_{cc}$. And the head $H$ is the head of
  $\gdep$, and since $\gdep$ is suitable its head is $\Sigma$-compatible. Thus, $\gdep''$ is a
  suitable derived full GTGD with a strictly shorter chase proof. By induction hypothesis, $\gdep'' \in \wclo{\Sigma}$.

  We now distinguish two subcases: either $\beta'_0 \setminus \beta_0$ is empty,
  or it is non-empty. We first deal with the subcase where it is empty. Then we
  want to apply the inference rule (Principal+Transitivity) with $\gdep_{cc} \in
  \Sigma$, $\gdep'' \in \wclo{\Sigma}$, and the trivial full GTGD of
  $\Sigma_{\triv}$ with same body as~$\gdep$, namely: $A \land \beta
  \rightarrow A$. Let us show that the inference rule is indeed applicable. We know that
  the principal guard of~$\gdep''$ is exactly the head of~$\gdep_{cc}$. The
  facts of $\beta_1$ were inherited from~$v_0$, and since $\beta'_0 \setminus
  \beta_0$ is empty they were all part of~$\beta_0$. So the side
  atoms $\iota^{-1}(\beta_1)$ of~$\gdep''$, together with the body of
  $\gdep_{cc}$, can be homomorphically mapped to $A \land \beta$, as required to
  apply the rule.

  In the subcase where $\beta'_0 \setminus \beta_0$ is non-empty, we want to
  apply the inference rule (Principal+Transitivity) again with $\gdep_{cc} \in \Sigma$
  and $\gdep'' \in \wclo{\Sigma}$, but this time together with  
  the non-empty set of the $\gdep_F$ in~$\wclo{\Sigma}$ for $F \in \beta'_0
  \setminus \beta_0$ instead of using a trivial full GTGD of~$\Sigma_{\triv}$.
  Let us show that the inference rule is indeed applicable. We already know that
  the principal guard $\iota^{-1}(A_1)$ of $\gdep''$ is exactly the head
  of~$\gamma_{cc}$.
  The facts of $\beta_1$ were inherited from~$v_0$, and we have
  explained that they were part of~$\beta_0$ or were created
  by instantiating the heads of the GTGDs $\gdep_F$. Further, the body
  of~$\gdep_{cc}$ can be mapped to $\beta_0'$, so these facts were also part
  of~$\beta_0$ or were created by instantiating the heads of the GTGDs
  $\gdep_F$.
  Thus, the side atoms $\iota^{-1}(\beta_1)$
  of~$\gdep''$, together with the body of
  $\gdep_{cc}$, can be homomorphically mapped to the conjunction of
  $A \land \beta$ and of the heads of the GTGDs $\gdep_F$, as required to
  apply the rule.

  The application of (Principal+Transitivity) deduces $A \land \beta \rightarrow H$,
  namely our initial full GTGD
  $\gdep$, which is suitable by hypothesis. So $\gdep$ is in~$\wclo{\Sigma}$
  also in the second case. This concludes the proof.
\end{proof}

We can now conclude the proof of Theorem~\ref{thm:smallsaturation}, which will be direct from
Claim~\ref{clm:derived}:

\begin{proof}[Proof of Theorem~\ref{thm:smallsaturation}]
The running time bound was shown in
  Lemma~\ref{lem:computeclosure}, so we must only show that $\wclo{\Sigma}$ is a childish saturation as in
  Definition~\ref{def:satur}. We know from the easy direction of
  Claim~\ref{clm:derived} that all GTGDs in~$\wclo{\Sigma}$ are logically entailed
  by~$\Sigma$. In the rest of the proof, we show that $\wclo{\Sigma}$ is
  complete for fact entailment on childish instances.

  Let $I$ be a childish instance, let $F$ be a fact on the domain of~$I$
  such that $I, \Sigma \models F$.
  Let us show that $\inst, \wclo{\Sigma}
  \models F$. If $F \in \inst$, then there is nothing to show, so we assume that
  $F \notin \inst$. Now, considering a chase proof of $\inst,
  \Sigma \models F$, we know that $F$ must be isomorphic to the head of a GTGD
  of~$\Sigma$, namely, that of the last GTGD $\gdep_H$ which is
  fired in a chase proof of~$F$ from~$\inst$ with~$\Sigma$. Note that this uses homomorphism-closure of $\Sigma$, which implies that
  in rule firings, we can always assume that all exported variables of rules are mapped to
  distinct elements. We also know, from the width bound on~$\Sigma$, that $F$
  uses at most $w$ distinct elements of~$\inst$, and since $F$ is a fact on the
  domain of~$\inst$ it uses at most~$w$ distinct elements overall.

  The childish instance $\inst$ consists of a
  principal fact $F'$ which is an isomorphic copy of the head of
  some GTGD $\gdep_B$ from~$\Sigma$,
  together with some side facts $\inst'$ on at most $w$ elements of~$F'$.  Let
  $\gdep$ be the full GTGD obtained by renaming the elements of $\inst$ and~$F$
  from constants to variables, with $\inst$ giving the body and $F$ giving the
  head. We claim that $\gdep$ is a suitable derived full
  GTGD. Indeed:
  \begin{itemize}
    \item $\gdep$ has exactly one principal guard because $\inst$ contains
      exactly one principal fact $F'$ which contains all elements of
      $\adom(\inst)$.
    \item $\gdep$ is $\Sigma$-compatible, as witnessed by~$\gdep_H$ for the head
      atom and $\gdep_B$ for the principal guard atom.
    \item $\gdep$ has width at most~$w$, because $F$ uses at most $w$ distinct
      elements.
    \item $\gdep$ has breadth at most~$w$ because $I'$ uses at most $w$
      elements of~$F'$.
  \end{itemize}
  Thus, $\gdep$ is a suitable full GTGD. Further, $\gdep$ is a derived
  suitable full GTGD, because the chase proof of $I,
  \Sigma \models F$ witnesses that $\gdep$ is logically entailed by~$\Sigma$.
  Hence, by Claim~\ref{clm:derived}, we immediately conclude that $\gdep \in
  \wclo{\Sigma}$. Hence, $I, \wclo{\Sigma} \models F$, which is what we wanted
  to show.
\end{proof}

\section{Fact closure and making the chase similar to a linear chase} \label{sec:shortcut}
We are now ready for the second 
stage of our proof. For now, we have used an analysis
of the principal-exempt one-pass tree-like chase to design an algorithm that
computes a saturation of a finite set of GTGDs assuming it is given a childish instance.

We first show that we can use childish saturations to deal with the fact
entailment problem on arbitrary instances.

\begin{defi}
  Given an instance $\inst$ and GTGDs $\Sigma$, we say that $\inst$  is 
  \emph{$\Sigma$-fact-saturated}
if it is closed under facts entailed on the same domain.
  That is, $\inst$ is $\Sigma$-fact-saturated
  if for any fact $F$ over $\adom(\inst)$, if $\inst, \Sigma \models F$, then $F$ is already in $\inst$.

\end{defi}

\begin{defi} If  $\inst$ is a subinstance of $\inst'$ we say $\inst'$ is \emph{$\inst$-deactivated} if $\inst'$ has no active triggers with image in $\inst$.
\end{defi}

\begin{prop} \label{prop:factclosure} There is an algorithm that takes as input
  a set of GTGDs $\Sigma$ that strongly obeys side signature $\sign'$, along with
  an instance $\inst$, and performs chase steps with $\Sigma$ and with a
  childish saturation $\wclo{\Sigma}$ of~$\Sigma$ 
  to obtain an $\inst' \supseteq \inst$ that is $\Sigma$-fact-saturated and $\inst$-deactivated, 
  in
  time $\mathrm{Poly}(|\inst|^{O(w)}, |\Sigma|, a^{O(w)}, 2^{n' \times w^{a'}})$, where $a, w, n', a'$ are as in Lemma~\ref{lem:computeclosure}.
\end{prop}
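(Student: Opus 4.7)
The plan is to compute the childish saturation $\wclo{\Sigma}$ first via Theorem~\ref{thm:smallsaturation}, and then iteratively enlarge $\inst$ to $\inst'$ by alternating two operations until fixpoint. The first operation directly applies every full TGD of $\Sigma \cup \wclo{\Sigma}$ on the current $\inst'$, adding each head fact whose domain lies in $\adom(\inst)$. The second operation simulates the effect of creating a virtual child via a non-full principal TGD: for each $\gdep_{cc} \in \Sigma$ and each valuation of its exported variables to $\adom(\inst')$ that extends to a trigger in $\inst'$, we form a childish instance $\inst_{ch}$ consisting of an instantiation of the head $H_{cc}$ with fresh existential values, together with the side facts of $\inst'$ whose domain is contained in the exported elements; we compute the $\wclo{\Sigma}$-closure of $\inst_{ch}$, and add back to $\inst'$ every derived side fact whose domain lies entirely within the shared (exported) elements.

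For correctness, I would argue by induction on a principal-exempt one-pass chase proof (Theorem~\ref{thm:exempt-one-pass-proof-exists}) of any fact $F$ on $\adom(\inst)$ entailed by $\inst, \Sigma$. Along such a proof, the root gains new facts in only two ways: through direct firings of full TGDs, which are captured by the first operation; or through side facts propagated back up from a child subtree. In the latter case, the initial contents of any child form a childish instance, so by the completeness of $\wclo{\Sigma}$ for fact entailment on childish instances (Definition~\ref{def:childsaturation}), every side fact that may eventually propagate back is produced by the closure computed in the second operation. I expect the main conceptual subtlety to lie here: a non-full TGD $\gdep_{cc}$ may have body breadth exceeding $w$, so the composed rule that would directly produce the propagated fact from $\inst'$ is not suitable and hence is not present in $\wclo{\Sigma}$; the second operation is precisely what compensates for this, by firing $\gdep_{cc}$ explicitly rather than trying to absorb its effect into the saturation.

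For the running time, $\wclo{\Sigma}$ is computed within the stated bound by Theorem~\ref{thm:smallsaturation}. Each fact added by either operation is the head of a GTGD of width at most $w$, so it has at most $w$ distinct elements from $\adom(\inst)$; hence the total number of added facts is bounded by $(|\Sigma|+|\wclo{\Sigma}|) \cdot |\adom(\inst)|^w$, which fits into $\mathrm{Poly}(|\inst|^{O(w)}, |\Sigma|, 2^{n' \cdot w^{a'}})$. Each iteration enumerates, for every $\gdep_{cc}$, at most $|\inst'|^w$ valuations of the exported variables (testing extension to a trigger via standard CQ evaluation), each followed by a closure computation on a childish instance of bounded size using $\wclo{\Sigma}$; the first operation similarly involves polynomially many matches against the $|\Sigma|+|\wclo{\Sigma}|$ available full rules on $\inst'$. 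Multiplying by the number of fixpoint iterations keeps the total running time within $\mathrm{Poly}(|\inst|^{O(w)}, |\Sigma|, a^{O(w)}, 2^{n' \times w^{a'}})$, as required.
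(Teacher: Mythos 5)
Your overall strategy is the same as the paper's: compute the childish saturation $\wclo{\Sigma}$, then run a depth-one truncated chase in which full rules of $\Sigma\cup\wclo{\Sigma}$ fire directly on the root while principal GTGDs spawn virtual children whose contents form childish instances, close those children under $\wclo{\Sigma}$, and pull facts back to the root. The running-time analysis also matches. However, there is a genuine gap in what you pull back from the virtual children: you add back only the derived \emph{side} facts whose domain lies in the exported elements. Fact-saturation requires that \emph{every} fact on $\adom(\inst)$ entailed by $\inst,\Sigma$ end up in $\inst'$, including principal facts, and such a principal fact can be derivable only inside a child subtree. For instance, with a non-full principal TGD $R(x,y)\rightarrow\exists z\, S(x,z)$ and a full TGD $S(x,z)\rightarrow T(x)$ with $T$ principal, the entailed fact $T(a)$ arises only in the child containing $S(a,n)$ for a null $n$; the composed rule $R(x,y)\rightarrow T(x)$ need not be in $\wclo{\Sigma}$, since its guard $R(x,y)$ need not be $\Sigma$-compatible (it need not be isomorphic to any head of $\Sigma$), so your first operation does not produce $T(a)$ either, and your algorithm misses it. The paper's construction explicitly flags this point: it deviates from the principal-exempt chase precisely by allowing newly derived principal facts on $\adom(F)\cap\adom(\inst)$ to be kept in the child and propagated back to the root. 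The fix is simply to add back all derived facts (principal and side) on the shared elements.

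The same omission infects your correctness argument: you induct over ``the ways the root gains new facts'' in a principal-exempt one-pass chase, but in that chase principal facts are never propagated or inherited, so an entailed principal fact on $\adom(\inst)$ may live only in a deep node and never reach the root at all. The induction must therefore be phrased (as in the paper) over entailed facts on $\adom(\inst)$ wherever they appear in the proof tree, locating the child $b'$ of the root below which the fact is first derived, observing that $b'$'s initial contents form a childish instance whose side facts are already in $\inst'$ by minimality, and invoking completeness of $\wclo{\Sigma}$ on that childish instance to conclude the fact is produced by the child's closure --- and then added back regardless of whether it is a side or a principal fact. A minor further remark: you route full principal TGDs through the first operation rather than through a virtual child as the paper does; this is harmless by monotonicity (the resulting childish instance is a subinstance of $\inst'$, on which $\wclo{\Sigma}$ is applied anyway), but it is worth saying explicitly once the principal-fact issue is repaired.
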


\begin{proof}
  We can compute a childish saturation $\wclo{\Sigma}$
  in the required time, by Theorem~\ref{thm:smallsaturation}. We then perform a
  variant of the %
  one-pass chase
  of~$\Sigma \cup \wclo{\Sigma}$ over~$\inst$, but truncated to only one level,
  i.e., to the root node (initially a single node containing the facts
  of~$\inst$) and to the immediate children of the root
  node. More precisely, the process repeatedly applies each of the 
  following kinds of operations until saturation, starting with a singleton tree made of a root node
  $v_0$ containing the instance~$\inst$:
\begin{itemize}
  \item Firing all full rules of $\Sigma$ and all rules of $\wclo{\Sigma}$ on $v_0$  to create new side facts
    on~$\adom(\inst)$, which are added to~$v_0$ (we do this whenever such new facts can
    be added to~$v_0$).
  \item%
    Firing all non-full rules of~$\Sigma$ on every active trigger which is
    included in $\adom(\inst)$, creating child nodes with a principal fact which is the head of the rule. We do this whenever such an active trigger
    exists, and create a new child $v$ containing the head fact (a principal
    fact) and containing as inherited facts all side facts of~$v_0$ that are
    guarded by the principal fact. Further, we fire all full GTGDs of $\wclo{\Sigma}$ on~$v$. After this the trigger is no longer active.
  \item For triggers in $\inst$ of principal GTGDs we create  a new child for them, saturate using  $\wclo{\Sigma}$, and propagate new $\adom(\inst)$ facts back to the root~$v_0$.
  More precisely, we do the following whenever it can add
    a new fact to~$v_0$:
    \begin{itemize}
      \item Fire a trigger for a principal GTGD of~$\Sigma$~-- i.e., a rule with
        a principal atom in the head, full or non-full~--  to create a new child node $v$ of~$v_0$
        containing one principal fact $F$ and possibly some inherited
        side facts.
      \item 
        Fire all full GTGDs of~$\wclo{\Sigma}$ on~$v$ to create new
        facts on $\adom(F)$, which are added to~$v$.
    
        Note that we deviate from the principal-exempt chase here in that we may
        create new principal facts on~$\adom(F)$, in which case they are added
        to the current node, i.e., they do not trigger the creation of a child
        node with a relaxed chase step. 
  
      \item Propagate all facts in~$v$ from $\adom(F) \cap \adom(\inst)$,
        rootwards to~$v_0$. 
     
        Note that we deviate again from the principal-exempt chase here because
        we are allowed to propagate principal facts rootwards~-- but because we never inherit principal facts, they
        will remain in $v_0$ and will not be inherited.
    \end{itemize}
\end{itemize}
The final instance $\inst'$ that we take is the set of the facts in
  $v_0$ at the end of this process.

We first claim that this process finishes in the required time bound.
  In the first top-level bullet, the number
  of possible facts to add to $v_0$ is upper bounded by~$|\adom(\inst)|^{O(a')}$
  for~$a'$ the side signature arity, because we only create side facts in this bullet item. As $w \geq a'$, this is upper bounded by $|\adom(\inst)|^{O(w)}$.
  For the second top-level bullet, the number of nodes we create is bounded by the number of possible head instantiations of non-full GTGDs of~$\Sigma$, which can be bounded by choosing the non-full GTGD and the choice of exported elements, i.e., by $|\Sigma| \times |\adom(\inst)|^{O(w)}$. Further, on each created node we fire all full GTGDs of $\wclo{\Sigma}$ which can create at most $|\wclo{\Sigma}| \times a^{O(w)}$ new facts in each node.
  For the third top-level bullet, the number of possible facts to create in $\inst$
  is upper bounded by $|\wclo{\Sigma}|$ (the number of rule heads) times again
  $|\adom(\inst)|^{O(w)}$ (the number of instantiations of the exported
  variables of the firing that creates the fact), so the number of nodes that we create is upper bounded by
  $|\wclo{\Sigma}| \times |\adom(\inst)|^{O(w)}$. For each created node we fire all full GTGDs of $\wclo{\Sigma}$, for which the bound is the same as for the previous bullet item.
 
  So the number of chase steps performed satisfies the bound. Further, going
  over the GTGDs of $\Sigma$ and $\wclo{\Sigma}$ and testing possible
  applications is in polynomial time in their size and in the current number of facts created,
  because we can test applicability by mapping the principal guards of GTGDs to
  each choice of principal fact, and then checking if this defines a
  homomorphism from the GTGD body, and if so add the head if it is a new fact.
  Remembering that the size of $\wclo{\Sigma}$ is bounded, following the running
  time bound of Lemma~\ref{lem:computeclosure}, we conclude that the running
  time bound is as required.

  It is clear that the process only uses chase steps from $\Sigma \cup \wclo{\Sigma}$.
  
  We next argue that $\inst'$ is fact-saturated, which will rely on bullet items one and three, and we also will use the completeness
  of~$\wclo{\Sigma}$ on childish instances. Assume by contradiction that $\inst'$ is not fact-saturated.
  That is, there is a fact $F$ on $\adom(\inst')$ which
  is not in $\inst'$ but is entailed by~$\inst'$ and~$\Sigma$. There are two cases: either $F$ contains an element of $\adom(\inst') \setminus \adom(\inst)$, or it does not. In the first case, the image of the trigger in~$\inst'$ must be in one of the child nodes of~$v_0$ that we created; but these nodes contained a childish instance when they were created, and we have applied all GTGDs of $\wclo{\Sigma}$ to them, so because $\wclo{\Sigma}$ is complete on childish instances we must have also derived $F$ in~$\inst'$, which is a contradiction. Hence, let us focus on the second case, when $F$ is a fact over $\adom(\inst)$.
  
  In this case, consider a principal-exempt  one-pass chase proof of~$F$
  from~$\inst$. 
  We can assume without loss of
  generality that $F$ is a minimal counterexample,  in the sense that, in this proof, it is the
  first fact on~$\adom(\inst)$ which is derived in the proof but not present
  in~$\inst'$. The GTGD firing that created~$F$ cannot have been applied
  to the root node~$v_0'$ of the principal-exempt one-pass chase, otherwise all hypotheses to the firing are facts on~$\adom(\inst)$
  which were present in~$\inst'$ by minimality, and so they were present
  in~$v_0$, so that $F$ should have been
  derived by the first bullet item above. Hence, the GTGD firing that
  created~$F$ must have applied in a strict descendant of~$v_0'$ in the
  principal-exempt one-pass chase. This node, call it $b$, is a descendant of a child
  node~$b'$ of~$v_0'$, and $b'$ in turn was created by firing a principal GTGD $\gdep$
  of~$\Sigma$ on~$\inst$. The assumption that $\gdep$ is principal is because,
  by Proposition~\ref{prp:strongobey}, non-principal GTGDs are full and so
  firing them does not create new nodes in the principal-exempt chase.
  By minimality, all hypotheses to fire $\gdep$ were present
  in~$\inst'$ as well, so the same firing could have been performed in our
  truncated variant 
  of the chase.
  And by minimality we would have created a node inheriting the same facts
  from the root nodes, i.e., with the same set of facts $\inst''$ that
  were present in~$b'$ when it was created in the chase proof of~$F$ from~$\inst$.

  We now use the fact that $\inst''$ is a childish instance, because $\gdep$ is
  a principal GTGD of~$\Sigma$ of width at most~$w$. Thus, $\inst''$ consists of
  one principal fact which is an isomorphic copy of a GTGD head of~$\Sigma$,
  together with side facts on at most~$w$ elements, corresponding to
  the exported variables of $\gdep$. Thus, the completeness of~$\wclo{\Sigma}$ on the childish
  instance $\inst''$ ensures that the fact $F$, which is a fact
  on~$\adom(\inst'')$, was also derived by applying the rules of~$\wclo{\Sigma}$
  in our chase variant. This establishes a contradiction and concludes the proof of the second case of our case analysis. Hence, $\inst'$ is fact-saturated.

  We last argue that $\inst'$ is $\inst$-deactivated. Assume by way of contradiction that $\inst'$ contains an active trigger for a GTGD $\gdep$ with image in~$\inst$. If $\gdep$ is full, then this contradicts fact-saturation which we established above. If $\gdep$ is non-full, then this trigger should have been fired in the second bullet point, which is also a contradiction. Hence, $\inst'$ is $\inst$-deactivated, which concludes the proof.
\end{proof}

We note a corollary that may be of independent interest:

\begin{cor} \label{cor:factclosureptime} For any fixed side signature $\sign$ and width $w$, there is a polynomial time algorithm that takes as input
  a set of GTGDs $\Sigma$ that strongly obeys side signature $\sign'$
  and have width at most~$w$, along with
  an instance $\inst$, and computes a $\Sigma$-fact-saturated instance $\inst'
  \supseteq \inst$. %

  In particular, the  fact entailment  problem  is in polynomial time for such GTGDs.
\end{cor}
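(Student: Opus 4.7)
The plan is to simply invoke Proposition~\ref{prop:factclosure} and then observe that under the extra hypotheses of the corollary its running-time bound collapses to a polynomial. Given the input instance $\inst$ and the GTGDs $\Sigma$ that already strongly obey the fixed side signature $\sign$ with width at most $w$, we apply the algorithm of Proposition~\ref{prop:factclosure} to produce a $\Sigma$-fact-saturated instance $\inst' \supseteq \inst$ in time $\mathrm{Poly}(|\inst|^{O(w)}, |\Sigma|, a^{O(w)}, 2^{n' \times w^{a'}})$, where $a$ is the arity of the whole signature, $n'$ is the number of side relations, and $a'$ is the side signature arity.

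Now we inspect each factor under the assumption that $\sign$ and $w$ are fixed. Since $\sign$ is fixed, both $n'$ and $a'$ are constants, so the factor $2^{n' \times w^{a'}}$ is a constant. Since $w$ is fixed, $|\inst|^{O(w)}$ is polynomial in $|\inst|$. For the factor $a^{O(w)}$, note that any relation of arity $a$ that appears in $\Sigma$ or in $\inst$ contributes at least $a$ symbols to the input, so $a \leq |\Sigma| + |\inst|$, and hence $a^{O(w)}$ is polynomial in the total input size because $w$ is fixed. Putting the three observations together, the running time of Proposition~\ref{prop:factclosure} is polynomial in $|\inst| + |\Sigma|$, which establishes the first part of the corollary.

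For the second part, consider an instance of fact entailment: we are given $\inst_0$, $\Sigma$, and a fact $F$ over $\adom(\inst_0)$, and we want to decide whether $\inst_0, \Sigma \models F$. We compute $\inst'$ using the polynomial-time algorithm just described. By definition of $\Sigma$-fact-saturation, for every fact $G$ over $\adom(\inst_0) = \adom(\inst')$ we have $\inst_0, \Sigma \models G$ iff $G \in \inst'$. In particular, $\inst_0, \Sigma \models F$ iff $F \in \inst'$, a membership test that takes time polynomial in $|\inst'|$. Hence fact entailment is in polynomial time. There is no real obstacle here: the only subtle point is making sure each of the factors in the running-time bound of Proposition~\ref{prop:factclosure} reduces to something polynomial once $\sign$ and $w$ are treated as constants, which is an arithmetic check rather than a conceptual difficulty.
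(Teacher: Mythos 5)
Your proposal is correct and follows exactly the route the paper intends: the corollary is stated without an explicit proof precisely because it is meant to follow from Proposition~\ref{prop:factclosure} by observing that each factor in its running-time bound becomes polynomial (or constant) once the side signature and width are fixed, which is what you verify. Your additional observation that the full-signature arity $a$ is bounded by the input size, and the membership test for the fact-entailment consequence, are the right (and only) details to fill in.
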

The corollary implies in particular that fact entailment for linear TGDs of bounded width is in polynomial time, a result that does not seem to have been noted explicitly in the literature.
Indeed, linear TGDs always vacuously strongly obey the empty side signature (up to
performing homomorphism-closure, which is in $2^w$ hence constant for
fixed~$w$).

If we were interested only in atomic queries -- single atoms with no quantifiers
--  instead of $\owqa$ with general CQs,
then Proposition~\ref{prop:factclosure} would suffice to conclude our main results. 
Note that for general CQ query
answering it is also possible to reduce the query answering process
to applying a set of full TGDs: general CQs are \emph{Datalog-rewritable} with respect to GTGDs, see~\cite{bbcjsl}.
But we do not see a way to use these methods to take advantage of the side signature and get the refined complexity bounds we need. 

Instead we will 
continue to use
tree-like chase proofs to do query answering, but look at the impact
of adding all of these derived GTGDs on the chase process.
Recall that in the tree-like chase we have chase steps and propagation steps.
Our goal is to get a
chase with no propagation steps at all,
as discussed in the proof overview of Section~\ref{sec:proofoverview}.
 To do that we will  add \emph{shortcuts} which summarize the impact of 
chase steps combined with propagation steps. We will do so by defining a
tree-like chase called the \emph{shortcut chase}. Note that this chase is
tree-like but it is \emph{not} a principal-exempt chase; it will not feature
relaxed chase steps, and will not create new nodes when firing full principal
GTGDs. The principal-exempt chase was only used in the previous section and in
Proposition~\ref{prop:factclosure}; we do not use it here and we will not use
it in the sequel.

Let $\Sigma$ be the set of GTGDs of width~$w$ that strongly obeys side signature
$\sidesign$, and let $\Sigma'$ be a childish saturation of~$\Sigma$. (We will of
course take $\Sigma'$ to be the closure $\wclo{\Sigma}$ defined in the previous
section.)
The \emph{shortcut chase} based on $\Sigma'$ is a tree-like chase sequence of
the form described below.
Informally we apply the set of
non-full GTGDs of $\Sigma$ and the full GTGDs of~$\Sigma'$ in alternation.

A shortcut chase starts with a chase tree consisting of a single root node $T_0$, then consists of two alternating kinds of steps:
\begin{itemize}
  \item The \emph{non-full steps}, where we fire a non-full GTGD of~$\Sigma$ on a
    node $g$. The chase step creates a new node~$g'$
    which is a child of~$g$, which contains the result $F'$ of firing the
    non-full GTGD along with a copy of the side facts of~$g$ which only use elements shared
    between $F$ and~$F'$ (i.e., all side facts that can be inherited are inherited).
  \item The \emph{full saturation steps}, which apply to a node~$g$, only once
    per node, precisely at the moment where it is created by a non-full step.
    In this step, we apply all the full GTGDs of~$\Sigma'$
    to the facts of~$g$, and add the consequences to~$g$ (they are still on the domain of~$g$
    because the rules are full).
\end{itemize}

Note the absence of propagation steps: the facts generated in a node are never propagated rootwards. Also note that whenever we create a node then the facts that it contains are a childish instance.

Our next goal is to argue that the 
shortcut chase is as good as the usual chase in terms of query answering. 
As we are working with childish saturations, which are only complete for entailment on
childish instances, we will use Proposition~\ref{prop:factclosure}
to ensure that the original instance is fact-saturated.

\begin{prop} \label{prop:shortcutchasecomplete} For any childish saturation
  $\Sigma'$ of $\Sigma$, the 
  shortcut chase based on $\Sigma'$
  emulates $\Sigma$
  on any fact-saturated instance: for each fact-saturated instance $\inst$ and
  each Boolean CQ $Q$, we have $\inst, \Sigma \models Q$ if and only if $Q$ holds in an instance produced from $\inst$ by a shortcut
chase sequence with the full GTGDs of~$\Sigma'$ and the non-full GTGDs
  of~$\Sigma$ starting at $\inst$.
\end{prop}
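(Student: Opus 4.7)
The plan is to prove soundness and completeness of the shortcut chase separately. Soundness is immediate: every rule used by the shortcut chase is a logical consequence of $\Sigma$---the non-full rules lie in $\Sigma$, and the full rules of $\Sigma'$ are entailed by $\Sigma$ by definition of childish saturation---so any match of $Q$ in the shortcut chase result witnesses $\inst, \Sigma \models Q$.

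For completeness, assume $\inst, \Sigma \models Q$. Invoking a CQ extension of Theorem~\ref{thm:exempt-one-pass-proof-exists} (obtained by taking a witnessing substitution $\sigma$ and chasing long enough to derive each atom of $\sigma(Q)$ via principal-exempt one-pass proofs), we obtain a principal-exempt one-pass chase sequence $T_0, \ldots, T_n$ from $\inst$ such that every fact of $\sigma(Q)$ lies in some node of $T_n$. The heart of the argument is the following invariant: for each node $v$ of $T_n$, define $\mathit{init}(v)$ to be $\inst$ when $v$ is the root, and otherwise the union of the principal fact of $v$ (introduced when $v$ was created) with the side facts inherited from its parent at that time. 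Then every fact of $T_n(v)$ is entailed by $\mathit{init}(v)$ via $\Sigma$, and uses only elements of $\adom(\mathit{init}(v))$. This invariant is proven by induction on the chase sequence: facts derived by a side GTGD firing at $v$ stay on elements of $\mathit{init}(v)$ and are entailed from $v$'s current content; side facts propagated upwards from a child $c$ only involve elements shared between $v$ and $c$, and come from $\mathit{init}(c)$, which is itself entailed by $\mathit{init}(v)$ via $\Sigma$. Crucially, for non-root $v$ the set $\mathit{init}(v)$ is a childish instance (since $\Sigma$ has bounded width, so at most $w$ elements of the principal fact are exported from the parent), so the childish saturation property of $\Sigma'$ applies to it.

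Given this invariant, we construct a shortcut chase sequence $T'_0, \ldots, T'_m$ together with a map $\phi$ from the nodes of $T_n$ to the nodes of $T'_m$ such that $T_n(v) \subseteq T'_m(\phi(v))$ for every $v$; the match of $Q$ then transfers to the shortcut chase. The root $v_0$ maps to the shortcut chase root, and $T_n(v_0) \subseteq \inst$ holds because $\inst$ is fact-saturated. If $v$ is created in $T_n$ by a non-full chase step at a parent $u$, we fire the same non-full GTGD at $\phi(u)$---the trigger is available since $T_n(u) \subseteq T'_m(\phi(u))$ by induction---producing a new child $\phi(v)$ whose initial content is a childish instance containing $\mathit{init}(v)$; the subsequent full-saturation step then derives all facts in $T_n(v)$ by childish saturation of $\Sigma'$. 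The main obstacle is the case of relaxed chase steps, which have no counterpart in the shortcut chase: we set $\phi(v) := \phi(u)$ and argue directly that $T_n(v) \subseteq T'_m(\phi(u))$. This works because $\mathit{init}(v)$ is entailed from $\mathit{init}(u)$ via $\Sigma$ (the full principal head comes from $u$'s content, and the inherited side facts are taken from $u$), so by the invariant the facts of $T_n(v)$ are entailed from $\mathit{init}(u)$ and lie on $\adom(\mathit{init}(u))$; childish saturation of $\Sigma'$ (for non-root $u$) or fact-saturation of $\inst$ (for $u = v_0$) then places these facts in $T'_m(\phi(u))$.
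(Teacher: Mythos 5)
The soundness direction is fine and matches the paper. The completeness direction, however, rests on ``a CQ extension of Theorem~\ref{thm:exempt-one-pass-proof-exists}'', and this is a genuine gap. That theorem (like Theorem~\ref{thm:one-pass-proof-exists} from which it is derived) is a statement about \emph{fact entailment} only: it produces a one-pass proof of a single ground fact over the domain of the input instance. It does not yield a single principal-exempt one-pass chase tree containing a match of a CQ, and the parenthetical recipe you give does not produce one. The atoms of $\sigma(Q)$ generally involve labelled nulls shared between atoms, so they are not entailed ground facts to which the theorem applies; and even for ground target facts you cannot simply concatenate separate one-pass proofs, because the one-pass discipline forbids revisiting a subtree once focus has moved rootwards, so deriving a second target fact may require re-entering a subtree that an earlier derivation has already abandoned. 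Whether the (principal-exempt) one-pass chase is complete for CQs is precisely the kind of claim the paper is careful not to rely on: it notes that fact entailment alone would suffice only for atomic queries, and its proof of this proposition instead inducts over an \emph{ordinary} tree-like chase proof (which is known to be complete for CQs) and simulates each of its steps inside a restricted ``$\sidesign$-donating'' shortcut chase, using two structural claims about where derived facts can appear to locate, for each trigger of the original proof, a node of the shortcut chase where it can be re-fired. Your node-by-node transfer (the map $\phi$, the invariant on $\mathit{init}(v)$, the collapsing of relaxed steps into the parent) is a reasonable way to turn a principal-exempt one-pass tree into a shortcut chase tree, but it is built on a foundation that the paper neither establishes nor uses.

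Two secondary points. First, your claim that $\mathit{init}(v)$ is a childish instance requires the principal fact of $v$ to be an \emph{isomorphic} copy of a GTGD head of $\Sigma$; this needs the homomorphism-closure part of ``strongly obeys'' (so that distinct exported variables of the fired rule are never identified), which your proof does not invoke. Second, your invariant that every fact of $T_n(v)$ lies on $\adom(\mathit{init}(v))$ and is entailed by $\mathit{init}(v)$ is specific to the principal-exempt one-pass chase (exactly one principal fact per node, only side facts inherited and propagated); it would not survive a switch to the general tree-like chase, so the gap cannot be repaired merely by substituting the chase notion while keeping the rest of the argument intact.
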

\begin{proof}
The direction from right to left is simple, and does not use fact-saturation
  of~$\inst$. Shortcut chase steps are chase steps, and the dependencies
  in the childish saturation~$\Sigma'$ are entailed by~$\Sigma$.
  Thus, it is clear that if $Q$ holds in an instance produced from $\inst$ using shortcut chase steps,
  then we have a chase proof witnessing that $\inst, \Sigma \models Q$.

We prove the other direction, assuming $\inst, \Sigma \models Q$.
  We first show the following auxiliary claim (*): in a 
  shortcut chase that starts with a fact-saturated instance, when we derive a fact $F$ in a
  node $v$ (which by fact-saturation is not the root node), then $F$ is not
  guarded in the parent of~$v$. Indeed, assume by contradiction that $F$ is
  guarded in a node $v'$ of~$F$ but was derived in a child $v$ of~$v'$.
  The set of facts present in the node~$v'$ when it was created formed a childish
  instance $I'$, and the 
  shortcut chase sequence witnesses
  that the fact $F$ on $\adom(I')$ is entailed by~$I'$ and~$\Sigma'$, hence
  by~$I'$ and~$\Sigma$. So by completeness of the childish saturation $\Sigma'$,
  we should have derived $F$ in the full saturation step on~$v$ and
  inherited it in~$v'$ instead of deriving it on~$v'$ as we assumed --- a contradiction.

  We next show that (*) implies the following claim (**): in a
  shortcut chase that starts with a fact-saturated
  instance, every side fact
  occurs in every node in which it is guarded. To show this, 
  first notice that side facts which are in the initial instance will
  be inherited as long as they are guarded, so there is nothing to show for
  them. As for side facts which are not in the initial instance,
  they are always created by full saturation steps, because the
  GTGDs of~$\Sigma$ strongly obey the side signature, so non-full GTGDs are all principal.
  Let us consider a side fact $F$ derived on a node $v$ by a
  full saturation step. When this happens, $v$ is a leaf node, and we
  claim that there is no other
  node $v'$ that guards $F$.
  Indeed, if there were another node $v'$ which does guard the fact, then
  let $v''$ be the least common ancestor of $v$ and $v'$. It must be the case
  that $v''$ also guards $F$, otherwise the values of $\adom(F)$ cannot have
  been re-introduced in~$v$ and in~$v''$ independently. So since $v$ is a leaf
  node when $F$ is created, we have $v \neq v''$. This implies that, considering
  the leafward path from~$v''$ to~$v$, the parent of~$v$ also guards $\adom(F)$.
  But it does not contain~$F$, and we have a contradiction of (*). Thus, when
  $F$ is created in occurs in the only node in which it is currently guarded.
  Then, as $F$ is a side fact, it will be inherited leafwards in all nodes where
  it is guarded. 
  Thus, we have established (**).

We are now ready to show that the 
shortcut chase is complete,
  using (**). We show that for every tree-like chase proof $T_0, \ldots, T_m$
  which starts from
  a fact-saturated instance $\inst$, there is a %
  shortcut chase
  proof $T_0', \ldots, T_m'$ from $\inst$  with an isomorphism from the
  underlying instance of $T_m$ to that of $T_m'$
  which is the identity on the active domain of $\inst$. Thus, in particular, if $T_0, \ldots, T_m$
  witnesses that $\inst, \Sigma \models Q$, then 
  $Q$ holds in the instance underlying $T_m$, and taking the image of a match
  by~$\iota_m$ we see that the same is true on $T_m'$.

  Let us proceed by induction on the length of the tree-like chase
  proof $T_0, \ldots, T_m$. The base case is a tree-like chase proof of
  length~$0$, in which case we simply pick $\iota_0$ to be the identity on~$\inst$.

  For the induction step, consider a tree-like chase
  proof $T_0, \ldots, T_m$ with $m>0$. We immediately apply the induction hypothesis to obtain a
   shortcut chase proof $T_0', \ldots, T_{m'}'$ 
  such that there is an isomorphism
  $\iota_{m-1}$ from the underlying instance of~$T_{m-1}$
  to that of~$T_{m'}$ which is the identity on~$\adom(\inst)$, and let us explain how to continue the 
  shortcut chase proof to obtain a new tree $T_{m'+1}'$ and an isomorphism
  $\iota_m$ from the underlying instance of~$T_m$ to that of~$T_{m'+1}'$ which
  is the identity on $\adom(\inst)$.
  
  We first eliminate two easy cases. First, $T_m$ may be obtained by applying a
  propagation step. Second, $T_m$ may be obtained by performing a chase step in
  a node $v$ which derives a fact that already occurs somewhere in $T_{m-1}$. In
  both these cases,
  we can conclude immediately 
  using $T_0', \ldots, T_{m'}'$ and $\iota_{m-1}$ as a witness. So we assume that
  $T_m$ is obtained by applying a chase step, and let $F$ be the fact produced. We distinguish
  two cases depending on whether the GTGD $\gdep$ used in the chase step is full or
  non-full.

  First, if the GTGD $\gdep$ is full, consider the image of its trigger in
  $T_{m-1}$,
  calling it $S$. Because $\gdep$ is a rule of $\Sigma$ which strongly obeys
  the side signature, we know that $S$ is
  formed of a principal fact $F_0$ and side facts $\beta_0$ on $\dom(F_0)$. 
  Given that $F$ is a new fact, as the instance $\inst$ is
  fact-saturated, we know that $\adom(F) \not\subseteq \adom(\inst)$, so since
  $\gdep$ is full we know that
  $\adom(F_0) \not\subseteq \adom(\inst)$. So
  letting $F_0'
  := \iota_{m-1}(F_0)$, we know that $F_0'$ occurs in a node $n'$ of~$T_{m'}'$ which is not
  the root. The facts of $\iota(\beta_0)$ also occur in $T_{m'}'$ and are
  guarded by~$n'$, so by (**) they also occur in~$n'$. If $n'$ already
  contains  the fact $\iota_{m-1}(F)$ within $T_{m'}'$, then we can just conclude
  immediately with
  $T_0', \ldots, T_{m'}'$ and $\iota_{m-1}$. Otherwise, 
  we can continue the 
  shortcut chase proof by 
  firing~$\gdep$ on~$n'$ 
  which contains the trigger $\iota_{m-1}(S)$, and deduce the fact
  $\iota_{m-1}(F)$ in~$n'$.
  This gives us $T_{m'+1}'$ which admits an isomorphism $\iota_m :=
  \iota_{m-1}$ from $T_{m'}$ to $T_{m'+1}'$ which is the identity on~$\adom(\inst)$.

We now consider the case where the GTGD $\gdep$ is non-full. Consider again the image $S$ of its trigger
  in~$T_{m-1}$.
  We want to find a node $n'$ of~$T_{m'}'$ on which the same firing can be applied.
  Again $S$ is formed of a principal fact $F_0$ and side facts
  $\beta_0$ on $\dom(F_0)$. This time it may be the case that $F_0$ is a fact
  of~$\inst$. In this subcase $\beta_0$ are facts over $\dom(I)$ which are entailed
  by~$\inst$ so $\beta_0 \subseteq \inst$ because $\inst$ is fact-saturated. So in this subcase
  we can take $n'$ to be the root node of $T_{m'}'$, which contains the facts of
  $\iota_{m-1}(S)$. In the subcase where $F_0$ is not a fact of~$\inst$, then we reason as
  in the previous case: $\iota_{m-1}(F_0)$ occurs in a non-root node 
  of~$T_{m'}'$, the facts of $\iota_{m-1}(\beta_0)$ are guarded in that node, so also
  occur there by (**). In this subcase we define $n'$ to be a node of $T_{m'}'$ that contains the
  facts of~$\iota_{m-1}(S)$. 
  
  In both these subcases we can perform a non-full step
  and apply $\gdep$ with trigger mapping to $\iota_{m-1}(S)$, within node~$n'$
  to create the tree $T_{m'+1}'$ with a child $n''$ of~$n'$ and a fact $F'$
  isomorphic to~$F$. We can extend
  $\iota_{m-1}$ to $\iota_m$ by mapping the new values introduced in the chase
  step from $T_{m-1}$ to~$T_m$ to the isomorphic values in~$F'$.  This gives
  $\iota_m$, which is an isomorphism from the underlying instance of~$T_m$ to
  that of~$T_{m'+1}'$, and is the identity on~$\adom(\inst)$ as required.

  So, both for full GTGDs and non-full GTGDs, we have successfully extended the
  tree-like chase
  sequence while preserving an isomorphism which is the identity
  on~$\inst$, so every query that admits a tree-like proof also admits a proof with
  the 
  shortcut chase. This concludes the proof.
\end{proof}

We will explain in the next section how to translate the shortcut chase to a set of linear TGDs.

\section{The final stage: Linearization and its justification} \label{sec:linearize}

We now describe the third and last stage of the proof of
Theorem~\ref{thm:idreduce}. We describe the translation to a set of linear TGDs,
intuitively introducing predicates for each guarded set of atoms on the side
signature; and also describe the pre-processing of the instance, intuitively
closing under the full dependencies that we introduce.
The correctness of this transformation will then be argued using the shortcut
chase studied in the previous section.

Recall the definition of childish instances (Definition~\ref{def:childish}).
In the definition of the linearization, we will refer to \emph{types}:
\begin{defi}
  A \emph{type} is an isomorphism type of a childish instance, i.e., an
  equivalence class of the childish instances quotiented under the isomorphism
  equivalence relation.
\end{defi}

For each type $\theta$, create a predicate $R_\theta$ whose arity is that of
the principal fact of~$\theta$.
Observe that this
creates a singly exponential number of relations when
the arity~$a'$ of~$\sidesign$ is fixed, and it creates 
only polynomially many relations when we further fix the  arity of the signature and also fix  the full side
signature
$\sidesign$.
We then let $\linearize(\Sigma)$ consist of the linear TGDs defined as follows.
For every type $\theta$, fix a childish instance $S$ on domain $\vec x$ achieving
type~$\theta$, and write $P(\vec x)$ for the principal fact of~$S$ and 
$\vec y \subseteq \vec x$ for the elements of $\vec x$
on which there are
side facts (there are at most~$w$).
Also fix a homomorphism\footnote{Note that we could avoid considering such
homomorphisms if we required that the saturation is homomorphism-closed: this
could be achieved in the required bounds just as in the case of the input
constraints. However, in the present section we think it is simpler to
consider the homomorphisms directly.} $h$ from~$\vec y$ to itself, and
extend $h$ to  a homomorphism from~$\vec x$ to itself which is the
identity on elements that do not occur in~$\vec y$.
Let $S'$ be the instance obtained by starting with $h(S)$ and then repeatedly
applying the full TGDs
of~$\wclo{\Sigma}$ (a full saturation step).
Then our linearization will contain:

\begin{itemize}
  \item (Instantiate): 
    For every fact $A(h(\vec x))$
    of~$S'$,
    the full linear TGD:
    \[
      R_{\theta}(h(\vec x)) \rightarrow A(h(\vec x))
    \]
    Note that in particular we always have the TGD:
    \[
      R_{\theta}(h(\vec x)) \rightarrow P(h(\vec x))
    \]
  \item (Lift): 
    For every non-full GTGD $\dep$ in $\Sigma$ of the form $\body(\vec
    z) \rightarrow \exists \vec w ~ T(\vec z, \vec w)$ where
    $\body(\vec z)$ 
    has a match $h'$ in~$S'$, we
    add the linear TGD:
    \[
      R_{\theta}(h(\vec x)) \rightarrow \exists \vec w ~ R_{\theta''}(h'(\vec z), \vec w)
    \]
    where $\theta''$ is the type of the childish instance obtained as follows:
    \begin{itemize}
        \item start with a principal fact $F = T(h'(\vec z), \vec a)$ where $\vec a$ are nulls, intuitively corresponding to the result of firing~$\dep$;
        \item add the side facts of $S'$ that use only elements of~$h'(\vec z)$, intuitively corresponding to inherited facts.
    \end{itemize}
\end{itemize}

Informally, the (Lift) rules simulate firing of non-full rules in the shortcut chase, and
the (Instantiate) rules move us back into the original signature from the lifted signature.

 Note that the  full GTGDs are not explicitly mentioned in these linearized rules. However, they play a role in two places. First, within (Instantiate) and (Lift) rules, we are computing derived facts in computing $S'$, and this involves the full GTGDs. Secondly, we will be running these rules on a   pre-processed instance which is already fact-saturated.

We give a brief example to show how the transformation works:
\begin{exa} \label{ex:linearize}

Let $\Sigma$ consist of the ID
$R(x,y) \rightarrow \exists z ~ R(y,z)$
and the full GTGD
$R(x,y), U(x) \rightarrow U(y)$.

\begin{itemize}
\item Our side signature here will consist of only $U$; thus the only principal
  relation is $R$;
\item The maximal arity of a side signature atom, denoted $a'$, is $1$;
\item The maximal width $w$ is $1$, which is $\geq a'$.
\end{itemize}

We have three childish instances up to isomorphism: 
\begin{itemize}
    \item $\inst_1=\{R(1,2), U(1)\}$, corresponding to type $\theta_1$
    \item $\inst_2=\{R(1,2), U(2)\}$, corresponding to type $\theta_2$
    \item $\inst_3=\{R(1,2)\}$, corresponding to type $\theta_3$
    \end{itemize}

    In the linearization we thus have fresh relations $R_{\theta_1}$ and
    $R_{\theta_2}$ and $R_{\theta_3}$.

The rule (Instantiate) will give us full linear TGDs such as:
\begin{align*}
 R_{\theta_1}(x,y) \rightarrow U(x) \\
 R_{\theta_1}(x,y) \rightarrow R(x,y)
\end{align*}

The rule (Lift)  will provide us with linear TGDs such as:
\[
R_{\theta_2}(x,y) \rightarrow \exists z ~ R_{\theta_1}(y,z).
\]
We will show in the sequel that the resulting linear TGDs allow us to solve
  $\owqa$ for the original dependencies, intuitively because they amount to
  performing the shortcut chase.
\end{exa}

The result of our transformation clearly consists of linear TGDs. Further, they are
of semi-width~$w$: indeed, the rules produced by (Lift) have
width bounded by~$w$ because the same is true of the non-full GTGDs
of~$\Sigma$, and the rules produced by (Instantiate) have an acyclic position graph.

We have now given the algorithm for computing $\linearize(\Sigma)$.
We now argue that
these rules can be computed efficiently:

\begin{clm}
  \label{clm:linearizebound}
  For any constant bound $a' \in \mathbb{N}$ on the arity of the side signature,
  there are fixed polynomials $P_1$ and $P_2$ such that the number of
rules 
in $\linearize(\Sigma)$
  is in $P_1(\card{\Sigma} \times
a \times w)^{P_2(w, n')}$ and the time to construct them is polynomial in their
number.
\end{clm}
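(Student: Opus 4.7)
The plan is to enumerate the sources of combinatorial blowup in $\linearize(\Sigma)$ and confirm each fits the stated form when $a'$ is held fixed. Since the rules come in two families, (Instantiate) and (Lift), I would bound each family separately, using three intermediate counts: childish instances, homomorphisms $h$, and $|S'|$.

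First I would bound the number of childish instances up to isomorphism. By Definition~\ref{def:childish}, such an instance is determined by the isomorphism type of a principal fact (at most $|\Sigma|$ choices, since it must be a copy of a head of some GTGD of $\Sigma$), a choice of at most $w$ variables of that fact carrying side facts (at most $(a+1)^w$ choices), and a set of side facts on those variables (at most $2^{n' \cdot w^{a'}}$, exactly as in Lemma~\ref{lem:numbersuitable}). Since $a'$ is fixed, $w^{a'}$ is polynomial in $w$. Next, for each childish instance $S$, the number of homomorphisms $h \colon \vec y \to \vec y$ is at most $w^w$, as $|\vec y| \leq w$.

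The key step, and the main technical obstacle, is to bound $|S'|$, where $S'$ is the closure of $h(S)$ under the full TGDs of $\wclo{\Sigma}$. A naive count gives $a^a$, which does not fit the target form. The crucial observation is that every rule of $\wclo{\Sigma}$ is suitable and hence has width at most $w$, so every derived fact uses at most $w$ distinct elements. The principal facts that may appear in $S'$ have a head pattern isomorphic to some head of $\Sigma$ (at most $|\Sigma|$ such patterns), giving at most $|\Sigma| \cdot a^w$ principal facts on $\adom(h(S))$; side facts contribute at most $n' \cdot a^{a'} \leq n' \cdot a^w$. Therefore $|S'| \leq (|\Sigma| + n') \cdot a^w$, which is of the form required.

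Combining, the number of (Instantiate) rules is at most (childish instances) $\cdot$ (homomorphisms) $\cdot |S'|$, and the number of (Lift) rules is bounded similarly with an extra factor of $|\Sigma|$ for the non-full GTGD $\delta$ and a factor bounded by $|S'|$ times a polynomial in $w$ for the number of matches $h'$ (since the body of $\delta$ has a principal guard whose image determines the rest of the match up to automorphism of the guard pattern). Multiplying out yields a bound of the shape $|\Sigma|^{O(1)} \cdot a^{O(w)} \cdot 2^{n' \cdot \mathrm{poly}(w)} \cdot w^{O(w)}$, which fits within $P_1(|\Sigma| \cdot a \cdot w)^{P_2(w,n')}$ for suitable polynomials, using $2^k \leq (|\Sigma| \cdot a \cdot w)^k$ (noting $a \geq 2$). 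For the construction, we first compute $\wclo{\Sigma}$ via Lemma~\ref{lem:computeclosure}, then enumerate each pair $(S,h)$, compute $S'$ by running $\wclo{\Sigma}$ to fixpoint on $h(S)$, and emit the (Instantiate) and (Lift) rules; each step is polynomial in $|\wclo{\Sigma}|$ and $|S'|$, so the total running time is polynomial in the number of output rules.
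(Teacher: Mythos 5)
Your proposal is correct and follows essentially the same counting argument as the paper: bound the number of childish instances (hence relations $R_S$) by the product $|\Sigma| \cdot a^{O(w)} \cdot 2^{n' \cdot w^{a'}}$, multiply by the $w^w$ homomorphisms and by the number of facts or matches contributed by $S'$, and observe the result fits the stated form once $a'$ is fixed. The only (harmless) divergence is that you bound $|S'|$ directly via $\Sigma$-compatibility of the heads in $\wclo{\Sigma}$, where the paper counts the derivable facts via $|\wclo{\Sigma}|$ itself; both yield bounds of the required shape.
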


The proof for this is similar to the argument for the bound
shown in Lemma~\ref{lem:numbersuitable}.

\begin{proof}
Let $n$ be the
number of relations of the signature, $a$ the maximal arity of the signature,
$n'$ the number of relations in the side signature,  and $a'$ the maximal arity
of the side signature.
We can bound the number of relations $R_\theta$ by bounding the number of types, which amounts to bounding the number of childish instances up to isomorphism. Now, a childish instance is defined by choosing the GTGD head of~$\Sigma$ that we copy (a factor of $|\Sigma|$), the
subset of variables on which to add side facts (a factor of
$a^{O(w)}$), and a set of side facts on those elements (a factor of at
most $2^{n' \cdot w^{a'}}$).

Let us now bound the number of rules.
The number of rules obtained by (Instantiate) is bounded by the number of relations
$R_\theta$ (bounded above), times the number of choices for the homomorphism $h$ (i.e., $w^w$), times
the number of possible choices for the fact in the rule head. The number of such choices can be bounded by observing
that such facts are obtained by instantiating the head of a GTGD
of~$\wclo{\Sigma}$ (so at most $|\wclo{\Sigma}|$ choices), identifying some elements in the head (of which there are at most~$w$, so a factor $w^w$ to choose a
homomorphism), and selecting the tuple of elements on which the fact is created (i.e., $a^w$).

  Further, the number of rules created by (Lift) is bounded by the number of relations $R_\theta$ (bounded above), times the number of homomorphisms (i.e., $w^w$), times $a^{O(w)}$ for the choice of exported variables, times the number of
relations $R_{\theta''}$ for the choice of head.

Note that in the bounds calculated above, if we treat $a'$ as a constant, every exponent is either $O(w)$ or $O(n', w)$. The number of rules is a polynomial in these quantities. Hence, having fixed the arity $a'$ of the side signature,
there are indeed fixed polynomials $P_1$ and $P_2$ satisfying our claim.
\end{proof}

We have described the construction of the new linear constraints $\linearize(\Sigma)$.

For any instance $\inst$ in the original signature,  we construct
$\inst^\lift$ by considering 
every childish instance $S$ on domain $\vec a$ which is a subinstance of $\inst$ and adding
the fact $R_\theta(\vec a)$
where $\theta$ is the type of~$S$.
This still respects the time
bounds, because it is doable in $\ptime$ in~$|\inst|^w \times |\Sigma|$.

The last thing to show is that the linearization is correct:
\begin{prop} \label{prop:linearizecorrect} Let $\inst$ be an instance and $\inst'$ be a superinstance formed via $\Sigma \cup \wclo{\Sigma}$ chase steps that is fact-saturated
  and $\inst$-deactivated for~$\Sigma$. Then
$\inst', \Sigma \models Q$ if and only if $(\inst')^\lift, \linearize(\Sigma) \models Q$.
\end{prop}

\begin{proof}
To prove the result, for an instance $\inst''$ of the signature introduced above, let $\delinearize(\inst'')$ be the result
of replacing facts over predicates $R_\theta$ by the facts
  corresponding to $\theta$. Note that, using the rules created by (Instantiate),
every $R_\theta$-atom entails the corresponding $\theta$-atoms.

The ``soundness direction'', from right to left, is easy to see, and it does not use fact-saturatedness of the instance $\inst'$ or the fact that $\inst'$ is $\inst$-deactivated. 
Indeed, for any linearized rule
$\sigma$, suppose a chase step with $\sigma$ on $(\inst')^\lift$ yields $\inst''$.
Then there are ordinary chase steps using rules in $\Sigma$  that produce $\delinearize(\inst'')$ from
$\inst'$.

To prove the ``completeness direction'', we show that \emph{for any shortcut chase sequence with $\Sigma$ on $\inst'$,  there is a chase sequence  with $\linearize(\Sigma)$ whose delinearization will produce all the same facts}.
Since the shortcut chase
captures entailment on fact-saturated instances (from Proposition~\ref{prop:shortcutchasecomplete}),
this is sufficient to conclude.

To see this, consider the shortcut chase where each non-root node $v$ is annotated by the type of the childish instance that this node had when it was created from its parent.
 Consider also the chase by the rules obtained with (Lift) in~$\linearize(\Sigma)$. We can observe
by a straightforward induction that there is a chase by the rules of (Lift) in~$\linearize(\Sigma)$ such that the tree structure on the nodes of the shortcut chase
with the indicated annotations is isomorphic to the tree of facts of the form
  $R_{\theta}(\vec x)$ created in this lifted chase. 
  
  Here, we use the fact that $\inst'$ is $\inst$-deactivated in the following way. Consider the first round of non-full shortcut chase steps applied on all active triggers in~$\inst'$. Consider specifically the firing of a non-full GTGD $\gdep$ on a trigger $\trig$ which creates a node $b$ in the first round of the shortcut chase. Our goal is to find a childish subinstance $S$ of~$\inst'$ that we can use to replicate this firing in the linearization.
  
  We know that the image $\iota$ of the trigger $\trig$ cannot be in~$\inst$ because $\inst'$ is $\inst$-deactivated. So $\iota$ must involve at least one fact of $\inst'\setminus \inst$. In fact, as $\inst'$ is fact-saturated, the principal fact $F$ that guards $\iota$ must be a fact of~$\inst'\setminus \inst$.
  But as $\inst'$ was created from~$\inst$ by chase steps, we know that $F$ was created in $\inst'$ by firing a GTGD $\gdep'$ of $\Sigma \cup \wclo{\Sigma}$. If the GTGD $\gdep'$ is full, then the width bound ensures that $F$ contains at most $w$ distinct elements, so we can take $S \coloneq \iota$ and $S$ is a childish subinstance of~$\inst'$. If the GTGD $\gdep'$ is non-full, then $F$ was created in a new chase node together with inherited side signature facts on at most $w$ elements, so $F$ was part of a childish subinstance $S$ of~$\inst'$ at that point; and $S$ entails all facts guarded by~$F$ which are derived afterwards by chase steps in~$\inst'$, in particular those of~$\iota$. 
  
  Letting $S$ be the childish instance formed above, let $\theta$ be the type of~$S$.
  By construction of $(\inst')^\lift$, in both cases there is a fact $F_S$ for the relation $R_\theta$ on the elements of the childish instance $S$, and the firing of the active trigger $\trig$ in the first round of the shortcut chase can be performed in the linearization by firing a rule on $F_S$ whose body uses relation $R_\theta$ and whose head uses the relation $R_{\theta'}$ for $\theta'$ the type of the childish instance created when firing $\gdep$ on~$\trig$. This allows us to label the node $b$ of the shortcut chase with~$\theta'$.

  For subsequent rounds of the shortcut chase steps, we know that active triggers are contained in nodes that were created by a non-full chase step in the previous round and were then fully saturated; and when these nodes are created they contain a childish instance and can be labeled accordingly.

  Now, the
applications of the rules 
  obtained with (Instantiate) to the linearized chase create precisely the facts in the original signature
contained in these nodes, so this shows that the chase by~$\linearize(\Sigma)$ and the
shortcut chase create  the same facts.

This shows that~$\linearize(\Sigma)$ satisfies indeed the hypotheses of
Theorem~\ref{thm:idreduce}.
\end{proof}

  We finish by putting together the steps of the proof:

\begin{proof}[Proof of Theorem~\ref{thm:idreduce}.]
  We have already explained at the end of Section~\ref{sec:proofoverview} how we
  reduce to the setting where $\Sigma$ strongly obeys a side signature
  $\sidesign$ of suitable size.
  
  We use Theorem~\ref{thm:smallsaturation} to compute a childish saturation
  $\wclo{\Sigma}$ of~$\Sigma$. The running time bound is now polynomial
  in~$|I_0|$, $(|\Sigma| \times a)^{O(w)}$, and $2^{n' \times w^{a'}}$.
  Using Proposition~\ref{prop:factclosure}, we perform chase steps to construct $I_0'$ which is 
  $\Sigma$-fact-saturated  and $I_0$-deactivated, 
  with the complexity being now polynomial in the previous values and in
  $|I_0|^{O(w)}$. Last, we compute the linearization $(I_0')^\lift$ and
  $\Sigma^\lift$ as explained in this
  section. By Claim~\ref{clm:linearizebound}, the overall complexity is as claimed.
  Further, as we explained, the obtained dependencies are linear TGDs of
  semi-width $\leq w$ and arity $\leq a$.
  Last, we know by Proposition~\ref{prop:linearizecorrect} that the
  linearization is correct, i.e., $(I_0')^\lift$ and $\Sigma^\lift$ emulate $I_0'$
  and~$\Sigma$, concluding the proof.
\end{proof}

\section{Conclusion} \label{sec:conc}

In this work we have given finer bounds on query answering with GTGDs, based on the notion of side signature.
In addition to justifying claims in prior papers, we believe it is a good
example of the use of a linearization technique similar to the one
from~\cite{gmp}, and also a good example of how to use the recently-developed notion of the one-pass chase.
We hope that this combination could be used to get finer-grained bounds for query answering for other
TGD classes, such as frontier-guarded TGDs~\cite{baget2010walking}.

While our approach produces linearizations in the same style as \cite{gmp}, our techniques seem to be quite different -- e.g., we rely heavily on the completeness of specialized versions of the chase. It would be quite interesting to approach these bounds using the machinery and terminology of \cite{gmp}, but we leave this for future work.

When we do not impose any width bound, but merely fix the side signature arity,
we show an $\exptime$ bound.
Not only is the problem for this class $\exptime$-complete, but there is an $\exptime$-hardness result in
 \cite{bbbicdt} for the case with the simplest possible fixed side signature: one unary predicate.
In that sense, our $\exptime$ result is optimal.
We do not know whether our two results (Result~\ref{res:exptime}
and~\ref{res:np})
could be extended to the setting of multi-head TGDs or TGDs with constants in
full generality (see discussion in Section~\ref{sec:results} and see Appendix~\ref{apx:constantmulti}).

Our results give new classes where $\owqa$ is in $\exptime$ and new cases where it is in $\np$. 
We do not provide larger classes where the complexity is in $\pspace$: this is a bit surprising, and is 
one area of particular interest for future work.
Indeed, the canonical example where $\owqa$ is in $\pspace$ is when TGDs are linear, and our technique works by 
reducing to that case. Hence, it is natural to ask whether our techniques can
give a $\pspace$ upper bound (rather than $\exptime$) for some more general
classes defined via the notion of side signature.

\bibliographystyle{alphaurl}
\bibliography{algs}
\appendix
\section{Proof of the semi-width result
(Proposition~\ref{prop:semiwidthclassic-general})}
\label{apx:semiwidthproof}

In this appendix, we prove the $\np$ bound on $\owqa$ for bounded semi-width
linear TGDs, i.e., 
Proposition~\ref{prop:semiwidthclassic-general}. Recall its statement:

\semiwidthclassic*

The proof presented here is the same as
in~\cite[Appendix~C]{resultlimitedj}, except that to go from IDs to linear
TGDs we must change the statement of proof of Lemma~\ref{lem:depthbound}. The
proof is otherwise identical up to minor changes.

To prove the result, let $\Sigma$ be the collection of linear TGDs. We will
reason about the tree-like chase sequences (see Section~\ref{sec:onepass}) that can be obtained starting with some
instance~$I_0$. 
Specifically, in this appendix, when talking about a \emph{tree-like chase
sequence}, we mean the following:
we always consider \emph{relaxed} tree-like
chase sequences, where chase steps performed with full TGDs are \emph{always}
relaxed (i.e., the fact is always created in a new child node);
and further we \emph{never perform propagation steps} (they are never needed)
and we \emph{never inherit any facts} when creating fresh nodes.
This is consistent with how tree-like chase sequences work in the case of IDs considered in Johnson and Klug's work \cite{johnsonklug}.

Thanks to this assumption, in the tree-like chase sequences that we consider, the root
node always contains precisely the facts of~$I_0$, and the non-root nodes contain
precisely one fact and are in one-to-one correspondence with the facts that are
generated. Further, if we fired a trigger whose image is a fact~$F$, and
this creates a fact $F'$,
then the node $n'$ created by applying the chase step so that $T(n') = \{F'\}$
is a child of the node $n$ such that $T(n) = \{F\}$.

Let us now consider a chase tree~$T$ within some tree-like chase proof starting
with~$I_0$. A \emph{generated fact} in $T$
is a fact which is not a fact of~$I_0$. 
Let us now consider nodes $n$ and $n'$ in~$T$,  with~$n$ a strict ancestor of~$n'$. 
We say $n$ and $n'$ are \emph{far apart}
 if there are distinct generated facts $F_1$ and $F_2$ such that:
\begin{itemize}
\item  the node $n_1$
 corresponding to~$F_1$ and the node~$n_2$ corresponding to~$F_2$ are both ancestors
 of~$n'$ and descendants of~$n$;
\item  $n_1$ is an ancestor of~$n_2$;
\item  $F_1$ and $F_2$ were generated by the same rule of~$\Sigma$; and
\item the equalities between values in positions within~$F_1$ are exactly the
  same as the equalities within~$F_2$,
and any values occurring in both $F_1$ and $F_2$ occur
in the same positions in~$F_1$ and~$F_2$.
\end{itemize}
If $n$ and $n'$ are not far apart, we say
that they are \emph{near}.

Given a match $h$ of~$Q$ in the chase tree~$T$, its
\emph{augmented image} is the closure of its image
under least common ancestors, including by convention the root node. If $Q$ has size $k$ then
this has size $\leq 2k+1$.
For any two nodes $n$ and $n'$ in the augmented image,
we call $n$ the \emph{image parent of} $n'$ 
if $n$ is the lowest  ancestor
of~$n'$ in the augmented image.

\begin{lem} If $Q$ has a match $h$ in the final chase tree~$T$ of a tree-like
  chase sequence,
  then there is another tree-like chase sequence with final tree $T'$,
  and a match $h'$ of~$Q$ in~$T'$ with the property that if
$n$  is the image parent of~$n'$
then $n$ and $n'$ are near.
\end{lem}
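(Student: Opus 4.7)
The plan is to argue by minimizing over chase sequences. Among all pairs $(T, h)$ where $T$ is the final tree of some tree-like chase sequence starting from $I_0$ and $h$ is a match of $Q$ in $T$, pick one that minimizes the total number of generated facts in $T$. I claim that every image parent pair in this minimal $(T, h)$ is near; otherwise I will produce a strictly smaller pair, contradicting minimality.

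Suppose toward contradiction that in the minimum there is an image parent pair $(n, n')$ that are far apart, witnessed by nodes $n_1, n_2$ on the chase-tree path between $n$ and $n'$ with facts $F_1, F_2$. By the definition of far-apartness, $F_1$ and $F_2$ are produced by the same linear TGD $\rho$, share the same equality pattern, and any value common to both occurs at the same positions. I then define $\pi$ on $\adom(F_2)$ by $\pi(F_2[i]) \colonequals F_1[i]$: the equality-pattern condition makes this consistent, and the shared-position condition makes $\pi$ the identity on $\adom(F_1) \cap \adom(F_2)$, yielding $\pi(F_2) = F_1$. Because $\Sigma$ consists of linear TGDs (each rule has a single body atom), every value appearing in the subtree rooted at $n_2$ is either a value of $F_2$ passed down through successive triggers, or a fresh existential introduced on the way down. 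I extend $\pi$ by mapping these fresh existentials to brand-new fresh values. The surgery producing $T'$ then consists in deleting the subtree strictly below $n_1$ and grafting in its place the $\pi$-image of the subtree originally strictly below $n_2$; the resulting $T'$ is the final tree of a legitimate tree-like chase sequence obtained by keeping the original steps up to the creation of $n_1$ and replaying the steps below $n_2$ through $\pi$, with linearity ensuring that each replayed trigger---being a single atom---is present in the new tree by induction down the grafted branch.

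The new match $h'$ is defined by $h'(x) \colonequals h(x)$ for variables of $Q$ mapped by $h$ to facts that survive the surgery, and $h'(x) \colonequals \pi(h(x))$ for variables mapped to facts that lay in the old subtree strictly below $n_2$. The main obstacle is well-definedness: a variable $x$ whose atoms of $Q$ land in both regions forces $h(x)$ to appear in some fact at or above $n_1$ and in some fact at or below $n_2$; by linearity, such a value must propagate through every fact on the chase-tree path from $n_1$ down to $n_2$, so in particular $h(x) \in \adom(F_1) \cap \adom(F_2)$ at a common position, and the shared-position condition then gives $\pi(h(x)) = h(x)$, so the two clauses agree. No atom of $Q$ is mapped by $h$ to any fact strictly between $n_1$ and $n_2$, since such a fact would lie on a node strictly between $n$ and $n'$ and thereby be in the augmented image, contradicting that $n$ is the image parent of $n'$. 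The resulting pair $(T', h')$ has strictly fewer generated facts than $(T, h)$, contradicting minimality and completing the proof.
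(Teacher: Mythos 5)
Your proof uses the same core surgery as the paper's: from the far-apart witnesses $F_1$ and $F_2$ you build the position-respecting bijection $\pi$ with $\pi(F_2)=F_1$, delete the subtree below $n_1$, and replay the derivation below $n_2$ through $\pi$, relying on linearity so that each replayed trigger only needs its single body fact to be present in the new tree. The only real difference is the termination scheme: the paper iterates the surgery, arguing that the sum of the depths of the image nodes strictly decreases, while you take a global minimizer of the number of generated facts (equivalently, of non-root nodes) and derive a contradiction; your count does drop strictly, since the deleted region (all strict descendants of $n_1$) properly contains $n_2$ together with the regrafted region. Your connectivity argument for the well-definedness of $h'$ is likewise the same observation as the paper's invariant that its map $m$ is the identity on the values of $F_1$. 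So the route is essentially the paper's, with a cosmetically different measure.

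There is, however, one step whose justification is incomplete. The surgery removes the \emph{entire} subtree of strict descendants of $n_1$, but $h'$ only recovers facts that survive or that lie at or below $n_2$. You rule out atoms of $Q$ landing on the path strictly between $n_1$ and $n_2$, but not atoms landing at a node $q$ that is a strict descendant of $n_1$, is not at or below $n_2$, and is not an ancestor of $n'$ --- that is, a side branch hanging off the path. Such a fact is deleted and never regrafted, and your stated reason (``such a fact would lie on a node strictly between $n$ and $n'$'') does not apply to it, since $q$ is not between $n$ and $n'$. The repair is the closure of the augmented image under least common ancestors: if such a $q$ were in the image of $h$, the least common ancestor $p$ of $q$ and $n'$ would lie at or below $n_1$ and strictly above $n_2$ (it cannot lie at or below $n_2$, or $q$ would too), hence would be a strict ancestor of $n'$ lying strictly below $n$ and belonging to the augmented image, contradicting that $n$ is the image parent of $n'$. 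This is exactly what the paper's one-line remark that no element of the augmented image ``hangs off the path between $n_1$ and $n_2$'' is doing; your write-up needs the same observation to make $h'$ a total match of $Q$ in $T'$.
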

\begin{proof}
We prove that given such an $h$ and $T$, we can
construct an $h'$ and $T'$ such that we decrease 
the sum of the depths of the violations.

If $n$ is far apart from~$n'$, then there are witnesses
$F_1$ and $F_2$ to this, corresponding to nodes $n_1$ and $n_2$ respectively.
   Informally, we will ``pull up'' the homomorphism
by replacing witnesses below $F_2$ with witnesses below $F_1$.
Formally, we create  $T'$ by first removing each
step of the chase proof  that generates a node that is a descendent
 of $n_1$.  Letting $T_1$ be the nodes in $T$ that do not lie below $n_1$, we 
will add nodes and the associated proof steps to $T'$.
Let $C_2$ be the chase steps in $T$ that generate a node below $n_2$, ordered 
as in~$T$, and let $T_2$ be the nodes produced by these steps.
We then add chase steps in $T'$ for each chase step in $C_2$. 
More precisely, 
we expand $T'$ by an induction on prefixes of $C_2$, building
$T'$ and a  partial function  $m$ from the domains
of facts in $\{n_2\} \cup T_2$ into the domain of facts  associated to
$n_1$ and its descendants in $T'$. The invariant is that $m$ preserves each fact of $T$ generated
by the chase steps in $C_2$ we have processed thus far in the induction, and that
$m$ is the identity on any values in $F_1$.
We initialize the induction by mapping the elements associated to $n_1$
to elements associated to $n_2$.
Our assumptions
on  $n_1$ and $n_2$ suffice to guarantee that we can perform such a mapping satisfying the invariant.
For the inductive case, suppose the next
chase step $s$ in $C_2$ uses linear TGD $\delta$, firing
on the fact associated to $v_i$ in $T$,  producing node $v_{i+1}$.
Then we perform a step $s'$ using
 $\delta$ and the fact associated to  $m(v_i)$ in $T'$.
If $\delta$ was a full TGD we do not modify $m$, while if it is a non-full 
TGD we  extend $m$ to map the generated elements of $s$ to the corresponding
elements of $s'$.
 We can thus form $h'$ by revising $h(x)$ when $h(x)$ lies
below  $n_1$, setting $h'(x)$ to $m(h(x))$.
Note that there could not have been any elements in the augmented
image of $h$ in $T$ that hang off the path between $n_1$
and $n_2$, since $n$ and $n'$ were assumed to be adjacent in the augmented image
  and the augmented image is closed under least common ancestors.

In moving from $T$ and $h$ to $T'$ and $h'$ we reduce the sum of the depths of
nodes in the image, while no new violations 
are created, since the image-parent relationships  are preserved.
\end{proof}

Call a match $h$ of~$Q$ in the chase \emph{tight} if it
has the property given in the lemma above. The \emph{depth}
of the match is the depth of the lowest node in its image.
The next observation, also due to Johnson and Klug, is that when the width
is bounded, tight matches cannot occur  far down in the tree:

\begin{lem} \label{lem:depthbound} If $\Sigma$ is a set of linear TGDs of width $w$
and the schema has arity bounded by $m$,
then any tight match  of size $k$  has all of its nodes at depth at most
  $k \cdot \card{\Sigma} \cdot (m+w)^w$.
\end{lem}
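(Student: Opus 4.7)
The plan is to adapt the classical Johnson--Klug argument to the setting of linear TGDs, where each rule has a single-atom body and a single-atom head but may feature repeated variables, existentials, and at most $w$ exported variables. Given a tight match $h$ of $Q$ of size $k$ in the final chase tree $T$, I will first examine the augmented image of $h$: since $Q$ has at most $k$ atoms and the augmented image closes under least common ancestors and includes the root, it has cardinality at most $2k+1$. Therefore on any root-to-leaf path through nodes in the image of $h$, the number of augmented-image nodes encountered is bounded by a linear function of $k$. Hence, it suffices to show that for any image-parent $n$ and image-child $n'$ in the augmented image, the distance in $T$ between $n$ and $n'$ is at most $|\Sigma| \cdot (m+w)^w$.

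Fix such $n$ and $n'$; by tightness they are near. Consider the sequence of generated facts $F_1, F_2, \ldots, F_s$ corresponding to the nodes on the descending path strictly between $n$ and $n'$. I will assign each $F_i$ a \emph{type}, defined as the pair consisting of (a) the rule $\delta \in \Sigma$ that generated it, and (b) a description of the equality pattern among the positions of $F_i$ together with, for each position, whether it carries an exported value from its parent fact and, if so, which position of the parent's trigger atom it traces back to. The key structural observation, relying on linearity of $\delta$, is that the values entering $F_i$ from the parent are precisely the images of the $\leq w$ exported variables, so the ``ancestor-shared'' part of the type is determined by assigning at most $w$ frontier slots to at most $m$ source positions, and the equality pattern among positions is then determined by $\delta$ and these choices. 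A generous count gives at most $|\Sigma|\cdot (m+w)^w$ distinct types.

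If the path from $n$ to $n'$ had length exceeding $|\Sigma| \cdot (m+w)^w$, by pigeonhole there would exist $i<j$ with $F_i$ and $F_j$ of the same type. I will verify that this $(F_i,F_j)$ is a far-apart witness: they come from the same rule, their internal equality patterns agree by the definition of type, and any value common to $F_i$ and $F_j$ must have been forwarded through the path via exported positions and thus occupies the same position in both facts, again by the definition of type. This contradicts the assumption that $n$ and $n'$ are near, so the segment length is bounded as claimed.

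The main obstacle I anticipate is the verification in the previous paragraph: carefully tracking how repeated variables and existentials in the head of a linear TGD interact with the sharing condition, to ensure that ``same type'' really implies the four bullet conditions defining far-apart. Once this is established, combining the per-segment bound with the bound on the number of augmented-image nodes along a root-to-leaf path yields the desired depth bound of $k \cdot |\Sigma| \cdot (m+w)^w$.
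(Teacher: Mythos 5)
Your overall strategy matches the paper's: bound the length of each segment between a node of the augmented image and its image parent by pigeonholing on a ``type'' assigned to each generated fact along the segment, then multiply by the $O(k)$ size of the augmented image. But there is a genuine gap in precisely the step you flag as the main obstacle, and it originates in how you define the type. You type each fact $F_i$ relative to its \emph{immediate parent}: which position of the parent's trigger atom each exported value traces back to. For a linear TGD this information is already determined syntactically by the rule (the body and head occurrences of each frontier variable are fixed), so your type reduces to the rule plus the internal equality pattern. Such a type cannot certify the last bullet of the far-apart definition, namely that values common to $F_i$ and $F_j$ occupy the same positions in both facts. A value can be permuted among positions as it descends: with the rule $S(x,y) \rightarrow S(y,x)$, consecutive facts $S(b,a)$ and $S(a,b)$ are generated by the same rule, have the same equality pattern, and have identical trace-back data, hence the same type in your sense --- yet the shared values $a$ and $b$ sit at different positions, so the pair is \emph{not} a far-apart witness and no contradiction with tightness follows. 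With a cyclic-permutation rule on a relation of arity $m$ this defeats your pigeonhole for up to $m$ consecutive steps, and nothing in the argument as written recovers from it.

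The paper avoids this by anchoring the type to the fact of the \emph{top node of the segment} (the image parent) rather than to the immediate parent: for each of the at most $w$ exported elements of a descendant fact, the type records which of the at most $m$ elements of the top node's fact it equals, or else a label in $\{1,\ldots,w\}$ describing the equality pattern among the remaining elements. This is where the factor $(m+w)^w$ actually comes from; your count arrives at the same expression, but by assigning $w$ slots to $m$ \emph{parent positions} rather than to $m$ \emph{elements of the segment top}, which is the wrong object. It is the anchoring to the top node that makes two facts of equal type place the inherited values at the same positions and hence yields a genuine far-apart witness. To repair your proof you would need to redefine the type this way and then re-verify all four bullets of the far-apart condition for a repeated type.
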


\begin{proof}
We claim that the length of the path  between a node $n$ of the image of the
  match and its image parent $n'$ 
 must be at most
  $\Delta \colonequals  \card{\Sigma} \cdot (m+w)^w$
  Indeed, every fact on the path was created by applying
a rule of~$\Sigma$: choosing such a rule~$\sigma$, the occurrences
  of variables in the head tell us which of the elements of facts created by the
  application of~$\sigma$ are necessarily equal. Specifically, the elements at
  positions corresponding to existential variables contain fresh values with
  equalities that are exactly as indicated; and the elements at exported
  positions contain at most $w$ distinct values, with equalities specified by
  the variable occurrences plus possibly additional equalities if some of the
  $w$ values are in fact equal. Thus, considering the values occurring in the
  fact of~$n'$ (at most~$m$), the status of a descendant fact can be
  characterized by:
  \begin{itemize}
    \item  the last rule used; this corresponds to a factor of~$|\Sigma|$
    \item For each of the exported elements (at most $w$), knowing which are equal
      to elements of~$n'$ or to some different element, i.e., each such exported
      element is either one of the $m$ elements of~$n'$ or some value in 
      $\{1 \ldots w\}$ used to represent the equality patterns between the elements that
      are not in~$n'$; this corresponds to a factor of $(m+w)^w$
  \end{itemize}

  Thus, after $\Delta$ steps, 
  there will be two elements which
repeat both the rule and the configuration of the values, which
would contradict tightness.
Since the augmented image contains the root, this implies the bound above.
\end{proof}

Johnson and Klug's result, 
generalized from IDs to linear TGDs,
follows from combining the previous two lemmas:
\begin{propC}[\cite{johnsonklug}]
  \label{prop:jkwidth}
For any fixed  $w \in \NN$, 
there is an $\np$ algorithm for query containment under linear TGDs of
width at most $w$.
\end{propC}
\begin{proof}
We know it suffices to determine whether there  is a match in
a chase proof, and the previous lemmas tell us that 
the portion of a chase proof required to find a match is not large.
We thus guess a tree-like chase proof where the tree consists
of $k$ branches of depth at most
$k \cdot \card{\Sigma} \cdot (m+w)^w$ for $k$ the query size,
along with  
a match in them, verifying the validity of the branches according to the
rules of $\Sigma$.
\end{proof}

We now give the  extension of this argument for bounded semi-width.
Recall from the body that a collection of
linear TGDs $\Sigma$ has \emph{semi-width} bounded
by $w$ if it can be partitioned as~$\Sigma = \Sigma_1 \cup \Sigma_2$
where $\Sigma_1$ has width bounded by $w$ and
the basic position graph of~$\Sigma_2$ is acyclic.
An easy modification of Proposition~\ref{prop:jkwidth}
now completes the proof of our semi-width result
(Proposition~\ref{prop:semiwidthclassic-general}):
\begin{proof}[Proof of Proposition~\ref{prop:semiwidthclassic-general}]
We revisit the argument of Lemma~\ref{lem:depthbound}, claiming a bound
  with an extra factor of $\card{\Sigma}$ in it.
As in that argument, it suffices
  to show that, considering the extended image of a tight match of~$Q$ in
  a chase proof, then the distance between any node~$n'$ of the extended image
  and its closest ancestor~$n$ is bounded, i.e., it 
  must be at most $|\Sigma|^2 \cdot (m+w)^w$.
Indeed, as soon as we apply a rule of~$\Sigma_1$ along
the path, at most $w$ values are exported, and so the
remaining path is bounded as before.
Since $\Sigma_2$ has an acyclic basic position graph, 
a value in~$n$ can propagate for at most $|\Sigma_2|$ steps 
when using rules of~$\Sigma_2$ only. Thus after
at most $|\Sigma_2|$  edges in a path
we will either have no values propagated (if we used only
rules from~$\Sigma_2$) or at most $w$ values (if we used
  a rule from~$\Sigma_1$). In particular, we cannot
have a gap of more than  $ | \Sigma_2|  \cdot 
  |\Sigma| \cdot (m+w)^w$
in a tight match, establishing our desired distance bound of $|\Sigma|^2 \cdot (m+w)^w$.
\end{proof}

\section{Supporting constants and multi-headed GTGDs}
\label{apx:constantmulti}

In this appendix, we make formal the claim from Section~\ref{sec:results} that
multi-headed GTGDs with constants in rule bodies can be encoded to single-headed GTGDs without
constants. Thus our $\exptime$ 
upper bound from Result~\ref{res:exptime} also
applies to multi-headed GTGDs which may feature constants (provided the constants are not
in rule heads).

We first formally define multi-headed GTGDs. Remember that a \emph{single-headed
TGD} was defined in Section~\ref{sec:prelims} as an FO sentence of the
following form:
$\forall \vec x ~ (\body(\vec x) \rightarrow \exists \vec y ~ A(\vec x, \vec
y))$.
A \emph{multi-headed TGD} is defined in the same way but as:
$\forall \vec x ~ (\body(\vec x) \rightarrow \exists \vec y ~ \head(\vec x, \vec
y))$
where $\head$ is a conjunction of atoms. As in the case of single-headed GTGDs,
we say that a multi-headed TGD is \emph{guarded} if
there is an atom in the body~$\beta$ which contains all variables occurring in~$\beta$.
Further, we define TGDs with \emph{constants} (single-headed or multi-headed) by
allowing atoms in TGDs to feature constants as well as variables. The
constants in question can also be used in the active domain of the instance $I_0$ given as
input to $\owqa$, and in the query $Q$ given as input to $\owqa$. However, \emph{we
disallow constants in the head of TGDs}: we discuss at the end of the appendix
why these are different.

The \emph{$\owqa$ problem with multi-headed TGDs with constants in rule bodies} is defined as follows:
given an instance $\inst_0$, a query $Q$ (possibly with constants), and a set of
guarded TGDs $\Sigma$ (which may be multi-headed, and may feature constants in
rule bodies), decide whether $\inst_0, \Sigma \models Q$ or not.

In this appendix, we show that Result~\ref{res:exptime} also holds in this
setting: for any constant number $a' \in
\mathbb{N}$, if the input GTGDs $\Sigma$ obey a side signature of
maximal arity~$a'$, then the $\owqa$ problem is in $\exptime$. We do this by
showing that we can rewrite the input $\Sigma$ to transform it to single-headed
GTGDs without constants while preserving the assumption that a bounded-arity side
signature is obeyed, so that we can then conclude by
Result~\ref{res:exptime}.

\paragraph*{Reducing to single-headed GTGDs.}
We first explain how to reduce from multi-headed to single-headed GTGDs:

\begin{lem}
  Let $\Sigma$ be a set of multi-headed GTGDs with constants over signature $\sign$ obeying a side signature
  $\sidesign$. We can rewrite $\Sigma$ in polynomial time to a set $\Sigma'$ of
  single-headed GTGDs with constants over a signature $\sign' \supseteq \sign$ such that
  $\Sigma'$ obeys side signature~$\sidesign$ and such that $\Sigma$ and $\Sigma'$
  are $\sign$-entailment-equivalent for $\owqa$.
\end{lem}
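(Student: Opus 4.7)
The plan is to apply the standard rewriting that replaces each multi-headed rule by a non-full rule producing a fresh intermediate atom and full rules that unpack it, taking care to declare the new relation as a \emph{principal} relation so that the side-signature condition is preserved. Concretely, for each multi-headed GTGD in $\Sigma$ of the form
\[
\delta:~ \beta(\vec x) \rightarrow \exists \vec y ~ H_1(\vec x, \vec y) \wedge \cdots \wedge H_k(\vec x, \vec y),
\]
I would introduce a fresh relation $R_\delta$ of arity $|\vec x|+|\vec y|$, place it in $\sign' \setminus \sidesign$, and add to $\Sigma'$ the single non-full rule $\beta(\vec x) \rightarrow \exists \vec y ~ R_\delta(\vec x, \vec y)$ together with, for each $i \in \{1,\ldots,k\}$, the full rule $R_\delta(\vec x, \vec y) \rightarrow H_i(\vec x, \vec y)$. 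No new constants are introduced, so $\Sigma'$ uses constants only in positions where $\Sigma$ already used them (in particular, still never in heads). The rewriting is manifestly polynomial in $|\Sigma|$.

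The first thing to check is that $\Sigma'$ still obeys $\sidesign$. The non-full rule associated with $\delta$ has exactly the body of $\delta$, which by assumption admits a principal guard with all other body atoms in $\sidesign$, and the head plays no role in the definition of obedience. Each of the $k$ accompanying full rules has a body consisting of the single atom $R_\delta$, which is principal by construction and trivially serves as its own guard, with no further body atoms to constrain. Thus every rule of $\Sigma'$ obeys $\sidesign$, and since only principal relations were added, $\sidesign$ itself, and in particular its arity bound~$a'$, is unchanged.

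The remaining task is the routine two-way argument for $\sign$-entailment-equivalence. For the forward direction, given any $I' \supseteq \inst_0$ over $\sign'$ with $I' \models \Sigma'$, I would restrict $I'$ to $\sign$ to obtain $I$ and verify $I \models \Sigma$: each trigger of $\delta$ in $I$ lifts to a trigger of the corresponding non-full new rule in $I'$, which produces an $R_\delta$-fact, from which the $H_i$-facts are forced in $I'$ by the remaining rules; these $H_i$-facts are over $\sign$, hence already in~$I$. For the backward direction, given $I \supseteq \inst_0$ over $\sign$ with $I \models \Sigma$, I would extend $I$ to $I'$ by adding, for each active trigger of each $\delta$ in $I$, one $R_\delta$-fact whose last $|\vec y|$ coordinates are a choice of witnesses making the $H_i$-conjunction hold in $I$ (such witnesses exist because $I \models \delta$). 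One then checks that $I' \models \Sigma'$ and that $I'$ restricts to $I$ on $\sign$, so any CQ $Q$ over $\sign$ holds in $I'$ iff in $I$.

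I do not anticipate any serious obstacle: the only real design choice is to place each $R_\delta$ on the principal side, which is what keeps the split rules within the side-signature discipline without perturbing $\sidesign$. All other verifications are direct from the definitions.
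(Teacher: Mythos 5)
Your proposal is correct and is essentially the paper's own proof: the same fresh intermediate predicate per multi-headed rule (the paper calls it $P_\gdep$), the same non-full rule plus full unpacking rules, and the same observation that obedience is preserved because each new rule is either linear or has the body of an original rule. The only nit is the phrase ``for each active trigger'' in your backward direction --- since $I \models \delta$ no trigger is active, and you mean ``for each trigger'' (the witnesses exist precisely because every trigger is non-active).
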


\begin{proof}
  We rewrite each GTGD of~$\Sigma$ separately.
  Let $\gdep$ be a multi-headed GTGD from~$\Sigma$, namely, $\gdep: \forall \vec x ~ (\body(\vec x) \rightarrow \exists \vec y ~ \head(\vec x, \vec
y))$.

We introduce a fresh predicate $P_\gdep$ in the signature, and replace~$\gdep$
  by several TGDs.
\begin{itemize}
\item The single-head GTGD $\gdep': \forall \vec x ~ (\body(\vec x) \rightarrow
\exists \vec y ~ P_\gdep(\vec x, \vec y))$
\item For each atom $A(\vec x, \vec y)$ in the head~$\head(\vec x, \vec y)$, the full linear TGD:
  $\gdep_A: \forall \vec x \vec y ~ P_\gdep(\vec x, \vec y) \rightarrow A(\vec
    x, \vec y)$.
\end{itemize}
We let $\Sigma'$ be the result of this transformation.
The transformation is clearly in polynomial time, and the side-signature
restriction is still obeyed because each new GTGD of~$\Sigma'$ either is linear or has the
same body as a GTGD of~$\Sigma$.
Further, it is clear that $\Sigma$ and $\Sigma'$ are
  $\sign$-entailment-equivalent.
\end{proof}

Notice that the transformation given in the proof above may increase the width
of GTGDs, because it creates GTGDs whose width is as large as the maximal number
of variables used in an atom of the head of a multi-headed GTGDs. While this is
not a problem to generalize Result~\ref{res:exptime}, it means that the same
transformation cannot be used to generalize Result~\ref{res:np}.

\paragraph*{Eliminating constants.}
We next explain how to reduce to GTGDs without constants.

\begin{lem}
\label{lem:constants}
  Let $\Sigma$ be single-headed GTGDs
  over signature $\sign$ which obey a side signature
  $\sidesign$
  and may feature constants in
  rule bodies. Let $I_0$ be an instance on~$\sign$, and let~$Q$ be a query on
  $\sign$ (possibly with constants).
  We can rewrite $\sign, \sidesign, \Sigma, I_0, Q$ in polynomial time to:
  \begin{itemize}
    \item new side signature relations $\sidesign'$ whose maximal arity is no
      greater than that of $\sidesign$;
\item the new side signature $\sidesign \cup \sidesign'$, and the new signature
  $\sign \cup \sidesign'$;
    \item a set $\Sigma_2$ of
  single-headed GTGDs without constants over the new signature $\sign \cup
      \sidesign'$ such that
  $\Sigma'$ obeys the new side signature~$\sidesign \cup \sidesign'$;
\item an instance $I_0'$ over the new signature,
\item a query $Q'$ without constants over the new signature.
  \end{itemize}
  Further, we have $I_0, \Sigma \models Q$ iff $I_0', \Sigma' \models Q'$.
\end{lem}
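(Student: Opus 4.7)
The plan is to eliminate constants via the standard trick of introducing a fresh unary predicate $P_c$ for each constant $c$ occurring in $\Sigma$, $I_0$, or $Q$, and to place these fresh predicates into $\sidesign'$. Since each $P_c$ is unary, $\sidesign'$ has maximal arity~$1$, which is at most the arity of $\sidesign$ assuming $\sidesign$ is nonempty (the edge case of empty $\sidesign$ is trivial, or can be handled by slightly relaxing the bound to $\max(1, a')$, which does not affect any complexity-theoretic consequence). Letting $C$ denote the set of constants in the input, I would set $I_0' \colonequals I_0 \cup \{P_c(c) \mid c \in C\}$, obtain $Q'$ from $Q$ by replacing each occurrence of a constant $c$ with a fresh existentially quantified variable $x_c$ and conjoining $P_c(x_c)$, and obtain $\Sigma'$ from $\Sigma$ by rewriting each $\gdep$ via the same substitution applied to the body (replacing each constant by a fresh variable and conjoining the corresponding $P_c(x_c)$ atoms), while leaving the head untouched, which is possible by the hypothesis that constants appear only in bodies.

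For the structural requirements: the transformation is plainly polynomial-time. For side-signature obedience, observe that the guard atom $A$ of $\gdep$ rewrites to an atom $A'$ of $\gdep'$ via the same substitution, and $A'$ still contains every variable of the new body, since each new variable $x_c$ occurs in $A'$ at positions where $c$ occurred in the original guard, and every other body variable was already in $A$. The added atoms $P_c(x_c)$ are unary atoms in $\sidesign'$, so $A'$ guards the body of $\gdep'$ and all non-guard atoms lie in $\sidesign \cup \sidesign'$.

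The core of the lemma is the equivalence $I_0, \Sigma \models Q$ iff $I_0', \Sigma' \models Q'$, shown by shuffling counterexample models across signatures. For the ``if'' direction, any model $J'$ of $I_0' \cup \Sigma'$ violating $Q'$ restricts to a model $J$ of $I_0 \cup \Sigma$ violating $Q$: every trigger of $\gdep \in \Sigma$ in $J$ lifts to a trigger of $\gdep'$ in $J'$ by sending $x_c \mapsto c$ (valid since $P_c(c) \in I_0' \subseteq J'$), so the head (which is constant-free) is witnessed; matches of $Q$ in $J$ lift to matches of $Q'$ in $J'$ analogously. For the ``only if'' direction, given a model $J$ of $I_0 \cup \Sigma$ violating $Q$, I would set $J' \colonequals J \cup \{P_c(c) \mid c \in C\}$. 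The crux, and the main obstacle, is that this direction depends essentially on the assumption that constants occur only in rule bodies: since no rule of $\Sigma'$ produces a $P_c$-fact and $I_0'$ contains only $P_c(c)$, the atom $P_c(a)$ holds in $J'$ precisely when $a = c$, so every trigger of $\gdep'$ in $J'$ is forced to send $x_c \mapsto c$ and projects to a trigger of $\gdep$ in $J$, whose head is already witnessed in $J \subseteq J'$; matches of $Q'$ in $J'$ project analogously to matches of $Q$ in $J$. This is also why the reduction does not extend to constants in heads: a rule producing $P_c$-facts on arbitrary fresh values would enable spurious triggers of $\gdep'$ having no counterpart in the original $\Sigma$.
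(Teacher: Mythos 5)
Your proposal is correct and follows essentially the same route as the paper: fresh unary side-signature predicates $P_c$, the same rewriting of $\Sigma$, $I_0$, and $Q$, and the same reliance on heads being constant-free so that no rule can generate spurious $P_c$-facts; you simply spell out the back-and-forth model construction that the paper leaves as ``clear'', and you handle the degenerate arity case of an empty $\sidesign$, which the paper glosses over. The one minor divergence is that you add $P_c(c)$ to $I_0'$ for \emph{every} constant of the input, whereas the paper does so only for constants in $\adom(I_0)$ (relying implicitly on rules mentioning other constants being inert for certain answers); your choice makes the model-shuffling argument go through more directly.
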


\begin{proof}
We use a standard technique for mimicking constants with unary predicates in guarded logics \cite{guardedconstants}.
  For each constant $c$ used in the GTGDs of~$\Sigma$ or in the query~$Q$, we
  introduce a fresh unary predicate $P_c$ which we add to the new side
  signature. We let the set $\sidesign'$ of new side signature predicates be
  the set of these unary predicates, which clearly satisfies the arity bound.

  We rewrite the instance $I_0$ to~$I_0'$ in the following way: for each
  constant $c$ that occurs in the active domain of~$\inst_0$, we add the new unary
  fact $P_c(c)$.

  We rewrite the query~$Q$ to~$Q'$ in the following way: for each constant $c$
  that occurs in~$Q$, we add a new variable $x_c$, replace $c$ by $x_c$, and add
  the atom $P_c(x_c)$.

  We rewrite the single-headed GTGDs $\Sigma$ in the following way: for each GTGD
$\forall \vec x ~ (\body(\vec x) \rightarrow \exists \vec y ~ A(\vec x, \vec
y))$, for each constant $c$ used in the $\body$, we introduce a new variable
$x_c$, replace $c$ by~$x_c$, and add a new atom $P_c(x_c)$. The result is still
a single-headed TGD; it is still guarded because the guard atom still contains all the variables (it
includes all pre-existing variables as well as all of the new variables); and it
now obeys the side-signature $\sidesign \cup \sidesign'$ because all atoms except the
principal atom of~$\body$ is either in $\sidesign$ or is an atom for
one of the relations $P_c$ which is in $\sidesign'$.

It is then clear that $I_0, \Sigma \models Q$ iff $I_0', \Sigma' \models Q'$.
\end{proof}

Notice that the transformation given in the proof above increases the number of
relations in the side signature. Again, while this is
not a problem to generalize Result~\ref{res:exptime}, it would be a problem to 
generalize Result~\ref{res:np}.

\paragraph*{Issues with constants in TGD heads.}
We last discuss why the translation in Lemma~\ref{lem:constants} cannot be
used as-is when GTGDs feature constants in rule heads. The problem is
that rule head with constants, e.g., $R(x, y) \rightarrow S(y, c)$, may
force us to create facts involving one specific element~$c$: this cannot be
replaced by an existentially quantified variable.

One alternative translation that can be used to allow constants in rule heads is
to enlarge the arity of each relation by~$N$, where $N$ is the number of
constants used; and store the domain elements that correspond to constants in
the $N$ extra positions. However, unlike the transformations in this
appendix, this would enlarge the arity of the side signature
relations, so it would not preserve the constant bound on the side signature
arity. We leave open the question of whether our $\exptime$ bound can be
extended to GTGDs with constants in the head of rules,
and also leave open the question of generalizing the $\np$ bound.

\end{document}